\renewcommand{\d}{\mathop{}\!\mathrm{d}}
\newcommand{\mnorm}[1]{{\left\vert\kern-0.25ex\left\vert\kern-0.25ex\left\vert #1 
    \right\vert\kern-0.25ex\right\vert\kern-0.25ex\right\vert}}
\newcommand\newday[1]{\leavevmode\xleaders\hbox{*}\hfill\kern0pt\\ \centerline{\Large \textbf{#1}}}
\newtheorem{theorem}{Theorem}
\newtheorem{lemma}[theorem]{Lemma}
\newenvironment{lem}[1]
  {\lemma}
  {\endlemma}
  \newtheorem{corollary}[theorem]{Corollary}
    \newtheorem{proposition}[theorem]{Proposition}
    \newtheorem{remark}{Remark}
\newtheorem{definition}{Definition}
\newcommand{\pushright}[1]{\ifmeasuring@#1\else\omit\hfill$\displaystyle#1$\fi\ignorespaces}
\newcommand{\pushleft}[1]{\ifmeasuring@#1\else\omit$\displaystyle#1$\hfill\fi\ignorespaces}
\title{Directed Graphical Models and Causal Discovery for Zero-Inflated Data}
\author[1]{Shiqing Yu}
\author[2]{Mathias Drton}
\author[3]{Ali Shojaie}
\affil[1]{Department of Statistics, University of Washington, Seattle, Washington, 98195, U.S.A.}
\affil[2]{Department of Mathematics, Technical University of Munich, 85748 Garching bei M\"{u}nchen, Germany}
\affil[3]{Department of Biostatistics, University of Washington, Seattle, Washington, 98195, U.S.A.}
\begin{document}

\maketitle

\begin{abstract}
Modern RNA sequencing technologies provide gene expression measurements from single cells that promise refined insights on regulatory relationships among genes. Directed graphical models are well-suited to explore such (cause-effect) relationships. However, statistical analyses of single cell data are complicated by the fact that the data often show zero-inflated expression patterns. To address this challenge, we propose directed graphical models that are based on Hurdle conditional distributions parametrized in terms of polynomials in parent variables and their $0/1$ indicators of being zero or nonzero. While directed graphs for Gaussian models are only identifiable up to an equivalence class in general, we show that, under a natural and weak assumption, the exact directed acyclic graph of our zero-inflated models can be identified. We propose methods for graph recovery, apply our model to real single-cell RNA-seq data on T helper cells, and show simulated experiments that validate the identifiability and graph estimation methods in practice.
\vspace{.15cm}\\
KEY WORDS: Bayesian network, causal discovery, directed acyclic graph, identifiability, single cell analysis, zero inflation
\end{abstract}

\section{Introduction}
Graphical models specify conditional independence relations among variables in a random vector $\boldsymbol{Y}$ indexed by the nodes $\mathcal{V}$ of a graph $\mathcal{G}=(\mathcal{V},\mathcal{E})$ with edge set $\mathcal{E}$ \citep{handbook:2019}. Models based on undirected graphs may be used to explore conditional independence between any two variables $Y_V$ and $Y_U$ given all others $(Y_W)_{W\neq U,V}$, as represented by the absence of an edge between $V$ and $U$ in $\mathcal{E}$.   Models based on directed acyclic graphs (DAGs), for which $\mathcal{E}$ is comprised of directed edges, capture conditional independence structure that naturally arises from cause-effect relationships between the variables.

In biology and genetics, graphical models have been applied to infer the structure of gene regulatory networks based on measurements of gene expression \citep[Sections 20-21]{handbook:2019}. Traditional technologies produce expression levels aggregated over hundreds or thousands of individual cells, and these bulk measurements are frequently modeled using the assumption of Gaussianity. 
In directed Gaussian graphical models, the exact structure of the underlying DAG cannot be identified from purely observational data, and the target of inference becomes an equivalence class of DAGs.
For instance, one cannot differentiate between $V\to U$ and $U\to V$ when the variables are assumed bivariate normal.
In the Gaussian case, directed graphical models posit linear functional relationships between the variables coupled with additive Gaussian noise.
A more recent line of work emphasizes that directed graphical models that alter this assumption to nonlinear functional relationships and additive noise \citep{pet14}, or linear relations and non~-Gaussian noise \citep{shi06,sam:2020}, or linear relations with homoscedastic Gaussian noise \citep{pet13,che18} are amenable to causal discovery in the sense that different DAGs are no longer equivalent.

More recent technology obtains sequencing measurements of mRNA present in single cells. This new technology, as well as the larger sample sizes it provides, promise to give more information than bulk measurements, but at the same time bring in a unique new challenge.   At the single cell level, genes appear as ``on'' with positive single cell gene expression levels, or as ``off'' with the recorded measurements zero or negligible \citep{mcd19}. 

Figure \ref{plot_scatter} shows pairwise scatter plots of four genes from a T helper single~-cell dataset with 1951 measurements from eight healthy donors, which we analyze in Section \ref{T Helper Cell Data}. It is a superset of the single~-cell T~-follicular helper data considered in \citet{mcd19}, which is similarly plotted in their Figure 1. The lower panels show the pairwise scatterplots along with a fitted linear regression curve, and the diagonal panels show the univariate smoothed kernel density estimates for each gene. As we can see, each gene has a large number of zero values and a linear regression model is not sufficient for modeling the pairwise relationships.

\begin{figure}
\centering
\includegraphics[width=0.6\textwidth]{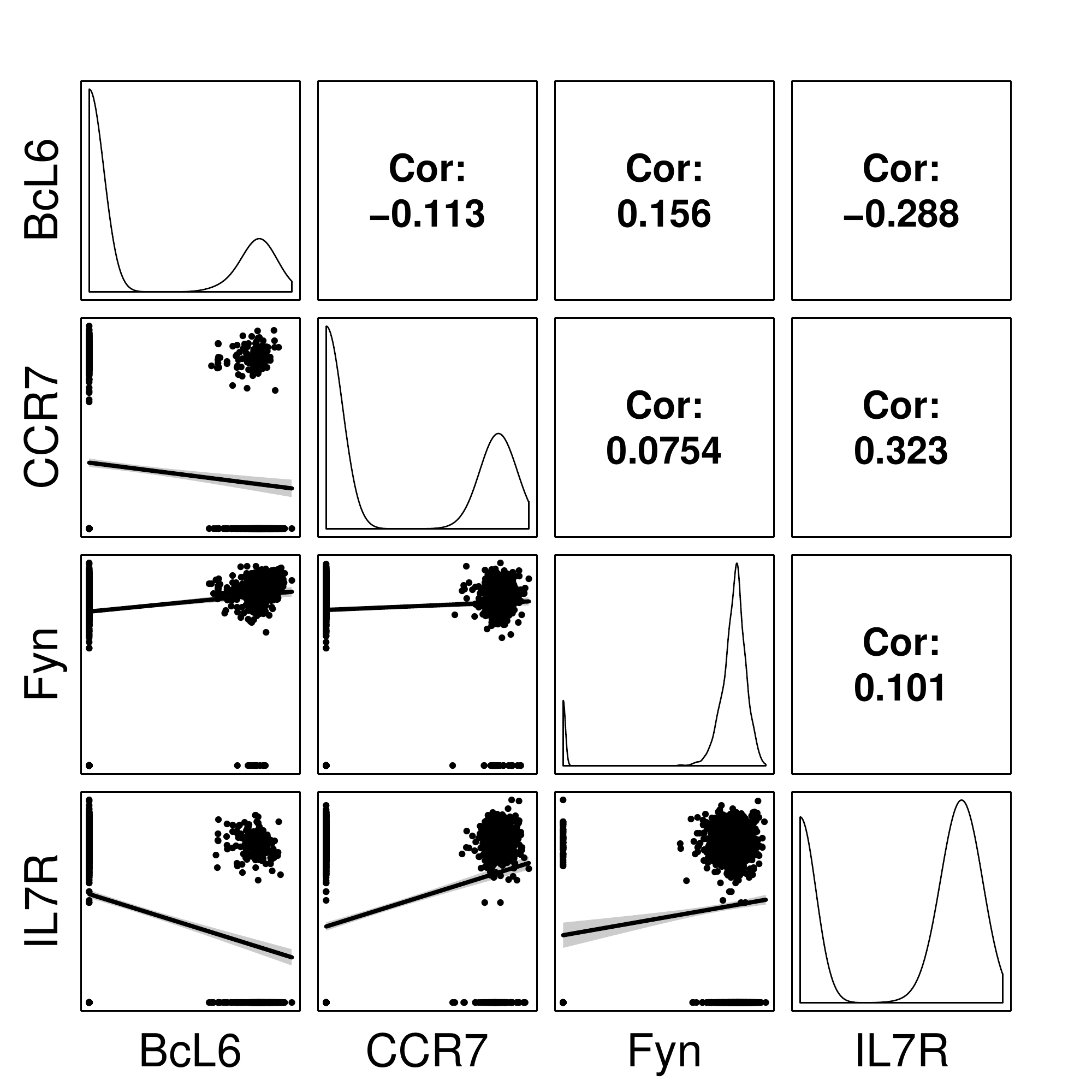}
\caption{Pairwise scatter plots and kernel densities on four genes from the T helper cell data analyzed in Section \ref{T Helper Cell Data}.}\label{plot_scatter}
\vspace{0.2in}
\end{figure}

A novel undirected graphical model that deals with this zero~-inflation was introduced by  \citet{mcd19}.  Their approach considers Hurdle density models, where for a random vector of dimension $m$, the joint probability density function has the form
\begin{equation}\label{def_undirected_joint0}
f(\boldsymbol{y};\mathbf{A},\mathbf{B},\mathbf{K})\propto \exp\left(\mathds{1}_{\boldsymbol{y}}^{\top}\mathbf{A}\mathds{1}_{\boldsymbol{y}}+\mathds{1}_{\boldsymbol{y}}^{\top}\mathbf{B}\boldsymbol{y}-\frac{1}{2}\boldsymbol{y}^{\top}\mathbf{K}\boldsymbol{y}\right),\quad\quad\boldsymbol{y}\in\mathbb{R}^m,
\end{equation}
with $\mathds{1}_{\boldsymbol{y}}$ being the elementwise indicators of nonzero entries in $\boldsymbol{y}\in\mathbb{R}^m$.  The dominating measure for this density is the $m$-fold product of the sum of the Lebesgue measure and a point mass at zero. 
However, since more information can be inferred from single~-cell sequencing data, one would hope that the data can also be analyzed using more informative directed graphical models, and that we can infer which variables (genes) are the causes of change in other variables (expression levels of other genes). 
In this paper, we formulate such directed graphical models for zero~-inflated data, and prove that under a weak assumption one can recover the exact DAG from the joint distribution.  In contrast to the setting of \citet{mcd19}, the distributions in our models are not merely zero-inflated Gaussian as we allow variables that are ``on'' to be  non-linear polynomial functions of other variables and stochastic noise.

In DAG models, the joint distribution can be factorized into the product of conditional distributions of each variable given parent variables.  For simplicity we call these conditional distributions the \emph{node conditionals}.  In our DAG model for zero~-inflated data, we form the node conditionals by taking the conditional distribution of one variable given the others in the joint model from (\ref{def_undirected_joint0}).  We refer to the resulting graphical model as the model in $(\alpha,\beta,k)$~-parametrization, or in \emph{canonical} parametrization.  Here, the $\alpha$ and $\beta$ parameters in the conditional distribution are derived from the matrix parameters $\mathbf{A}$ and $\mathbf{B}$ in (\ref{def_undirected_joint0}).  The $\alpha$ and $\beta$ are polynomials in the parent variables and their $0/1$ indicators being zero/nonzero.  An alternative second type of model is obtained by directly specifying each node conditional as a mixture of a point mass at zero and a Gaussian distribution, with the log odds of being nonzero ($\log(p/(1-p))$) and the mean in the Gaussian part being polynomials in the parent variables and their indicators.  The Gaussian variance is taken constant in the parents.  We call this second formulation the model in $(p,\mu,\sigma^2)$~- or \emph{moment} parametrization, since the parameters directly correspond to the (conditional) moments.  The detailed specification of both model types is developed in Section~\ref{DAGs for Zero-Inflated Data}.

In Section \ref{Identifiability}, we show that under our models, the distributions that can be represented by two different DAGs must be distributions of \emph{two~-Gaussian type} (Definition~\ref{def_2_gauss}).  We then prove that such distributions do not exist for dimension $m=2$ and $m=3$; we also conjecture they do not exist for $m>3$.  Moreover, we are able to prove that under a natural  and practical assumption, we have full identifiability in the sense of being able to identify the exact DAG underlying the model.  This assumption specifies that for each node, $\alpha+\beta^2/(2k)$ or equivalently $\log(p/(1-p))$ has a separate univariate term for each parent (e.g.~$y_1+y_2+y_1y_2+y_1^2$ instead of $y_1+y_1y_2+y_1^2$, which does not have a separate term for $y_2$). 

In Section~\ref{Estimation of Zero-Inflated DAGs}, we introduce different methods for estimation of the DAG. Simulation studies supporting the use of these methods are given in Section~\ref{Numerical Experiments}, and they are then applied to the T-follicular helper cell dataset (Section~\ref{T Helper Cell Data}).  


Finally, we emphasize that throughout the paper,
we use subscripts to refer to entries in vectors and columns in matrices. 
When used as a subscript of a vector, a set of nodes/indices selects the corresponding entries from the vector, e.g.,~$\boldsymbol{y}_{\mathcal{V}}=(y_V)_{V\in\mathcal{V}}$. 


\section{Directed Graphical Models for Zero-Inflated Data}\label{DAGs for Zero-Inflated Data}

In this section we motivate and formally define our models for zero-inflated data based on directed acyclic graphs (DAGs).

\subsection{Hurdle Joint Distributions for Zero-Inflated Continuous Observations}\label{sec_motivation}

\citet{mcd19} proposed 
a \emph{Hurdle joint distribution} with density
\begin{equation}\label{def_undirected_joint}
f(\boldsymbol{y};\mathbf{A},\mathbf{B},\mathbf{K})\propto \exp\left(\mathds{1}_{\boldsymbol{y}}^{\top}\mathbf{A}\mathds{1}_{\boldsymbol{y}}+\mathds{1}_{\boldsymbol{y}}^{\top}\mathbf{B}\boldsymbol{y}-\frac{1}{2}\boldsymbol{y}^{\top}\mathbf{K}\boldsymbol{y}\right),\quad\quad\boldsymbol{y}\in\mathbb{R}^m,
\end{equation}
with respect to $\lambda^m$, where $\lambda$ is the sum of a point mass at $0$ and the Lebesgue measure on $\mathbb{R}$, and $\mathbf{A}=(\alpha_{ij})_{i,j}, \mathbf{B}=(\beta_{ij})_{i,j}, \mathbf{K}=(k_{ij})_{i,j}\in\mathbb{R}^{m\times m}$ are matrices of interaction parameters with $\mathbf{K}$ positive definite.  The indicator vector $\mathds{1}_{\boldsymbol{y}}\equiv(\mathds{1}_{\{y_1\neq 0\}},\cdots,\mathds{1}_{\{y_m\neq 0\}})\in\{0,1\}^m$ captures which components of $\boldsymbol{y}$ are non-zero.

Consider a random vector $\boldsymbol{Y}\in\mathbb{R}^m$ that follows the Hurdle joint distribution. Intuitively, the density in  (\ref{def_undirected_joint}) is obtained by combining an Ising model for the indicator vector $\mathds{1}_{\boldsymbol{Y}}$ and a conditional normal distribution for $\boldsymbol{Y}$ given its nonzero pattern $\mathds{1}_{\boldsymbol{Y}}$.  The Ising model postulates a probability mass function proportional to  $\exp\left(\mathds{1}_{\boldsymbol{y}}^{\top}\mathbf{A}\mathds{1}_{\boldsymbol{y}}\right)$.  The conditional normal distribution has density 
$p\left(\boldsymbol{Y}=\boldsymbol{y}|\mathds{1}_{\boldsymbol{Y}}=\mathds{1}_{\boldsymbol{y}};\mathbf{B},\mathbf{K}\right)\propto\exp\left(\mathds{1}_{\boldsymbol{y}}^{\top}\mathbf{B}\boldsymbol{y}-\frac{1}{2}\boldsymbol{y}^{\top}\mathbf{K}\boldsymbol{y}\right)$
with respect to the Lebesgue measure restricted to the subspace of $\mathbb{R}^m$ compatible with $\mathds{1}_{\boldsymbol{y}}$. 

The exponential specification in \eqref{def_undirected_joint} entails that 
conditional independence between two variables is equivalent to the corresponding entries in all interaction matrices $\mathbf{A}$, $\mathbf{B}$, $\mathbf{K}$ being $0$.  In other words, $\alpha_{ij}=\alpha_{ji}=\beta_{ij}=\beta_{ji}=k_{ij}=k_{ji}=0$ if and only if $Y_i$ and $Y_j$ are conditionally independent given all other variables.
Indeed, it is easy to see that the induced conditional distribution of $Y_i$ given all other variables $\boldsymbol{Y}_{-i}$ in $\boldsymbol{Y}$, has density
\begin{equation}\label{eq_undirected_conditional}
p(Y_i=y_i|\boldsymbol{Y}_{-i}=\boldsymbol{y}_{-i})
=f(y_i;\alpha_{ii}+\boldsymbol{\alpha}_{i,-i}^{\top}\mathds{1}_{\boldsymbol{y}_{-i}}+\boldsymbol{\beta}_{i,-i}^{\top}\boldsymbol{y}_{-i},\beta_{ii}+\boldsymbol{\beta}_{-i,i}^{\top}\mathds{1}_{\boldsymbol{y}_{-i}}-\boldsymbol{k}_{i,-i}^{\top}\boldsymbol{y}_{-i},k_{ii}),
\end{equation}
that is, the distribution is a Hurdle distribution in $m=1$ dimension with parameters $\alpha$, $\beta$, and $k$ being linear functions in $\boldsymbol{Y}_{-i}$ and $\mathds{1}_{\boldsymbol{Y}_{-i}}$; here $f$ is the univariate version of (\ref{def_undirected_joint}). 

\subsection{Hurdle Conditionals}\label{Hurdle Conditionals}
The observation in \eqref{eq_undirected_conditional} above gives rise to the following definition. Recall that $\lambda$ is the sum of a point mass at $0$ and the Lebesgue measure on $\mathbb{R}$.

\begin{definition}[$(\alpha,\beta,k)$-Hurdle conditionals]
Given an $m$-dimensional random vector $\boldsymbol{Z}$ and a scalar random variable $X$, we say that the conditional distribution of $X$ given $\boldsymbol{Z}$ is of \emph{$(\alpha,\beta,k)$-Hurdle type} if it admits conditional densities with respect to $\lambda$ of the form
\begin{equation}\label{def_abk_hurdle_conditional}
p(X=x|\boldsymbol{Z}=\boldsymbol{z})=f_{\alpha, \beta, k}^{(m)}(X|\boldsymbol{Z}) \equiv\frac{\exp\left(\alpha(\boldsymbol{z})\mathds{1}_{x}+\beta(\boldsymbol{z})x-k x^2/2\right)}{\sqrt{2\pi/k}\exp\left(\alpha(\boldsymbol{z})+\beta^2(\boldsymbol{z})/(2k)\right)+1}.
\end{equation}
Here, $\alpha$ and $\beta$ are functions of $\boldsymbol{Z}$ (and its indicator vector). 
\end{definition}

Reparametrizing we give another intuitive formulation of Hurdle conditionals that clearly exhibits their nature of a mixture between a point mass at $0$ and a conditional Gaussian distribution.
\begin{definition}[$(p,\mu,\sigma^2)$-Hurdle conditionals]
Given an $m$-dimensional random vector $\boldsymbol{Z}$ and a scalar random variable $X$, we say that the conditional distribution of $X$ given $\boldsymbol{Z}$ is of  \emph{$(p,\mu,\sigma^2)$-Hurdle type} if it admits conditional densities with respect to $\lambda$ of the form
\begin{multline}\label{def_pms_hurdle_conditional}
p(X=x|\boldsymbol{Z}=\boldsymbol{z})=f_{p, \mu, \sigma^2}^{(m)}(X|\boldsymbol{Z})\equiv(1-p(\boldsymbol{z}))(1-\mathds{1}_{x})\\
+p(\boldsymbol{z})\mathds{1}_x\frac{1}{\sqrt{2\pi\sigma^2}}\exp\left(-\frac{(x-\mu(\boldsymbol{z}))^2}{2\sigma^2}\right).
\end{multline}
Here, $p$ and $\mu$ are functions of $\boldsymbol{Z}$ (and its indicator vector). 
\end{definition}

It is easy to show that the two parametrizations (\ref{def_abk_hurdle_conditional}) and (\ref{def_pms_hurdle_conditional}) are connected through
\begin{align}\label{eq_abk_pms_relationship}
\log\frac{p}{1-p}=\alpha+\frac{\beta^2}{2k}-\frac{1}{2}\log\left(\frac{k}{2\pi}\right),\quad\quad\mu=\frac{\beta}{k},\quad\quad\sigma^2=\frac{1}{k}.
\end{align}
That is, the conditional log odds of being nonzero is linear in $\alpha$ and quadratic in $\beta$, and the conditional Gaussian mean is proportional to $\beta$.

We note that while the $(\alpha,\beta,k)$-parametrization takes canonical parameters $\alpha(\boldsymbol{Z})$, $\beta(\boldsymbol{Z})$ and $k$ using a representation as exponential family, the moment parametrization directly models the conditional mixing probability $p(\boldsymbol{Z})$, and the mean $\mu(\boldsymbol{Z})$ and variance $\sigma^2$ parameters of the conditional Gaussian distribution. We thus refer to (\ref{def_abk_hurdle_conditional}) as the \emph{canonical parametrization}, and (\ref{def_pms_hurdle_conditional}) as the \emph{moment parametrization}.

\subsection{Directed Graphical Models for Zero-Inflation Data}

Consider an $m$-dimensional random vector $\boldsymbol{Y}$ whose components are indexed by the vertices of a DAG $\mathcal{G}=(\mathcal{V},\mathcal{E})$ and whose distribution is dominated by a product measure on $\mathbb{R}^m$.  A graphical model based on $\mathcal{G}$
requires that the density of the joint distribution admits a factorization as
\begin{equation}
    \label{eq:dag-factor}
f(\boldsymbol{y})=\prod_{V\in\mathcal{V}}f_V\left(y_V|\boldsymbol{y}_{\mathrm{pa}(V)}\right),
\end{equation}
where each factor $f_V\left(y_V|\boldsymbol{y}_{\mathrm{pa}(V)}\right)$ is a conditional density for $y_V$ given its parent variables $\boldsymbol{y}_{\mathrm{pa}(V)}$.  The set of parents is defined to be  $\mathrm{pa}(V)\equiv\{U:U\to V\in\mathcal{E}\}$.

In Section \ref{sec_motivation} we observed that, for the Hurdle joint distributions from (\ref{def_undirected_joint}), the conditional distribution of one variable $Y_i$ given the others is an $(\alpha,\beta,k)$-Hurdle with $k$ constant, and $\alpha$ and $\beta$ linear functions of those variables (and their indicators) that are conditionally dependent on $Y_i$; see  \eqref{eq_undirected_conditional}. Motivated by this fact, we specify directed graphical models for zero-inflated data by assuming the conditional densities in the factorization in \eqref{eq:dag-factor} to be $(\alpha,\beta,k)$-  or $(p,\mu,\sigma^2)$-Hurdle conditionals.  We then assume the parameters in these conditionals to be \emph{Hurdle polynomials} in its parents, as defined now.

\begin{definition}[Hurdle polynomials]
Let $\boldsymbol{Y}=(Y_V)_{V\in\mathcal{V}}$ be an $m$-dimensional random vector indexed by a set $\mathcal{V}$, and suppose $\mathcal{S}\subseteq\mathcal{V}$. If $\mathcal{S}\neq\varnothing$, define the space of \emph{Hurdle polynomials} in $\boldsymbol{y}_\mathcal{S}$ as
\begin{multline}\label{def_polynomial}
\mathcal{H}(\boldsymbol{Y};\mathcal{S})\equiv\left\{c_0+\sum_{j=1}^{T}c_{j}\prod_{U\in \mathcal{U}_{j}}Y_U^{d_{j,U}}\prod_{V\in \mathcal{V}_{j}} \mathds{1}_{Y_V},\quad c_0\in\mathbb{R},\,T\in\mathbb{N},\right.\\
\left.\phantom{\sum_{j=1}^{T}}c_j\neq 0,\,\mathcal{U}_j\subseteq \mathcal{S},\,\mathcal{V}_j\subseteq \mathcal{S}\backslash \mathcal{U}_j,\,d_{j,U}\in\mathbb{N}\;\quad\forall U\in \mathcal{U}_j\quad\forall j=1,\,\dots,\,T\right\},
\end{multline}
where $\mathbb{N}=\{1,2,\dots\}$.
This is the set of polynomials in values and indicators of nodes in $\mathcal{S}$. If $\mathcal{S}=\varnothing$, define $\mathcal{H}(\boldsymbol{Y};\mathcal{S})\equiv\mathbb{R}$.  The \emph{degree} of a hurdle polynomial as specified in \eqref{def_polynomial} is $\max\limits_{j=1,\dots,T}\sum\limits_{U\in \mathcal{U}_j}d_{j,U}+|\mathcal{V}_j|$. Here $|\cdot|$ denotes the set cardinality. 
\end{definition}
We are now ready to formally define our models.
\begin{definition}[DAG models for zero-inflated data]\label{def_zero_inflated_dags}
Let $\mathcal{G}=(\mathcal{V},\mathcal{E})$ be a DAG with $|\mathcal{V}|=m$ nodes.
A zero-inflated conditional Gaussian DAG model associated with  $\mathcal{G}$ is a set of joint distributions on $\mathbb{R}^m$ that admit a density (with respect to $\lambda^m$) that factors as in \eqref{eq:dag-factor} with 
each conditional density $f_V\left(y_V|\boldsymbol{y}_{\mathrm{pa}(V)}\right)$ being a Hurdle conditional 
\begin{enumerate}[(1)]
\item in the $(\alpha,\beta,k)$-parametrization with parameters $\alpha_V$, $\beta_V$ and $k_V$, where $k_V$ is constant, $\alpha_V$ and $\beta_V$ are Hurdle polynomials in $\boldsymbol{y}_{\mathrm{pa}(V)}$; or
\item in the $(p,\mu,\sigma^2)$-parametrization with parameters $p_V$, $\mu_V$ and $\sigma_V^2$, where $\sigma_V^2$ is constant, $\log(p_V/(1-p_V))$ and $\mu_V$ are Hurdle polynomials in $\boldsymbol{y}_{\mathrm{pa}(V)}$.
\end{enumerate}
\end{definition}
It is apparent from the relationship (\ref{eq_abk_pms_relationship}) that if we allow the relevant parameters to be Hurdle polynomials of \emph{any} degree, the two parametrizations are equivalent, meaning that given an underlying DAG, they share the same space of all possible joint distributions. However for computational convenience it is useful to bound the degree. In later applications, we will only consider degrees up to three.


\section{Identifiability}\label{Identifiability}

\subsection{Strong Identifiability}

As we show next, the directed graphical models from Definition~\ref{def_zero_inflated_dags} are amenable to causal discovery in the sense that the DAG underlying the model is uniquely identifiable from a given joint distribution.  More precisely, we prove identifiability under an explicit mild assumption on the Hurdle conditionals determining the considered joint distribution.


Let $\pi(\boldsymbol{y}_\mathcal{S})\in \mathcal{H}(\boldsymbol{Y};\mathcal{S})$ be a Hurdle polynomial for a subset $\mathcal{S}\subseteq\mathcal{V}$. For $U\in \mathcal{S}$, let $\pi_U(y_U)\equiv \pi(y_U,\boldsymbol{0})$ be the restriction of $\pi(\boldsymbol{y}_\mathcal{S})$ obtained by setting all entries other than $y_U$ to zero.
Then $\pi_U(y_U)\in\mathcal{H}(\boldsymbol{Y};\{U\})$ is a 
univariate Hurdle polynomial.

\begin{definition}[Strong Hurdle polynomials]
Let $\pi(\boldsymbol{y}_\mathcal{S})\in \mathcal{H}(\boldsymbol{Y};\mathcal{S})$. We say
$\pi(\boldsymbol{y}_\mathcal{S})$ is a \emph{strong Hurdle polynomial} if all of its restrictions $\pi_U(y_U)$ take at least three different values. 
In other words, for each $U\in\mathcal{S}$, the Hurdle polynomial $\pi(\boldsymbol{y}_\mathcal{S})$ contains at least one 
term that depends only on $(y_U,\mathds{1}_{Y_U})$ and is of the form $c_j y_U^d\mathds{1}_{Y_U}$ or $c_j y_U^d$ with $c_j\not=0$ and $d\ge 1$.
\end{definition}

\begin{theorem}[DAG identifiability with strong Hurdle polynomials]
\label{thm_full_identifiability}

Let $f(\boldsymbol{y})$ be a joint density with respect to $\lambda^m$ that factors according to a DAG $\mathcal{G}=(\mathcal{V},\mathcal{E})$, as in \eqref{eq:dag-factor}.   Suppose for each $V\in\mathcal{V}$, the conditional $f_V\left(y_V|\boldsymbol{y}_{\mathrm{pa}(V)}\right)$ is of Hurdle type with parameters $(\alpha_V,\beta_V,k_V)$ or $(p_V,\mu_V,\sigma^2_V)$.  If for each $V$, $\alpha_V+\beta_V^2/(2k_V)$, or equivalently $\log (p_V/(1-p_V))$, is a strong Hurdle polynomial, then $f(\boldsymbol{y})$ does not factor with respect to any other DAG $\mathcal{G}'\not=\mathcal{G}$.
\end{theorem}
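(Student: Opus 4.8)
The plan is to work on the open set $\mathcal{O}=\{\boldsymbol{y}\in\mathbb{R}^m:\ y_V\neq0\ \text{for all}\ V\}$, where every indicator equals one, so that on $\mathcal{O}$ each Hurdle polynomial $q_V:=\alpha_V+\beta_V^2/(2k_V)=\log(p_V/(1-p_V))$ and each conditional mean $\mu_V$ reduces to an ordinary polynomial in the parent values. Writing $s(t)=\log(1+e^{t})$ for the softplus function (so that $\log p_V=q_V-s(q_V)$), the factorization along $\mathcal{G}$ gives on $\mathcal{O}$, up to an additive constant,
\[
\log f(\boldsymbol{y})=\sum_{V\in\mathcal{V}}\Big[q_V(\boldsymbol{y}_{\mathrm{pa}(V)})-s\big(q_V(\boldsymbol{y}_{\mathrm{pa}(V)})\big)-\frac{1}{2\sigma_V^2}\big(y_V-\mu_V(\boldsymbol{y}_{\mathrm{pa}(V)})\big)^2\Big].
\]
The softplus terms are the \emph{only} non-polynomial pieces. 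The whole argument is then an induction on $m$ that peels off one sink node at a time; its engine is a second-derivative test that reads the sink nodes off from $f$ alone.

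First I would analyze the intrinsic quantity $\partial^2\log f/\partial y_V^2$ on $\mathcal{O}$. Since $q_V$ and $\mu_V$ do not involve $y_V$, only $V$'s own Gaussian term and the terms of its children $W$ (those with $V\in\mathrm{pa}(W)$) survive, and with $s'(t)=1/(1+e^{-t})$, $s''(t)=s'(t)(1-s'(t))$,
\[
\frac{\partial^2\log f}{\partial y_V^2}=-\frac{1}{\sigma_V^2}+\sum_{W:\,V\in\mathrm{pa}(W)}\Big[\rho_W(\boldsymbol{y})-s''(q_W)\,(\partial_{y_V}q_W)^2-s'(q_W)\,\partial^2_{y_V}q_W\Big],
\]
where $\rho_W$ is a polynomial. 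I claim this is constant on $\mathcal{O}$ if and only if $V$ is a sink of $\mathcal{G}$. The ``if'' direction is immediate. For ``only if'' the \emph{strong} assumption is exactly what is needed: whenever $V\in\mathrm{pa}(W)$ the polynomial $q_W$ contains a term $c\,y_V^{d}$ with $c\neq0$, $d\ge1$, so $\partial_{y_V}q_W\not\equiv0$ and the child contributes a genuinely transcendental term. Because this test is computed from $f$, applying it through a competing factorization identifies that DAG's sinks as well, whence $\mathcal{G}$ and the competitor share the same sink set.

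Next I would fix a common sink $V_0$ and peel it off. A sink has no children, so integrating out $y_{V_0}$ leaves the other conditionals untouched, and the marginal of $\boldsymbol{Y}_{-V_0}$ is again a zero-inflated Hurdle DAG model meeting the strong assumption (restricting a strong Hurdle polynomial to a smaller parent set preserves its per-parent terms, since $V_0$ is no one's parent), factoring through both $\mathcal{G}-V_0$ and $\mathcal{G}'-V_0$. The induction hypothesis then forces $\mathcal{G}-V_0=\mathcal{G}'-V_0$. To match the edges into $V_0$, I use that the node conditional $f(y_{V_0}\mid\boldsymbol{y}_{-V_0})=f(\boldsymbol{y})/\int f\,\mathrm{d}\lambda(y_{V_0})$ is determined by $f$: expressed through $\mathcal{G}$ it depends on $y_U$ exactly when $U\in\mathrm{pa}_{\mathcal{G}}(V_0)$, and through $\mathcal{G}'$ exactly when $U\in\mathrm{pa}_{\mathcal{G}'}(V_0)$; equality of dependence sets gives $\mathrm{pa}_{\mathcal{G}}(V_0)=\mathrm{pa}_{\mathcal{G}'}(V_0)$. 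Combined with the induction, this yields $\mathcal{G}=\mathcal{G}'$.

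The main obstacle is the non-cancellation claim underlying the sink test: I must exclude that several children, together with the sign symmetry $q\mapsto-q$ (that is, $p\mapsto1-p$), conspire to make the displayed second derivative constant. I would settle this analytically by restricting to a generic complex line $t\mapsto\boldsymbol{y}_0+t\boldsymbol{v}$, along which each $q_W$ is a non-constant univariate polynomial and $s''(q_W)\,(\partial_{y_V}q_W)^2$ acquires genuine double poles at the points where $q_W\in i\pi(2\mathbb{Z}+1)$. These poles are located by $q_W$ and are of strictly higher order than the simple-pole term $s'(q_W)\,\partial^2_{y_V}q_W$, so they cannot be cancelled by the polynomial pieces (which have no poles), by a different child (whose poles lie elsewhere), or by a $q\mapsto-q$ partner—because $s''$ is even, the symmetric partner \emph{reinforces} rather than cancels the double pole. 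A constant has no poles, so any child makes the second derivative non-constant. This is precisely where strong Hurdle polynomials are indispensable: without a continuous, hence transcendental, footprint for every parent an edge could hide in an indicator-only or purely linear-mean part of a conditional and escape detection, which is exactly the two-Gaussian degeneracy the assumption rules out. (The one point requiring care is that the competing factorization be controlled in the same class; restricting the comparison to strong Hurdle DAG models, or first establishing faithfulness from the strong assumption, closes this gap.)
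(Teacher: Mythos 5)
Your proposal has a genuine gap, and it sits at the engine of the whole argument: the sink test on $\mathcal{O}$. The strong Hurdle assumption constrains the restrictions $\pi_U(y_U)=\pi(y_U,\boldsymbol{0})$, i.e., the behavior of $q_W$ on the axes where \emph{all other coordinates are zero}; it says nothing about the behavior of $q_W$ on the all-nonzero stratum $\mathcal{O}$, where every indicator equals one. Because Hurdle polynomials admit mixed terms such as $y_V^d\mathds{1}_{y_{V'}}$, a strong Hurdle polynomial can lose its dependence on a parent entirely after restriction to $\mathcal{O}$. Concretely, take $\mathcal{G}$ to be $1\to 3\leftarrow 2$ with $q_3(y_1,y_2)=y_1-y_1\mathds{1}_{y_2}+y_2$ and $\mu_3(y_1,y_2)=y_2$ (recall the theorem imposes strongness only on $q_V$, never on $\mu_V$). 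Then $q_3$ is strong---its axis restrictions are $y_1$ and $y_2$---yet on $\mathcal{O}$ one has $q_3\equiv y_2$ and $\mu_3\equiv y_2$, so the conditional of node $3$ restricted to $\mathcal{O}$ carries no trace of $y_1$ at all. Consequently $\partial^2\log f/\partial y_1^2=-1/\sigma_1^2$ is constant on $\mathcal{O}$ even though node $1$ is not a sink: the ``only if'' direction of your test fails under exactly the hypotheses of the theorem. Your complex-analytic non-cancellation argument never gets to run, because it presupposes $\partial_{y_V}q_W\not\equiv 0$ on $\mathcal{O}$; the inference ``$q_W$ contains a term $c\,y_V^d$, hence $\partial_{y_V}q_W\not\equiv0$'' ignores cancellation against indicator-bearing terms once all indicators are set to one. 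With the sink test unsound, the agreement of sink sets and the ensuing induction both collapse.

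The repair is essentially to work on the opposite stratum, which is what the paper does: it reduces, via Proposition \ref{prop_peters} and marginalization of childless nodes with identical parent sets, to two graphs differing by a single reversed edge $V_1\to V_2$ versus $V_2\to V_1$, and then evaluates both factorizations at $y_2\neq 0$ with \emph{all other} coordinates equal to zero. On that stratum the strong assumption has bite---$\alpha'_U(y_2,\boldsymbol{0})+\beta'_U(y_2,\boldsymbol{0})^2/(2k'_U)$ is precisely an axis restriction, hence nonconstant---and the $\mathcal{G}$-factorization yields a single exponential of a polynomial in $y_2$, while the $\mathcal{G}'$-factorization yields that times a factor $\sqrt{2\pi/k'_U}\exp\{\cdots\}+1$ which expands into a sum of at least two distinct exponentials of polynomials; Lemma \ref{lem_match}, a uniqueness-of-representation lemma for sums of exponentials of polynomials, then gives the contradiction (it plays the rigidity role your pole argument was meant to play, but on the stratum where strongness actually applies). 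If you wish to keep your sink-peeling architecture, you would have to run the derivative test along these axis strata rather than on $\mathcal{O}$; as written, the proposal is not correct.
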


In the proof  in the \ref{appendix_proofs} we show that the given restriction on the parameters of the Hurdle conditionals is actually stronger than what we need for identifiability. However, the assumption of \emph{strong} Hurdle polynomials is very natural in that it specifies a weak form of hierarchy among interactions by requiring that the conditional distributions are parametrized to include at least one univariate power term in every parent variable 
and not just indicators or interaction terms with other parents.

\subsection{Weak Identifiability}

Without assuming the Hurdle polynomials for the conditional distributions to be \emph{strong}, we can still offer a weaker identifiability result that shows that the distributions in the intersection between the models obtained from two Markov equivalent DAGs with Hurdle polynomial parameters 
always have to be of what we call \emph{two-Gaussian type}.
In our definition of this concept, we write $\phi(\,\cdot\,;\mu,\nu)$ for the univariate normal density function for mean $\mu$ and inverse variance $\nu$.

\begin{definition}\label{def_2_gauss}
Let $\boldsymbol{Y}=(Y_V)_{V\in\mathcal{V}}$ be a random vector, and let $W,\,U\in\mathcal{V}$ be the indices for two of its components.  Further, let $\mathcal{P}\subseteq\mathcal{V}\backslash\{W,U\}$ be a set of additional indices.  Then the joint distribution of $\boldsymbol{Y}$ is of \emph{two-Gaussian type w.r.t.~$(W,U,\mathcal{P})$} if the following holds for both $V=W$ and $V=U$:  There exists a constant $\nu^V_{1}$, polynomials $\mu^V_{1}(\boldsymbol{y}_\mathcal{P})$, $\mu^V_{2}(\boldsymbol{y}_\mathcal{P})$, $\nu^V_{2}(\boldsymbol{y}_\mathcal{P})$, and functions $c^V_{1}(\boldsymbol{y}_\mathcal{P})$ and $c^V_{2}(\boldsymbol{y}_\mathcal{P})$ such that for almost every $\boldsymbol{y}_\mathcal{P}\in\mathbb{R}^{|\mathcal{P}|}$, $c_{1}^V(\boldsymbol{y}_{\mathcal{P}})> 0$, $c_{2}^V(\boldsymbol{y}_{\mathcal{P}})> 0$, either $\mu^V_{1}(\boldsymbol{y}_{\mathcal{P}})\neq\mu^V_{2}(\boldsymbol{y}_{\mathcal{P}})$ or $\nu^V_{1}\neq \nu^V_{2}(\boldsymbol{y}_{\mathcal{P}})$, and the conditional density 
\begin{multline*}
\mathbb{P}\left(Y_{V}=y\left|Y_{V}\neq 0,\boldsymbol{Y}_\mathcal{P}=\boldsymbol{y}_\mathcal{P}\right.\right)=c^V_{1}(\boldsymbol{y}_\mathcal{P})\phi\left(y;\mu^V_{1}(\boldsymbol{y}_\mathcal{P}),\nu^V_{1}\right)
+c^V_{2}(\boldsymbol{y}_\mathcal{P})\phi\left(y;\mu^V_{2}(\boldsymbol{y}_\mathcal{P}),\nu^V_{2}(\boldsymbol{y}_\mathcal{P})\right),
\end{multline*}
is a mixture of exactly two distinct Gaussian distributions with means polynomial in $\boldsymbol{y}_\mathcal{P}$, one with an absolute constant inverse variance parameter and the other polynomial in $\boldsymbol{y}_\mathcal{P}$.

If $\mathcal{P}=\varnothing$, then 
two-Gaussian type w.r.t.~$(W,U,\varnothing)$ requires that
both $\mathbb{P}(Y_{W}|Y_{W}\neq 0)$ and $\mathbb{P}(Y_{U}|Y_{U}\neq 0)$ are mixtures of exactly two distinct univariate Gaussian distributions with constant parameters, respectively.
\end{definition}

We next recall the following observation that appears as Proposition 29(ii) in \citet{pet14}; see Sections 1.8 and 15.3.2 of \citet{handbook:2019} for definitions of the Markov property and faithfulness. 

\begin{proposition}
\label{prop_peters}
Suppose the distribution of $\boldsymbol{Y}$ is Markov and faithful with respect to two distinct Markov equivalent graphs $\mathcal{G}$ and $\mathcal{G}'$. Then, there must exist nodes $W$ and $U$ such that $W\to U$ in $\mathcal{G}$ and $U\to W$ in $\mathcal{G}'$, while $\mathcal{P}\equiv\mathrm{pa}_{\mathcal{G}}(U)\backslash\{W\}=\mathrm{pa}_{\mathcal{G}'}(W)\backslash\{U\}$.
\end{proposition}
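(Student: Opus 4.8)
The statement is purely graph-theoretic once we record that $\mathcal{G}$ and $\mathcal{G}'$ are Markov equivalent, which is part of the hypothesis (and is exactly what a common Markov and faithful distribution forces). The plan is to work entirely with the two DAGs. First I would invoke the classical characterization of Markov equivalence: two DAGs are Markov equivalent if and only if they share the same skeleton and the same set of v-structures (immoralities). Consequently $\mathcal{G}$ and $\mathcal{G}'$ have identical skeletons, so every discrepancy between them is a reversed edge, and since $\mathcal{G}\neq\mathcal{G}'$ at least one reversed edge exists. The goal is then to single out one reversed edge $W\to U$ (in $\mathcal{G}$, hence $U\to W$ in $\mathcal{G}'$) whose remaining parents coincide.

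Next I would establish two adjacency lemmas that confine the candidate parent sets $A\equiv\mathrm{pa}_{\mathcal{G}}(U)\setminus\{W\}$ and $B\equiv\mathrm{pa}_{\mathcal{G}'}(W)\setminus\{U\}$ to the common neighbours of $W$ and $U$. If $C\in A$, then $C\to U$ and $W\to U$ in $\mathcal{G}$; were $C$ and $W$ non-adjacent, this would be a v-structure $C\to U\leftarrow W$ in $\mathcal{G}$, which would have to appear in $\mathcal{G}'$ as well, impossible since $U\to W$ in $\mathcal{G}'$. Hence every element of $A$ is adjacent to $W$, and the symmetric argument carried out in $\mathcal{G}'$ shows every element of $B$ is adjacent to $U$.

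To locate the edge I would fix topological orderings $\prec$ of $\mathcal{G}$ and $\prec'$ of $\mathcal{G}'$ and let $U$ be the $\prec$-largest node with $\mathrm{pa}_{\mathcal{G}}(U)\neq\mathrm{pa}_{\mathcal{G}'}(U)$. A short argument using acyclicity and the maximality of $U$ shows the set of lost parents $\mathrm{pa}_{\mathcal{G}}(U)\setminus\mathrm{pa}_{\mathcal{G}'}(U)$ is nonempty: any parent gained in $\mathcal{G}'$ would be joined to $U$ by an edge oriented away from $U$ in $\mathcal{G}$, hence sit after $U$ in $\prec$ while itself having mismatched parent sets, contradicting maximality. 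I would then take $W$ to be the $\prec'$-largest lost parent, so that $W\to U$ in $\mathcal{G}$ and $U\to W$ in $\mathcal{G}'$, as required. (The choice of the \emph{largest} such $W$ is essential: smaller reversed edges into $U$ need not satisfy the parent-set equality.)

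The crux, and the step I expect to be the main obstacle, is proving $A=B$ for this particular choice. For a common neighbour $C$ of $W$ and $U$ the triangle on $C,U,W$ can a priori be oriented in several ways across $\mathcal{G}$ and $\mathcal{G}'$, and I would show that every orientation violating ``$C\in A\Leftrightarrow C\in B$'' is excluded by combining v-structure preservation with the two maximality properties: the $\prec$-maximality of $U$ forces all nodes above $U$ to carry matching parent sets, while the $\prec'$-maximality of $W$ among lost parents rules out the interfering reversed edges that could otherwise flip the orientation of $C-W$ or $C-U$ in one graph but not the other. Enumerating the triangle configurations and discharging each via the extremal choice is where the real work lies; the adjacency lemmas of the second paragraph guarantee there is nothing to check outside these common-neighbour triangles. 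This reproduces the graph-theoretic content of Proposition~29(ii) of \citet{pet14}.
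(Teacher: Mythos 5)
You should be aware that the paper itself contains no proof of this proposition: it is recalled verbatim as Proposition 29(ii) of \citet{pet14}, so there is no in-paper argument to match. Your reduction to a purely graph-theoretic claim about two distinct Markov equivalent DAGs is legitimate (Markov equivalence is part of the hypothesis), and your route --- the Verma--Pearl skeleton/v-structure characterization, the two adjacency lemmas, and extremal choices in two topological orders --- is genuinely different from the argument in \citet{pet14}, which obtains the parent-set equality from the distributional assumptions (faithfulness and d-separation reasoning, building on covered-edge-reversal ideas) rather than by direct combinatorics on the two graphs.

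The gap is that the decisive step, the equality $A=B$ with $A\equiv\mathrm{pa}_{\mathcal{G}}(U)\backslash\{W\}$ and $B\equiv\mathrm{pa}_{\mathcal{G}'}(W)\backslash\{U\}$, is announced but never executed; everything you actually prove (equal skeletons, the adjacency lemmas, nonemptiness of the set of lost parents $\mathrm{pa}_{\mathcal{G}}(U)\backslash\mathrm{pa}_{\mathcal{G}'}(U)$) is standard, and the parent-set equality is the entire content of the proposition. Fortunately, your extremal choices do make the step work, and no lengthy enumeration of triangle configurations is required. For $A\subseteq B$: take $C\in A$, which is adjacent to $W$ by your first lemma; if $U\to C$ in $\mathcal{G}'$, then $C$ is itself a lost parent of $U$, hence $\prec'$-precedes $W$ by maximality of $W$, forcing the edge between $C$ and $W$ to be oriented $C\to W$ in $\mathcal{G}'$; if instead $C\to U$ in $\mathcal{G}'$, then $W\to C$ in $\mathcal{G}'$ would create the directed cycle $U\to W\to C\to U$, so again $C\to W$ in $\mathcal{G}'$; either way $C\in B$. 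For $B\subseteq A$: take $C\in B$, which is adjacent to $U$ by your second lemma, and suppose for contradiction $U\to C$ in $\mathcal{G}$; then $C$ lies $\prec$-after $U$, so $\mathrm{pa}_{\mathcal{G}}(C)=\mathrm{pa}_{\mathcal{G}'}(C)$ by maximality of $U$; since $C\to W$ in $\mathcal{G}'$ gives $W\notin\mathrm{pa}_{\mathcal{G}'}(C)=\mathrm{pa}_{\mathcal{G}}(C)$, the edge between $C$ and $W$ is oriented $C\to W$ in $\mathcal{G}$, producing the directed cycle $W\to U\to C\to W$ in $\mathcal{G}$, a contradiction; hence $C\to U$ in $\mathcal{G}$, i.e., $C\in A$. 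With these two short arguments inserted, your proof is complete and self-contained; as submitted, it is a strategy whose central claim remains unproven.
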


\begin{remark}
Proposition \ref{prop_peters} is at the heart of many proofs of DAG identifiability, which combine it with suitable probabilistic conditioning to reduce the comparison of two DAG models to bivariate problems involving the two graphs $W\to U$ and $W\leftarrow U$.  However, in our setting, a key new challenge arises because the form of the Hurdle conditionals precludes us from applying conditioning to form sets of bivariate distributions that are of the considered Hurdle type. Indeed, conditioning on descendants of the considered variables (i.e., other variables that in the graph can be reached along directed paths) generally gives conditional distributions that are no longer of the Hurdle type used in the definition of our model class.
\end{remark}


We claim that the intersection of sets of 
joint distributions represented by two distinct Markov equivalent $\mathcal{G}$ and $\mathcal{G}'$ must be a subset of $2$-Gaussian type distributions with respect to a triplet $(W,U,\mathcal{P})$ obtained from Proposition \ref{prop_peters}.

\begin{theorem}[General Identifiability]\label{id_full}
Let $\boldsymbol{Y}$, $\mathcal{G}$, $\mathcal{G}'$, $W$, $U$, $\mathcal{P}$ be as in Proposition \ref{prop_peters}.  Let $\boldsymbol{Y}$ have a $\lambda^m$-density that factors w.r.t.~both graphs $\mathcal{G}$ and $\mathcal{G}'$.  For each $\mathcal{H}=\mathcal{G},\mathcal{G}'$, let the node conditionals in the factorization be Hurdle conditionals with the parameters $(\alpha_V^{\mathcal{H}})_{V\in \mathcal{V}}$ and $(\beta_V^{\mathcal{H}})_{V\in \mathcal{V}}$ from (\ref{def_abk_hurdle_conditional}), or equivalently $(p_V^{\mathcal{H}})_{V\in \mathcal{V}} $ and $(\mu_V^{\mathcal{H}})_{V\in \mathcal{V}}$ from (\ref{def_pms_hurdle_conditional}), that are Hurdle polynomials of the form (\ref{def_polynomial}), 
where for $(V,\,T,\,\mathcal{H})=(U,\,W,\,\mathcal{G})$ and $(V,\,T,\,\mathcal{H})=(W,\,U,\,\mathcal{G}')$ it holds that
\begin{enumerate}[(i)]
\item $\beta_V^{\mathcal{H}}(y_T,\boldsymbol{y}_{\mathcal{P}})$ (or $\mu_V^{\mathcal{H}}(y_T,\boldsymbol{y}_{\mathcal{P}})$) depends on at least one of $\mathds{1}_{y_T}$ and $y_T$, or 
\item $\alpha_V^{\mathcal{H}}(y_T,\boldsymbol{y}_{\mathcal{P}})$ (or $p_V^{\mathcal{H}}(y_T,\boldsymbol{y}_{\mathcal{P}})$) depends on the value of $y_T$ (and maybe additionally on $\mathds{1}_{y_T}$).
\end{enumerate} 
Then the distribution of  $\boldsymbol{Y}$ must be of two-Gaussian type w.r.t.~$(W,U,\mathcal{P})$. In this case we also say the distribution is of two-Gaussian type w.r.t.~$\mathcal{G}$ and $\mathcal{G}'$.
\end{theorem}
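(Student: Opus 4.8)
The plan is to condition on $\boldsymbol{y}_\mathcal{P}$, reduce the two factorizations to a single bivariate functional equation in $(y_W,y_U)$, and then read off the laws of $Y_W$ and $Y_U$ given $\{\,\cdot\neq 0\,\}$ and $\boldsymbol{y}_\mathcal{P}$. Because $\mathrm{pa}_\mathcal{G}(U)=\{W\}\cup\mathcal{P}$ and $\mathrm{pa}_{\mathcal{G}'}(W)=\{U\}\cup\mathcal{P}$, the DAG factorizations identify these node conditionals with genuine conditional densities, so with $g_W(y_W)=f(y_W\mid\boldsymbol{y}_\mathcal{P})$ and $g_U(y_U)=f(y_U\mid\boldsymbol{y}_\mathcal{P})$ the chain rule gives, for a.e.\ $\boldsymbol{y}_\mathcal{P}$,
\[
g_W(y_W)\,f_U^{\mathcal{G}}(y_U\mid y_W,\boldsymbol{y}_\mathcal{P})
= f(y_W,y_U\mid\boldsymbol{y}_\mathcal{P})
= g_U(y_U)\,f_W^{\mathcal{G}'}(y_W\mid y_U,\boldsymbol{y}_\mathcal{P}).
\]
This single identity is the only use of the two-way factorization, and it is what lets us avoid conditioning on descendants, the obstruction noted in the Remark. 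I work in the canonical parametrization throughout; the moment case is identical via \eqref{eq_abk_pms_relationship}.

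The core is a rigidity argument turning this identity into affine structure. On the open set $\{y_W\neq0,\,y_U\neq0\}$ both Hurdle densities are smooth; taking $\log$ and applying $\partial^2/\partial y_W\partial y_U$ annihilates $\log g_W$, $\log g_U$, both normalizers, and the indicator terms, and leaves $\partial_{y_W}\beta_U^{\mathcal{G}}=\partial_{y_U}\beta_W^{\mathcal{G}'}$. A function of $y_W$ equal to a function of $y_U$ is constant, so $\beta_U^{\mathcal{G}}$ and $\beta_W^{\mathcal{G}'}$ are affine in their value argument with a common slope $c=c(\boldsymbol{y}_\mathcal{P})$, a polynomial in $\boldsymbol{y}_\mathcal{P}$. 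Feeding this back and separating variables yields closed forms for $\log g_W$ and $\log g_U$; comparing the identity across $\{y_U=0\}$ versus $\{y_U\neq0\}$ (and symmetrically in $y_W$) then forces $\alpha_U^{\mathcal{G}}$ and $\alpha_W^{\mathcal{G}'}$ to be affine in their value argument as well, with the value-slope of $\alpha_U^{\mathcal{G}}$ equal to the indicator-jump of $\beta_W^{\mathcal{G}'}$ and that of $\alpha_W^{\mathcal{G}'}$ equal to the indicator-jump of $\beta_U^{\mathcal{G}}$. These slope identities are the crux of the whole argument.

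With affineness in hand I compute $\mathbb{P}(Y_U=\cdot\mid Y_U\neq0,\boldsymbol{y}_\mathcal{P})\propto g_U$ by marginalizing $y_W$ against $\lambda$ (equivalently from the closed form). Splitting $\lambda$ into its atom at $0$ and its Lebesgue part, the $\{Y_W=0\}$ piece contributes one Gaussian of inverse variance exactly $k_U$, the absolute constant $\nu_1$, while the $\{Y_W\neq0\}$ piece is a Gaussian-weighted (by the closed form for $g_W$) family of Gaussians with means affine in $y_W$, which collapses to a single Gaussian of inverse variance $k_U-c^2/k_W$, polynomial in $\boldsymbol{y}_\mathcal{P}$; normalizability forces $c^2<k_Uk_W$, and both weights are positive since $Y_W$ is a.s.\ neither identically zero nor identically nonzero given $\boldsymbol{y}_\mathcal{P}$. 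Interchanging the roles of $(W,\mathcal{G})$ and $(U,\mathcal{G}')$ treats $\mathbb{P}(Y_W=\cdot\mid Y_W\neq0,\boldsymbol{y}_\mathcal{P})$. This is exactly the two-Gaussian shape, modulo distinctness of the two components.

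Distinctness is where hypotheses (i)/(ii) enter. The two components of the $U$-mixture collapse precisely when $\beta_U^{\mathcal{G}}$ depends on neither $y_W$ nor $\mathds{1}_{y_W}$ (then $c=0$ and the means coincide), and likewise for the $W$-mixture with $\beta_W^{\mathcal{G}'}$ and $y_U$. Under the $\beta$-alternative (i) this degeneracy is excluded outright. Under the $\alpha$-alternative (ii) I invoke the slope identities: dependence of $\alpha_U^{\mathcal{G}}$ on the value of $y_W$ forces a nonzero indicator-jump of $\beta_W^{\mathcal{G}'}$, hence distinctness of the $W$-mixture, and symmetrically dependence of $\alpha_W^{\mathcal{G}'}$ on $y_U$ yields distinctness of the $U$-mixture; so the hypotheses for the two triples together make both mixtures genuinely two-component. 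This gives two-Gaussian type w.r.t.\ $(W,U,\mathcal{P})$, the case $\mathcal{P}=\varnothing$ being identical with constant parameters. I expect the rigidity step of the second paragraph---forcing the a priori high-degree Hurdle polynomials down to affine form and establishing the slope identities---to be the main obstacle; the subsequent Gaussian bookkeeping and the distinctness case-split are then routine.
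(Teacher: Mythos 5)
Your route is essentially the paper's: the same two-way factorization identity conditional on $\boldsymbol{y}_\mathcal{P}$ (valid exactly because conditioning on parents recovers the node conditionals, so no conditioning on descendants is ever needed), the same rigidity step forcing $\beta_U^{\mathcal{G}}$ and $\beta_W^{\mathcal{G}'}$ to be affine in their value arguments with a common slope $c$ (the paper reaches this by cross-multiplying and differentiating rather than applying $\partial^2\log/\partial y_W\partial y_U$, but it is the same computation), the same slope identities tying the value-slope of each $\alpha$ to the indicator-jump of the opposite $\beta$ (the paper's (\ref{eq_coef_relationships})), and the same two-component forms with inverse variances $k_U$ and $k_U-c^2/k_W$ (the paper's (\ref{f1_form3})--(\ref{f2_form3})). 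Two execution caveats: your ``comparing the identity across $\{y_U=0\}$ versus $\{y_U\neq0\}$'' step requires a cancellation principle for sums of exponentials of distinct polynomials --- this is the paper's Lemma \ref{lem_match}, a substantial standalone ingredient, not routine bookkeeping --- and the ``for almost every $\boldsymbol{y}_\mathcal{P}$'' qualifications need an Okamoto-type zero-set argument, which the paper spells out and you omit.

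The genuine problem is your last step. Write $a$, $b$ for the indicator-jumps of $\beta_U^{\mathcal{G}}$ and $\beta_W^{\mathcal{G}'}$, and $c$ for the common value-slope. Your own (correct) characterizations say: the $U$-mixture is distinct iff $a\neq0$ or $c\neq0$, the $W$-mixture is distinct iff $b\neq0$ or $c\neq0$; and via the slope identities, hypothesis (i)-or-(ii) for $(U,W,\mathcal{G})$ reads $a\neq0$ or $c\neq 0$ or $b\neq0$, and for $(W,U,\mathcal{G}')$ reads $b\neq0$ or $c\neq0$ or $a\neq0$. So the two hypotheses together amount only to ``not all of $a,b,c$ vanish,'' whereas the conclusion needs ``$c\neq0$, or both $a\neq0$ and $b\neq0$.'' The configuration $a\neq0$, $b=c=0$ satisfies both hypotheses --- (i) holds for $(U,W,\mathcal{G})$ through the indicator-jump $a$, and (ii) holds for $(W,U,\mathcal{G}')$ because $\alpha_W^{\mathcal{G}'}$ then has value-slope $a\neq0$ --- yet the $W$-mixture degenerates to a single Gaussian; moreover this configuration is consistent with every relation you derive (conditionally on $\boldsymbol{y}_\mathcal{P}$ it is realized by a Hurdle $Y_W$ with single-Gaussian continuous part and a $Y_U$ given $Y_W$ whose parameters depend on $\mathds{1}_{Y_W}$ only). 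So ``the hypotheses for the two triples together make both mixtures genuinely two-component'' is a non sequitur: in the mixed case [(i) for one triple, (ii) for the other] both hypotheses certify distinctness of the \emph{same} mixture. You should know that the paper's own closing argument has the identical lacuna: its proof-by-contradiction derives exactly $b=c=0$ (i.e.\ $\beta_W^{\mathcal{G}'}$ absolutely constant and $\alpha_U^{\mathcal{G}}$ indicator-only) and declares this ``a contradiction to the assumption,'' which it is only if $a=0$ as well, and that is never shown. The argument closes cleanly only under a strengthened reading of (i) requiring $\beta_V^{\mathcal{H}}$ to depend on the \emph{value} $y_T$: then the hypothesis for $(U,W,\mathcal{G})$ becomes $b\neq0$ or $c\neq0$, which is precisely distinctness of the $W$-mixture, and symmetrically, so hypotheses and conclusion coincide. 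As stated, with indicator-only dependence allowed in (i), neither your argument nor the paper's rules out the degenerate case, and ruling it out would require input beyond the bivariate analysis (e.g.\ Hurdle-ness of the remaining node conditionals, as in the proof of Corollary \ref{id_binary}).
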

Note that the assumption of faithfulness in Proposition \ref{prop_peters} implies that we have (i) or (ii) or a condition (iii) that states that $\alpha_V^{\mathcal{H}}(y_T,\boldsymbol{y}_{\mathcal{P}})$ (or $p_V^{\mathcal{H}}(y_T,\boldsymbol{y}_{\mathcal{P}})$) depends on $\mathds{1}_{y_T}$ only and $\beta_V^{\mathcal{H}}(y_T,\boldsymbol{y}_{\mathcal{P}})$ (or $\mu_V^{\mathcal{H}}(y_T,\boldsymbol{y}_{\mathcal{P}})$) is constant in $y_T$. It is case (iii) that we rule out in our assumption of Theorem \ref{id_full}.


The result is proved in the \ref{appendix_proofs}. It is easy to show that the result also holds if we make modifications such as restricting the maximum degree of the polynomial or excluding interactions between the discrete and continuous components.

In the two- and three-dimensional cases (i.e., $m=2,3$) we show in the \ref{appendix_proofs} that there does not exist a joint distribution for $\boldsymbol{Y}$ that is of two-Gaussian type with respect to two distinct Markov equivalent graphs. We thus have the following result on full identifiability for graphs with two or three nodes.

\begin{corollary}[Identifiability in two and three dimensions]\label{id_binary}
If $|\mathcal{V}|\leq 3$, i.e., in a binary/triary setting, there does not exist a joint distribution that is of two-Gaussian type w.r.t.~two distinct Markov equivalent DAGs $\mathcal{G}$ and $\mathcal{G}'$. Thus, strong identifiability is guaranteed as in Theorem \ref{thm_full_identifiability}, meaning that the sets of Markov and faithful distributions associated to $\mathcal{G}$ and $\mathcal{G}'$ must be disjoint.
\end{corollary}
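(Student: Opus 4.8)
The plan is to establish the non-existence statement and then read off strong identifiability. Suppose, toward a contradiction, that a joint $\lambda^m$-density with $|\mathcal{V}|\le 3$ factors with respect to two distinct Markov equivalent DAGs $\mathcal{G}$ and $\mathcal{G}'$ with Hurdle node conditionals. Proposition \ref{prop_peters} produces nodes $W,U$ and a set $\mathcal{P}$ with $W\to U$ in $\mathcal{G}$, $U\to W$ in $\mathcal{G}'$, and $\mathcal{P}=\mathrm{pa}_{\mathcal{G}}(U)\setminus\{W\}=\mathrm{pa}_{\mathcal{G}'}(W)\setminus\{U\}$. Since the strong Hurdle polynomial hypothesis of Theorem \ref{thm_full_identifiability} forces case (i) or (ii) of Theorem \ref{id_full} (and rules out (iii)), that theorem makes the distribution of two-Gaussian type w.r.t.\ $(W,U,\mathcal{P})$. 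Because $|\mathcal{V}|\le 3$ we have $|\mathcal{P}|=|\mathcal{V}|-2\le 1$, and it is this boundedness that makes an exhaustive analysis feasible. Producing a contradiction from two-Gaussian type therefore proves both that no such distribution exists and that the Markov and faithful distributions of $\mathcal{G}$ and $\mathcal{G}'$ are disjoint.

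The workhorse is that whenever the conditioning set in Definition \ref{def_2_gauss} is a node's full parent set in one graph---say $\mathrm{pa}_{\mathcal{G}}(W)=\mathcal{P}$---the local Markov property identifies $\mathbb{P}(Y_W=\cdot\mid\boldsymbol{Y}_\mathcal{P})$ with the Hurdle conditional $f_W(\cdot\mid\boldsymbol{y}_\mathcal{P})$, whose nonzero part is the single Gaussian $\phi(\cdot\,;\mu_W(\boldsymbol{y}_\mathcal{P}),\nu_W)$. By identifiability of finite Gaussian mixtures a single Gaussian is not a mixture of two \emph{distinct} Gaussians with positive weights, contradicting two-Gaussian type; the same applies if $\mathrm{pa}_{\mathcal{G}'}(U)=\mathcal{P}$. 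A direct inspection of the finitely many Markov-equivalence structures on at most three nodes shows this disposes of every configuration except one. For $m=2$ we have $\mathcal{P}=\varnothing$ and $W$ is a root of $\mathcal{G}$, so we are done; for $m=3$ the only nontrivial equivalence classes are the single-edge class, the path class (no v-structure), and the complete-graph (tournament) class, and in all of them except a single tournament configuration one of $W,U$ has exactly $\mathcal{P}$ for its parents in $\mathcal{G}$ or $\mathcal{G}'$.

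The surviving, and genuinely hard, configuration is the \emph{symmetric tournament}: $\mathcal{G}$ and $\mathcal{G}'$ are complete DAGs on $\{W,R,U\}$ with opposite orderings $(W,R,U)$ and $(U,R,W)$, and $\mathcal{P}=\{R\}$, so $W$ is a source in $\mathcal{G}$ and $U$ a source in $\mathcal{G}'$. Here conditioning on $\boldsymbol{Y}_\mathcal{P}=Y_R$ conditions a source on its child, so no collapse to a single Gaussian occurs. Using the factorization $f=f_W\,f_R(\cdot\mid W)\,f_U(\cdot\mid W,R)$ of $\mathcal{G}$ and marginalizing $U$, I would write, for a.e.\ $r\neq 0$ and as a function of $w\neq 0$,
\begin{equation*}
\mathbb{P}(Y_W=w\mid Y_W\neq 0, Y_R=r)\;\propto\; e^{R_r(w)}\,\bigl(1+e^{-q(w)}\bigr)^{-1},
\end{equation*}
where $R_r(w)=-\tfrac{\nu_W}{2}(w-\mu_W)^2-\tfrac{\nu_R}{2}(r-\mu_R(w))^2$ and $q(w)=\log\bigl(p_R(w)/(1-p_R(w))\bigr)$ are polynomials in $w$ (the Hurdle parameters of $R$ restricted to $w\neq 0$).

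The crux is to show this is never a mixture of two distinct Gaussians. If $q$ is constant the displayed density is a single Gaussian (when $\mu_R$ is affine), or, when $\deg\mu_R\ge 2$, an exponential of a polynomial of degree $\ge 4$; neither is such a mixture, by the same identifiability/linear-independence reasoning. If $q$ is nonconstant, I would suppose the density equals $\sum_{i=1}^2\tilde c_i(r)\,e^{S_{i,r}(w)}$ with $S_{i,r}(w)=-\tfrac{\nu_i^W}{2}(w-\mu_i^W(r))^2$ (the two-Gaussian form), use analyticity to promote the a.e.\ identity to all $w$, and cross-multiply by $1+e^{-q(w)}$ to obtain
\begin{equation*}
e^{R_r(w)}=\tilde c_1 e^{S_{1,r}(w)}+\tilde c_2 e^{S_{2,r}(w)}+\tilde c_1 e^{S_{1,r}(w)-q(w)}+\tilde c_2 e^{S_{2,r}(w)-q(w)}.
\end{equation*}
Since exponentials of distinct polynomials are linearly independent and the coefficients $\tilde c_i>0$ cannot cancel, the degree-$\ge 1$ polynomial $q$ cannot be eliminated, a contradiction. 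Assembling the easy cases with this argument completes the non-existence proof, and strong identifiability follows from the reduction in the first paragraph. I expect the bookkeeping that isolates the symmetric-tournament configuration, together with the linear-independence step, to be the main obstacle.
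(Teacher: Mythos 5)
Your proposal is correct, and while its skeleton (reduce via Proposition~\ref{prop_peters} and Theorem~\ref{id_full}, then contradict two-Gaussian type) matches the paper's, in the three-node case you take a genuinely different route. The paper's own proof is purely combinatorial: for $|\mathcal{V}|=2$ it uses exactly your ``workhorse'' step ($\mathcal{P}=\varnothing$, the source of $\mathcal{G}$ has a single-Gaussian nonzero part), and for $|\mathcal{V}|=3$ it either marginalizes out $V_3$ when it is a parent of neither $V_1$ nor $V_2$ (reducing to the bivariate case) or disposes of the remaining configurations by equivalence-class bookkeeping---an unshielded collider is alone in its Markov equivalence class, and the complete-graph cases are waved off with the remarks that ``the reversible edge becomes $V_1$--$V_3$'' and that one can ``marginalize out the common child $V_2$.'' You instead push the single-Gaussian contradiction through every configuration and isolate the symmetric tournament (orderings $(W,R,U)$ and $(U,R,W)$) as the one case it cannot handle; that diagnosis is accurate, and it is precisely the configuration where the paper's dismissal does not actually go through: there the unique triplet satisfying Proposition~\ref{prop_peters} is still $(W,U,\{R\})$, neither $W$ nor $U$ has parent set $\{R\}$ in either graph, and $U$ is a sink in $\mathcal{G}$ but a source in $\mathcal{G}'$, so no common-sink marginalization is available. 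Your analytic treatment of this case is sound and reuses the paper's own Lemma~\ref{lem_match}: writing $\mathbb{P}(Y_W=w\mid Y_W\neq 0,Y_R=r)\propto e^{R_r(w)}\bigl(1+e^{-q(w)}\bigr)^{-1}$ and cross-multiplying, the two-Gaussian hypothesis would equate a single exponential of a polynomial with a positive combination of at least three pairwise non-proportional ones (positivity of the weights prevents cancellation, $q$ nonconstant prevents $S_i-q$ from collapsing onto $S_i$, and $S_1-q$ and $S_2-q$ cannot simultaneously coincide with $S_2$ and $S_1$ up to constants), contradicting the lemma; the constant-$q$ case is a single Gaussian or an exponential of a degree-$\geq 4$ polynomial, ruled out the same way. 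What each approach buys: the paper's argument costs no analysis at all where it is valid, whereas yours is longer but rigorously covers the tournament configuration and, as a byproduct, derives the contradiction from the Hurdle factorization with respect to a single graph. The pieces left implicit in your sketch---displaying the finite case enumeration, and promoting the a.e.\ identity in $w$ to an everywhere identity by continuity, exactly as in the first step of the proof of Lemma~\ref{lem_match}---are routine.
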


Theorem \ref{thm_full_identifiability} and Corollary \ref{id_binary} state that the DAGs are perfectly identifiable from the distributions if $m=2,3$ or if we assume the Hurdle polynomials to be \emph{strong}; Theorem \ref{id_full} claims that without assuming \emph{strong} Hurdle polynomials, the distributions for $m>3$ from which the graph is not identifiable must be a subset of the \emph{two-Gaussian type} distributions. We conjecture that in general, with $m>3$, the set of two-Gaussian type distributions with respect to any two graphs is an empty set. 

In Figure \ref{plot_identifiability} we show scatter plots of simulated data that give some indication of how Markov equivalent graphs may be differentiated under our models.

\section{Estimation of DAGs from Zero-Inflated Data}\label{Estimation of Zero-Inflated DAGs}

Suppose now that we are given an i.i.d.~sample $\boldsymbol{y}^{(1)},\ldots,\boldsymbol{y}^{(n)}$ comprised of $m$-variate observations.  
The log-likelihood function $\ell$ of any DAG model can be decomposed into the sum of conditional (or nodewise) log-likelihood functions $\ell^V$ for the $V$-th variable conditional on its parent variables.  Let $y_V^{(1)},\ldots,y_V^{(n)}$ be the $n$ observations of the $V$-th variable.  For the canonical $(\alpha,\beta,k)$-parametrization from (\ref{def_abk_hurdle_conditional}), the nodewise log-likelihood function is  
\begin{multline*}
\ell^{V}\left(\alpha_V,\beta_V,k_V\left|\boldsymbol{y}^{(1)},\ldots,\boldsymbol{y}^{(n)}\right.\right)=\sum_{i=1}^n\left(\alpha_V\left(\boldsymbol{y}_{\mathrm{pa}(V)}^{(i)}\right)\mathds{1}_{y_V^{(i)}}+\beta_V\left(\boldsymbol{y}_{\mathrm{pa}(V)}^{(i)}\right)y_V^{(i)}-k_V{y_V^{(i)}}^2/2\right.\\
-\log\left.\left[\sqrt{2\pi/k_V}\exp\left\{\alpha_v\left(\boldsymbol{y}_{\mathrm{pa}(V)}^{(i)}\right)+\beta_V^2\left(\boldsymbol{y}_{\mathrm{pa}(V)}^{(i)}\right)/(2k_V)\right\}+1\right]\right);
\end{multline*}
for the moment $(p,\mu,\sigma^2)$-parametrization from (\ref{def_pms_hurdle_conditional}) it is
\begin{multline*}
\ell^{V}\left(p_V,\mu_V,\sigma_V^2\left|\boldsymbol{y}^{(1)},\ldots,\boldsymbol{y}^{(n)}\right.\right)=\sum_{i:y_V^{(i)}=0}\log\left\{1-p_V\left(\boldsymbol{y}_{\mathrm{pa}(V)}^{(i)}\right)\right\}\\
+\sum_{i:y_V^{(i)}\neq 0}\left[\log p_V\left(\boldsymbol{y}_{\mathrm{pa}(V)}^{(i)}\right)-\frac{1}{2}\log(2\pi\sigma_V^2)-\left\{y_V^{(i)}-\mu_V\left(\boldsymbol{y}_{\mathrm{pa}(V)}^{(i)}\right)\right\}^2/(2\sigma_V^2)\right].
\end{multline*}
In the latter case, we see the sum of the log-likelihood functions from the logistic regression model for $p_V$ and the linear regression for $\mu_V$ restricted to the observations with $y_V^{(i)}\neq 0$. Here we recall that the parameters $\alpha_V, \beta_V, p_V, \mu_V$ are themselves polynomials in $\boldsymbol{y}_{\mathrm{pa}(V)}$ and their indicators, and we are using them as a shorthand notation on the left-hand sides where we really mean $\ell^V$ as a function of the parameters (i.e., coefficients) in those polynomials.

\subsection{Fitting Hurdle Conditionals}\label{Fitting Hurdle Conditionals}

Estimation of the graphical models amounts to fitting the conditional distribution of one node given a set of others. For the canonical $(\alpha,\beta,k)$-parametrization, the log-likelihood function is convex in $\alpha_V$, $\beta_V$ and $k_V$. Moreover, $\alpha_V$ and $\beta_V$ are linear in the polynomial coefficients.  Therefore, the log-likelihood is convex in the coefficients to estimate and can be maximized by standard methods; e.g., coordinate descent. Estimation for the moment $(p,\mu,\sigma^2)$-parametrization (\ref{def_pms_hurdle_conditional}), on the other hand, can be easily solved by separately fitting a logistic regression to $p_V$ and a linear regression to $\mu_V$. 
Recall again that the two parametrizations, canonical and moment, are equivalent when assuming a full polynomial model, i.e., when the degree and structure of the polynomials is unrestricted.  However, when restricting, for instance, the degree the two parametrizations yield different models.

The $(\alpha,\beta,k)$-parametrization with linear Hurdle polynomials (i.e., degree $1$) is interesting as it naturally comes from conditional distributions of the joint distribution defined for undirected graphical models in \citet{mcd19}.  However, at least for higher degree, the $(p,\mu,\sigma^2)$-parametrization may be more intuitive and useful in practice as it leads to a decomposition into a logistic regression and a linear regression.  This decomposition enables one to use optimized standard regression solvers for model fitting. The $(p,\mu,\sigma^2)$-parametrization also makes it easy to apply available routines to  incorporate regularization on the coefficients/parameters into our loss, which is helpful when the number of samples is small compared to the number of parameters.  Such higher dimensionality of the models arises in particular when assuming a higher degree for the Hurdle polynomials.  The regularization is automatically applied in the implementation in our R package \texttt{ZiDAG} available on \texttt{GitHub}.

For estimation of our models, we assume a highest degree of the Hurdle polynomials.  To select the degree from data we adopt the Bayesian information criterion (BIC).  This functionality is incorporated in \texttt{ZiDAG}.

\subsection{Graph Search}\label{Graph Search}

For estimation of the DAG underlying the graphical model, we mainly consider two state-of-the-art methods: (A) exhaustive score-based search and (B) 
greedy search.
Both methods rely on a model score which we take to be the BIC defined as $\nu\log n-2\ell$, where $\nu$ is the total number of parameters in the model, $n$ is the sample size, and $\ell$ is the log-likelihood as introduced in the beginning of Section \ref{Estimation of Zero-Inflated DAGs}.

\begin{enumerate}[(A)]
\item \textbf{Exhaustive search}:  Optimizing the BIC over the set of all DAGs is possible for moderately small $m$ using the dynamic programming algorithm of \citet{sil06}.  This approach is justified by the asymptotic consistency of the BIC as well as the identifiability of our model (recall Section \ref{Identifiability}). The experiments of \citet{sil06} suggest that for Gaussian models the search is practical for $m<32$. Estimation of our models is computationally more challenging but exhaustive search is feasible at least for $m<16$.  
\item \textbf{Greedy search}:  Instead of optimizing BIC over all DAGs, we may apply a greedy search that iteratively improves BIC by moving to a neighboring DAG that provides the largest improvement.  The neighborhood is defined using edge additions, deletions, and reversals; compare \citet{chi02}.
While \citet{chi02} discusses consistency of graph recovery in terms of equivalence classes, in our case the algorithm determines individual graphs.
For faster estimation in sparse settings, we consider restricting the maximum node degree (i.e., the maximum number of parents).
\end{enumerate}

\begin{remark}
We have also experimented with a version of the PC algorithm, 
which is not easily applicable since it relies on a suitable conditional independence test between two variables given any potential parent set.  Indeed, by the nature of our models if the potential parent set is misspecified, the conditional distributions may no longer be Hurdle. 
Another possible approach starts with greedily estimating the topological ordering of nodes by iteratively picking the node that maximizes the conditional likelihood given nodes already chosen, followed by a variable selection problem using, for example, a Wald test or $\ell_1$ regularization techniques; this method relies on very subtle features of the distributions.  Neither the PC algorithm we designed nor the approach focusing on the topological ordering were competitive in our experiments.
\end{remark}

\subsection{Stability Selection}\label{Stability Selection}

In our application to single-cell gene expression data, we seek to also achieve some control of the false discovery rate (FDR).  To this end, we apply stability selection in graph estimation. In particular, we take up the approach outlined in \citet{sha13}. We randomly choose $B=50$ subsets of the data (each of size $\lfloor n/2\rfloor$), and obtain $B$ other sets as their complements of equal size, randomly throwing out one sample if $n$ is odd.  We then estimate the graph using $\lfloor n/2\rfloor$ subsamples indexed by each of these $2B$ sets of equal size, and obtain $2B$ estimated DAGs. Given the desired FDR, we compute a frequency threshold using the formula from \citet[Eqn.~(8)]{sha13} with number of total parameters $m(m-1)/2$. We then keep all edges that occur more often than the frequency threshold and produce a graph as our final estimate. In our implementation in \texttt{ZiDAG}, if a graph estimated this way is not acyclic, the user can choose to return it as is, or the function will increase the threshold up to the point where the resulting graph is a DAG, even though the resulting graph might be empty in extreme cases.

\section{Numerical Experiments}\label{Numerical Experiments}

In this section we present numerical experiments for exact DAG recovery using simulated zero-inflated conditional Gaussian data. The main goal is to verify identifiability using exhaustive search, and examine how accurately greedy search can recover the true graph.

\subsection{True Underlying DAGs and Distributions}\label{True Underlying DAGs and Distributions}
We consider three DAG structures, i) chain graph with $m=5$, ii) complete graph with $m=5$, iii) lattice graph with $m=9$, as illustrated in Figure~\ref{plot_true_DAGs}. We keep $m$ small for tractability of repeated application of \citet{sil06} with quadratic Hurdle polynomials. In practice, we suggest applying the directed graph models to the connected components inferred from undirected graphs, estimated using the joint Hurdle distribution (\ref{def_undirected_joint}) as in \citet{mcd19}. This often results in considerably smaller sizes $m$. In particular, sizes of $m$ in this section are similar to the largest component in our data analysis in Section \ref{T Helper Cell Data}.

\begin{figure}[t!]
\begin{center}
\begin{tikzpicture}[baseline,scale=1]
    \node[shape=circle,draw=black, line width=0.5mm] (1) at (0, 1){};
    \node[shape=circle,draw=black, line width=0.5mm] (2) at ($({sqrt(10+2*sqrt(5))/4},{(sqrt(5)-1)/4})$) {};
    \node[shape=circle,draw=black, line width=0.5mm] (3) at ($({sqrt(10-2*sqrt(5))/4},{-(sqrt(5)+1)/4})$) {} ;
      \node[shape=circle,draw=black, line width=0.5mm] (4) at ($({-sqrt(10-2*sqrt(5))/4},{-(sqrt(5)+1)/4})$) {};
    \node[shape=circle,draw=black, line width=0.5mm] (5) at ($({-sqrt(10+2*sqrt(5))/4},{(sqrt(5)-1)/4})$) {};
    \path [->, line width=0.5mm](1) edge node[right] {} (2);
    \path [->, line width=0.5mm](2) edge node[right] {} (3);
    \path [->, line width=0.5mm](3) edge node[right] {} (4);
    \path [->, line width=0.5mm](4) edge node[right] {} (5);
    \node[below=2cm] at (current bounding box.base) {Chain, $m=5$};
\end{tikzpicture}
\hfill
\begin{tikzpicture}[grow=right,baseline,scale=1]
    \node[shape=circle,draw=black, line width=0.5mm] (1) at (0, 1){};
    \node[shape=circle,draw=black, line width=0.5mm] (2) at ($({sqrt(10+2*sqrt(5))/4},{(sqrt(5)-1)/4})$) {};
    \node[shape=circle,draw=black, line width=0.5mm] (3) at ($({sqrt(10-2*sqrt(5))/4},{-(sqrt(5)+1)/4})$) {} ;
      \node[shape=circle,draw=black, line width=0.5mm] (4) at ($({-sqrt(10-2*sqrt(5))/4},{-(sqrt(5)+1)/4})$) {};
    \node[shape=circle,draw=black, line width=0.5mm] (5) at ($({-sqrt(10+2*sqrt(5))/4},{(sqrt(5)-1)/4})$) {};
    \path [->, line width=0.5mm](1) edge node[right] {} (2);
    \path [->, line width=0.5mm](1) edge node[right] {} (3);
    \path [->, line width=0.5mm](1) edge node[right] {} (4);
    \path [->, line width=0.5mm](1) edge node[right] {} (5);
    \path [->, line width=0.5mm](2) edge node[right] {} (3);
    \path [->, line width=0.5mm](2) edge node[right] {} (4);
    \path [->, line width=0.5mm](2) edge node[right] {} (5);
    \path [->, line width=0.5mm](3) edge node[right] {} (4);
    \path [->, line width=0.5mm](3) edge node[right] {} (5);
    \path [->, line width=0.5mm](4) edge node[right] {} (5);
    \node[below=2cm] at (current bounding box.base) {Complete, $m=5$};
\end{tikzpicture}
\hfill
\begin{tikzpicture}[grow=right,baseline,scale=1]
    \node[shape=circle,draw=black, line width=0.5mm] (1) at (0, 0.9){};
    \node[shape=circle,draw=black, line width=0.5mm] (2) at (0.9, 0.9) {};
    \node[shape=circle,draw=black, line width=0.5mm] (3) at (1.8, 0.9) {} ;
    \node[shape=circle,draw=black, line width=0.5mm] (4) at (0, 0) {};
    \node[shape=circle,draw=black, line width=0.5mm] (5) at (0.9, 0) {};
    \node[shape=circle,draw=black, line width=0.5mm] (6) at (1.8, 0) {};
    \node[shape=circle,draw=black, line width=0.5mm] (7) at (0, -0.9) {};
    \node[shape=circle,draw=black, line width=0.5mm] (8) at (0.9, -0.9) {};
    \node[shape=circle,draw=black, line width=0.5mm] (9) at (1.8, -0.9) {};
    \path [->, line width=0.5mm](1) edge node[right] {} (2);
    \path [->, line width=0.5mm](1) edge node[right] {} (4);
    \path [->, line width=0.5mm](2) edge node[right] {} (3);
    \path [->, line width=0.5mm](2) edge node[right] {} (5);
    \path [->, line width=0.5mm](3) edge node[right] {} (6);
    \path [->, line width=0.5mm](4) edge node[right] {} (5);
    \path [->, line width=0.5mm](4) edge node[right] {} (7);
    \path [->, line width=0.5mm](5) edge node[right] {} (6);
    \path [->, line width=0.5mm](5) edge node[right] {} (8);
    \path [->, line width=0.5mm](6) edge node[right] {} (9);
    \path [->, line width=0.5mm](7) edge node[right] {} (8);
    \path [->, line width=0.5mm](8) edge node[right] {} (9);
    \node[below=2cm] at (current bounding box.base) {Lattice, $m=9$};
\end{tikzpicture}
\end{center}
\caption{Graph structures used in our experiments.}\label{plot_true_DAGs}
\vspace{0.2in}
\end{figure}

For each structure, we consider true generating conditional distributions using the following parametrizations: a) $(\alpha,\beta,k)$-(canonical) parametrization with \emph{linear} Hurdle polynomials, b) $(p,\mu,\sigma^2)$-(moment) parametrization with \emph{linear} Hurdle polynomials, and c) $(p,\mu,\sigma^2)$-(moment) parametrization with \emph{quadratic} Hurdle polynomials. We note that the distributions represented by c) is a superset of those by a) and b). By (\ref{eq_abk_pms_relationship}), distributions represented by a) and b) are disjoint because $\log(p/(1-p))$ is a weighted sum of $\alpha$ and $\beta^2$.

Recall the definition of Hurdle conditionals in (\ref{def_abk_hurdle_conditional}) and (\ref{def_pms_hurdle_conditional}) in Section \ref{Hurdle Conditionals}. In our experiments, whenever $\mathrm{pa}(V)=\varnothing$, we generate $y_V\sim f_0$ such that $f_0(x)=\frac{1}{2}(1-\mathds{1}_x)+\frac{1}{2}\phi(x;0,1)$, where $\phi$ is the standard normal density. Otherwise, for parametrization a), we use Hurdle conditionals with parameters $k_V=1$,
$\alpha_V(\boldsymbol{y}_{\mathrm{pa}(V)})=\sum_{U\in \mathrm{pa}(V)}\mathds{1}_{y_U}+y_U$ and $\beta_V(\boldsymbol{y}_{\mathrm{pa}(V)})=\sum_{U\in \mathrm{pa}(V)}\left(\mathds{1}_{y_U}-y_U\right)$; similarly for parametrization b) we take $\sigma^2_V=1$,
$\log\frac{p_V}{1-p_V}(\boldsymbol{y}_{\mathrm{pa}(V)})=\sum_{U\in \mathrm{pa}(V)}\left(\mathds{1}_{y_U}+y_U\right)$ and $\mu_V(\boldsymbol{y}_{\mathrm{pa}(V)})=\sum_{U\in \mathrm{pa}(V)}\left(\mathds{1}_{y_U}-y_U\right)$; finally, for parametrization c) we take $\sigma^2_V=1$ and
\begin{multline*}
\log\frac{p_V}{1-p_V}(\boldsymbol{y}_{\mathrm{pa}(V)})=\sum_{U\in \mathrm{pa}(V)}\left(\mathds{1}_{y_U}+y_U+\frac{y_U^2}{10}\right)+\frac{1}{10}\sum_{\substack{U, W\in\mathrm{pa}(V)\\ U\neq W}}(\mathds{1}_{y_U}+y_U)(\mathds{1}_{y_V}+y_V),\,\,\text{and}
\\
\mu_V(\boldsymbol{y}_{\mathrm{pa}(V)})=\sum_{U\in \mathrm{pa}(V)}\left(\mathds{1}_{y_U}-y_U-\frac{y_U^2}{10}\right)+\frac{1}{10}\sum_{\substack{U, W\in\mathrm{pa}(V)\\ U\neq W}}\left(\mathds{1}_{y_U}\mathds{1}_{y_V}-y_U\mathds{1}_{y_V}-y_V\mathds{1}_{y_U}-y_Vy_U\right).
\end{multline*}
We then normalize the coefficients in the above expressions ($\pm 1,\pm1/10$) so that $\alpha_V$, $\beta_V$, $\log p_V/(1-p_V)$ and $\mu_V$ have means $0$ and $1$, respectively, across the samples. This normalization ensures that the marginal probability of being nonzero, the marginal mean, and the marginal variance for each node are stabilized, in order to show that the DAGs are truly recovered based on the conditional dependency structure instead of additional signals from these marginal quantities. In fact, in the generated samples the marginal probability is about 0.5 and the marginal mean is about 0 for all nodes, and the marginal variance for the nonzero part only is about the same for all except the source node.

In Figure~\ref{plot_identifiability}, we present pairwise scatter plots of one instance of data generated with the chain graph (upper row) and the complete graph (lower row), respectively, both with $(p,\mu,k)$-linear parametrization. Since the true topological ordering is $1\to2\to3\to4\to5$, for clarity we exclude the source and sink nodes ($1$ and $5$) and only include nodes $2$, $3$ and $4$. Plots on the left are plotted in the order $2,3,4$ and those on the right are reversed. In the histograms on the diagonals we only plot the continuous part. 

The scatter plots indicate a slight difference in the respective marginal distributions of nodes 2 and 4 conditioned on node 3 being 0 (and vice versa). This difference intuitively explains how the orientation $2\to 3\to 4$ versus $4\to 3 \to 2$ can be identified. It is worth noting that other than this difference, the marginal statistics for the three nodes are indistinguishable and there is little noticeable difference between plots on the left and on the right.

\begin{figure}
\centering
\includegraphics[width=0.48\textwidth]{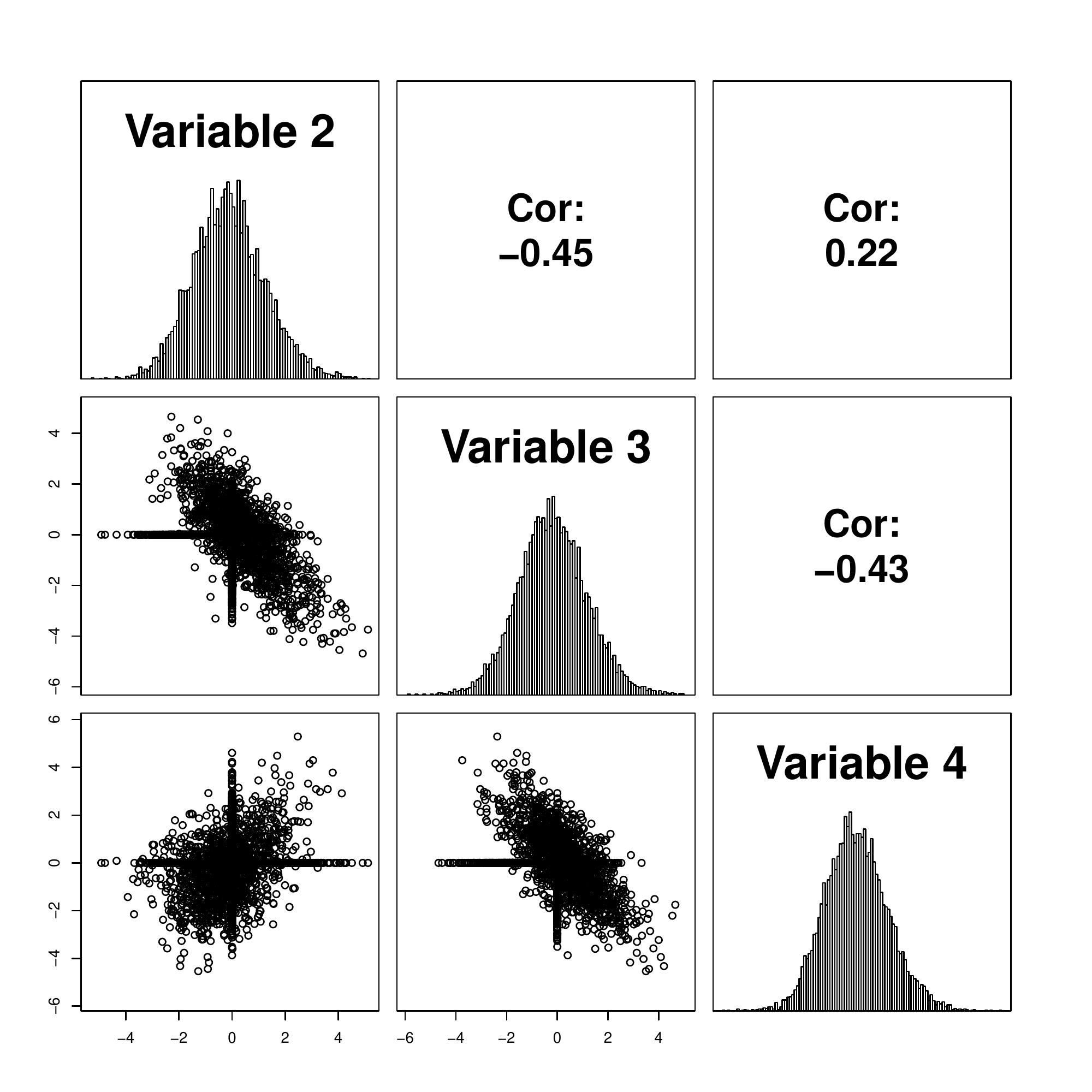}
\includegraphics[width=0.48\textwidth]{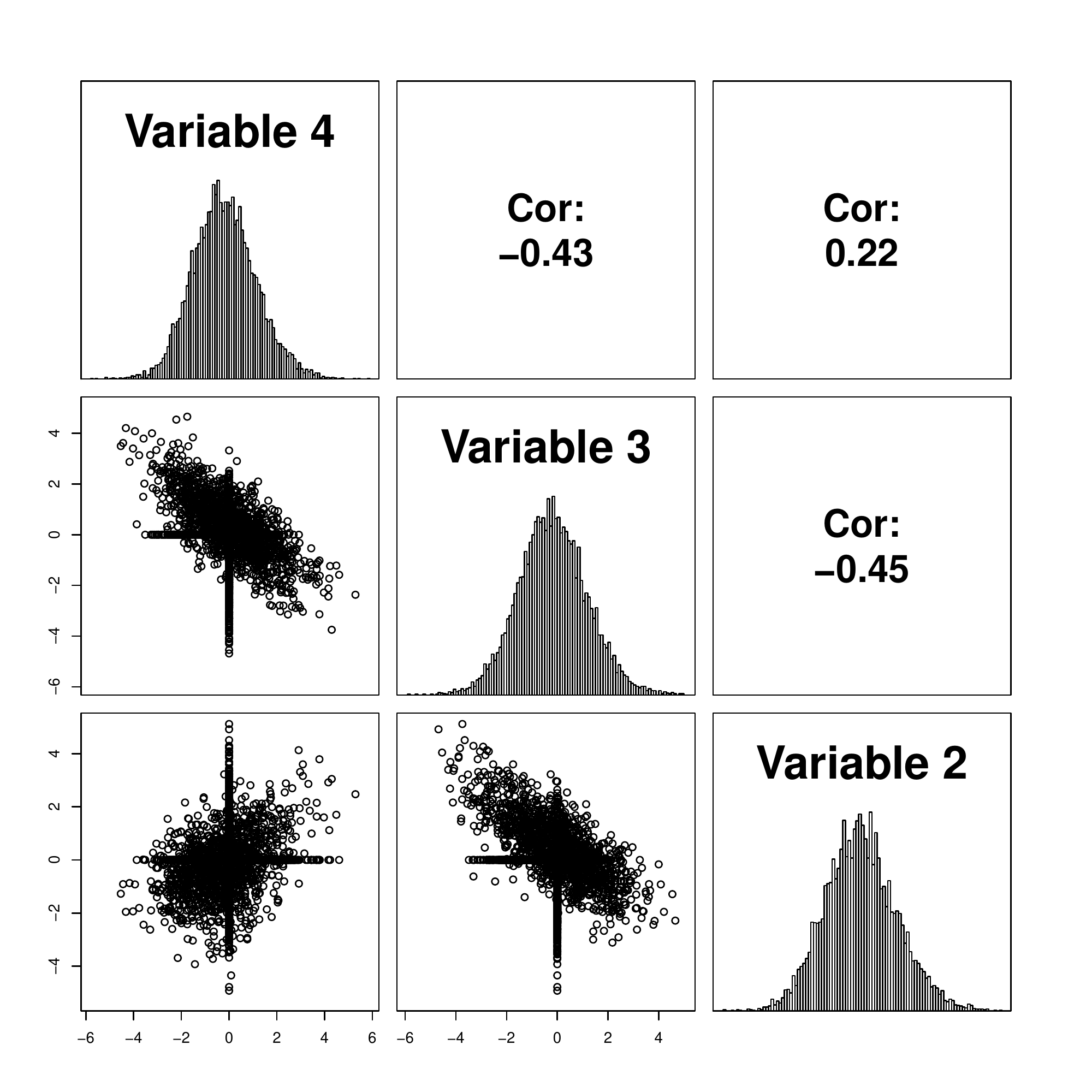}
\vspace{-0.2in}
\includegraphics[width=0.48\textwidth]{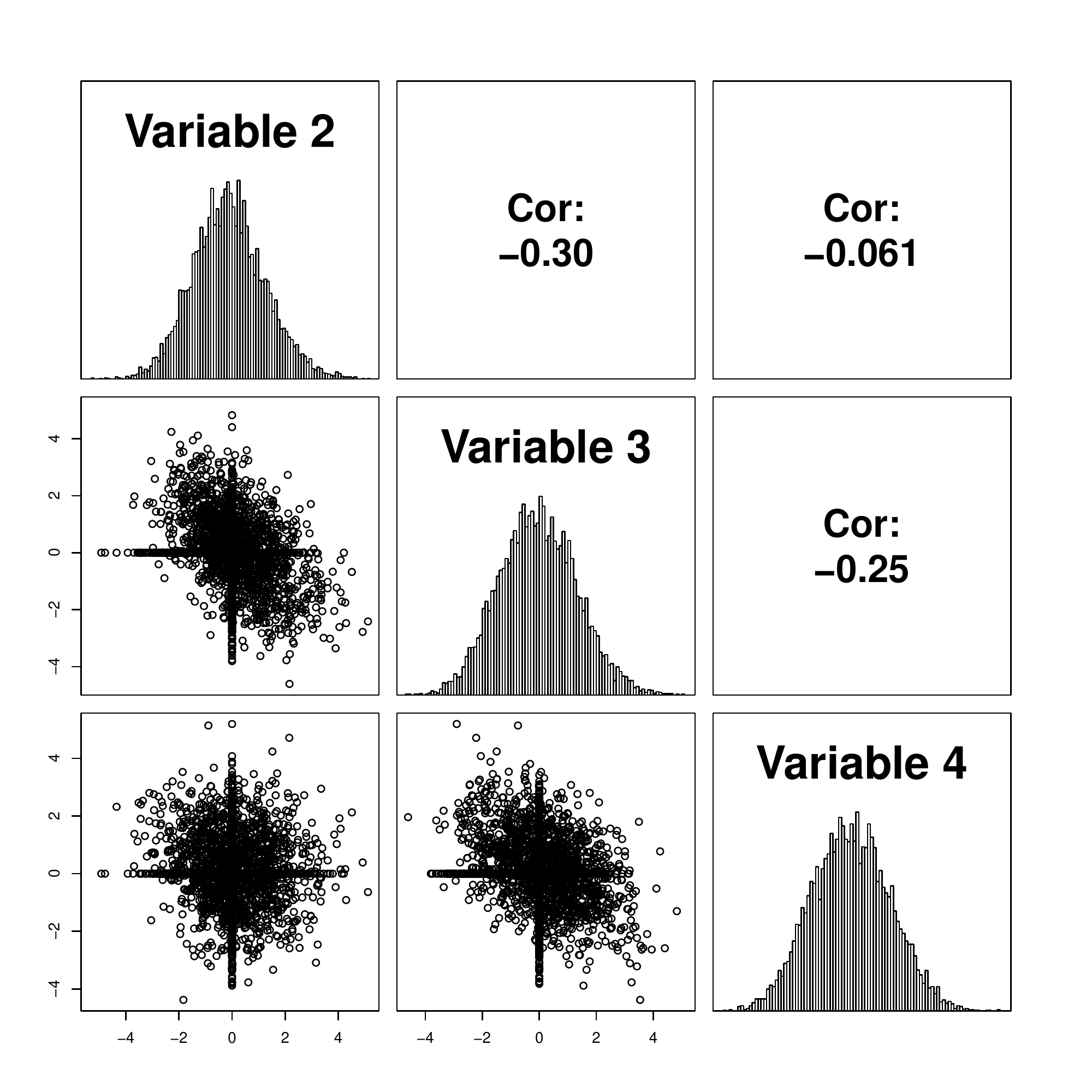}
\includegraphics[width=0.48\textwidth]{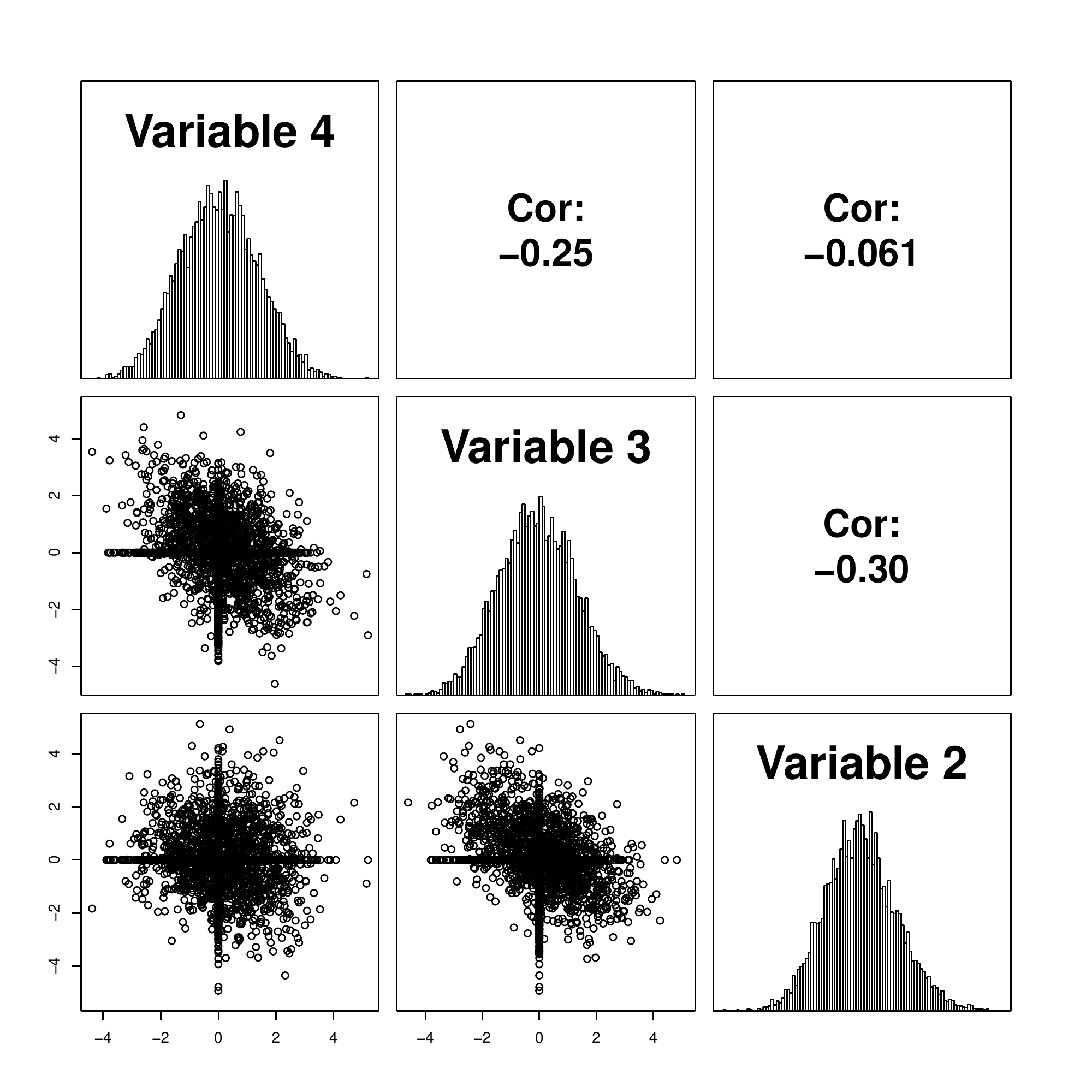}
\caption{Pairwise scatterplots of zero-inflated data generated using chain graphs (upper row) and complete graphs (lower row), both with topological ordering $1\to 2\to 3\to 4\to 5$; only nodes $2$, $3$ and $4$ are plotted. Plots on the left are plotted in the order $2,3,4$, and $4,3,2$ on the right. Only the continuous part is plotted in the histograms on the diagonals. There is little noticeable difference between the histograms and scatter plots when we reverse the graph order, yet our methods can still determine the correct topological ordering.}\label{plot_identifiability}
\end{figure}

\subsection{Estimation}
We use the two graph estimation methods described in Section~\ref{Graph Search}, our self-implemented greedy search (GDS) \citep{chi02} with BIC score, and exhaustive search with dynamic programming \citep{sil06}. 
Details on fitting the hurdle conditionals themselves were presented in Section~\ref{Fitting Hurdle Conditionals}. 

In our simulation, we aim to assess the performance of the different estimation procedures for correctly specific and misspecified parametrizations. To this end, for each combination of true DAG and true data generating parametrization---$(\alpha,\beta,k)$-linear and $(p,\mu,\sigma^2)$-linear and quadratic---we estimate the DAG using all three parametrizations for generating data. For simplicity and given that the simulation results are presented over $B=100$ iterations, stability selection is not used in these experiments. 

\subsection{Results}
Results are shown in Figures~\ref{chain_m5}--\ref{lattice_m9}. 
Each figure has one true underlying DAG from those mentioned in Section \ref{True Underlying DAGs and Distributions}. In all figures, each row indicates one choice of true data generating parametrization---$(\alpha,\beta,k)$-linear, and $(p,\mu,\sigma^2)$-linear and quadratic---and each column shows the results using each estimating parametrization. Thus, plots on the diagonal (with bold titles) correspond to correct parametrizations, where the estimating parametrization agrees with the truth. Off-diagonal plots, in contrast, corresponds to cases where the model parametrization is misspecified.


Since exhaustive search compares all possible DAGs for $m$ nodes, for $n$ large enough it provides an indicator of identifiability. Indeed, the results indicate that in all settings, exhaustive search with correct parametrization almost always identifies the exact DAG for large $n$. In fact, since the $(p,\mu,\sigma^2)$-quadratic parametrization covers the other two, in all cases the graphs can be perfectly recovered using the quadratic estimating parametrization. In contrast, when the underlying truth is quadratic, the graph may not be easily identified from estimates that use the other two parametrizations. This is especially the case for the lattice graph. Comparing the linear parametrizations themselves, $(p,\mu,\sigma^2)$ seems less prone to model misspecification and has the advantage of faster estimation with the help of standard softwares for logistic and linear regressions.

Overall, our simulation studies confirm the identifiability theory (Theorem~\ref{thm_full_identifiability}). In particular, our experiments indicate that exhaustive search performs well. They also indicate that GDS works reasonably well for  sparse graphs but may require larger samples for recovering the structure of complete, or very dense, graphs. While exhaustive search often succeeds with high probability even with small samples, it may not be scalable for large $m$. In such cases, the greedy and faster GDS method, which shows promising results, provides a viable alternative. Utilizing the stability selection method of Section~\ref{Stability Selection} can further improve the GDS results. 


\begin{figure}[H]
  \centering 
   \includegraphics[width=\textwidth]{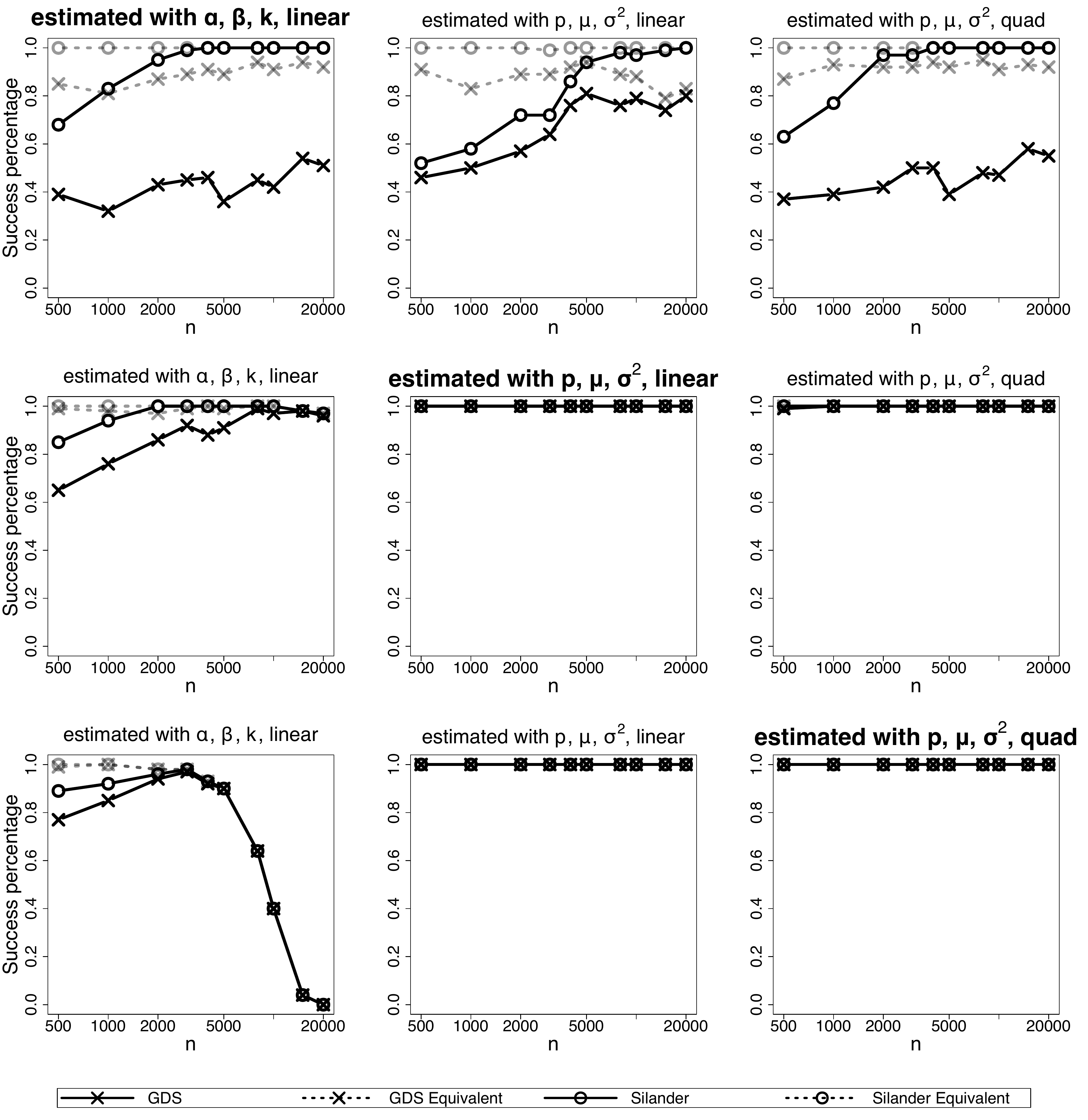}
\vspace{-0.in}
\caption{Chain graph, $m=5$. Each row corresponds to a different generating parametrization, and each column a different estimating parametrization. Generating and estimating parametrizations agree on the diagonal. Solid `$\times$': success rates of exact DAG recovery for greedy search; solid `$\circ$': success rates of exact DAG recovery for exhaustive search; gray dotted lines: success rates for recovery of equivalence class.}\label{chain_m5}
\end{figure}

\begin{figure}[H]
  \centering 
   \includegraphics[width=\textwidth]{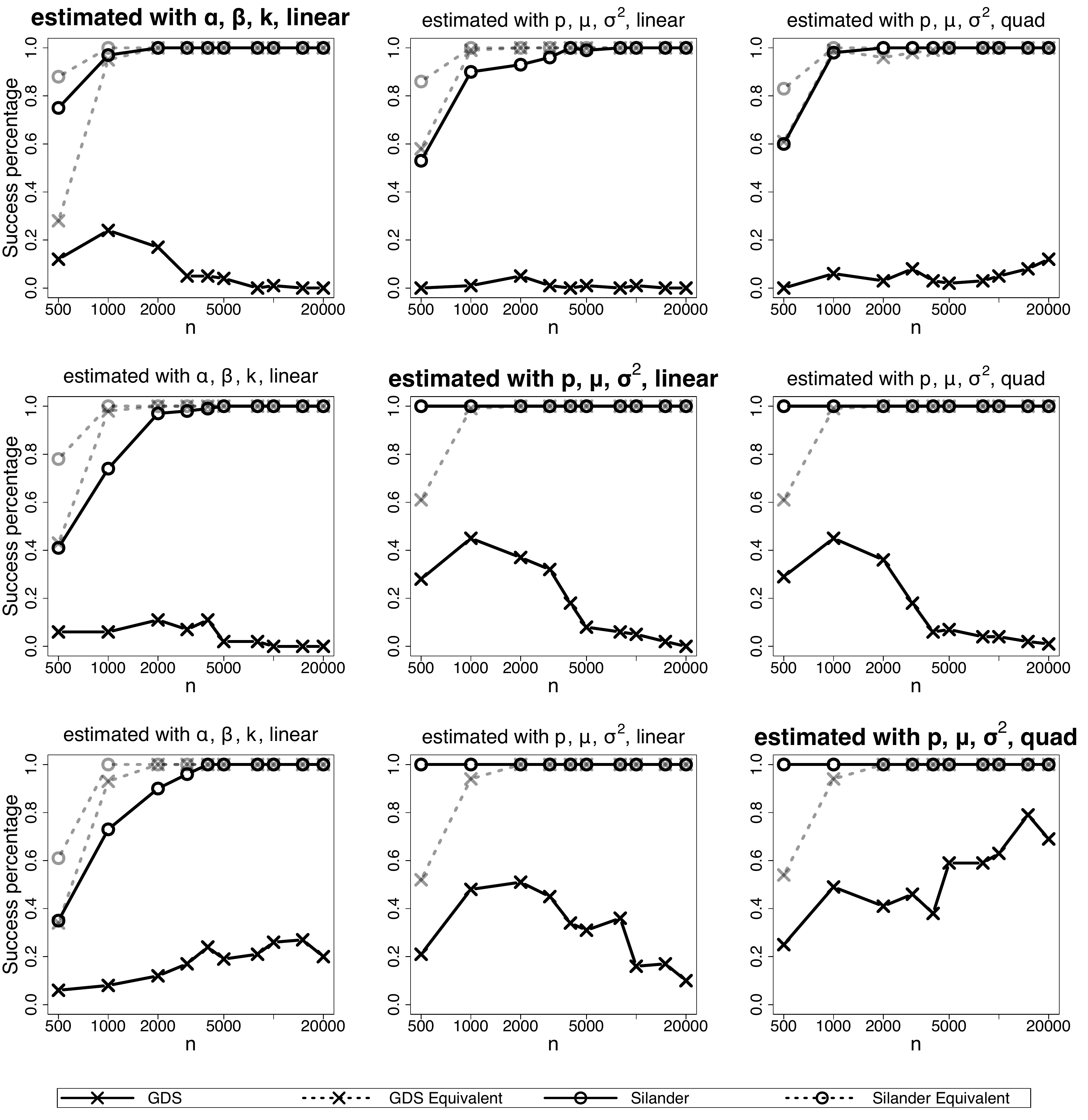}
\vspace{-0.in}
\caption{Complete graph, $m=5$. Each row corresponds to a different generating parametrization, and each column a different estimating parametrization. Generating and estimating parametrizations agree on the diagonal. Solid `$\times$': success rates of exact DAG recovery for greedy search; solid `$\circ$': success rates of exact DAG recovery for exhaustive search; gray dotted lines: success rates for recovery of equivalence class.}\label{complete_m5}
\end{figure}

\begin{figure}[H]
  \centering 
   \includegraphics[width=\textwidth]{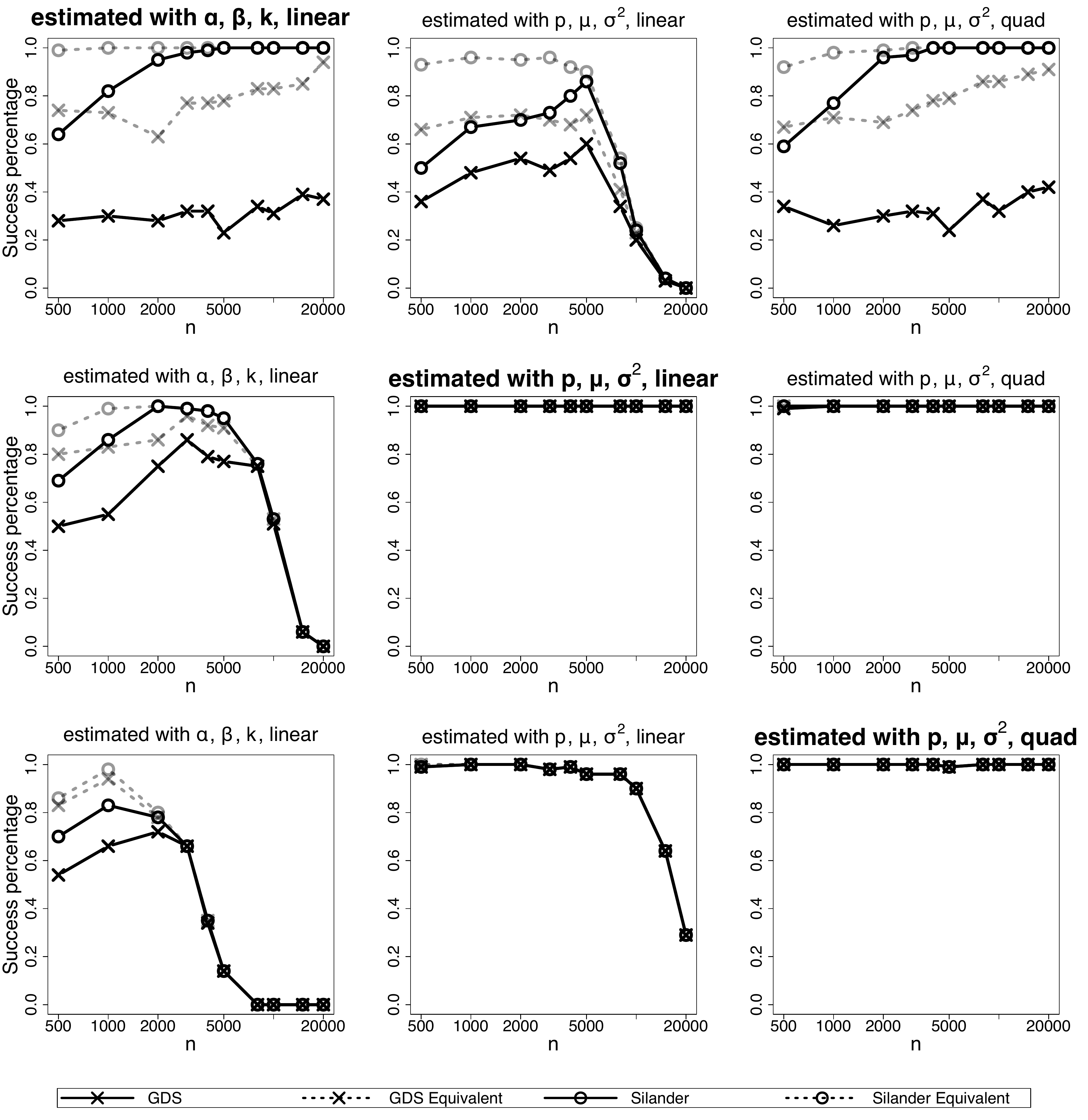}
\vspace{-0.in}
\caption{Lattice graph, $m=9$. Each row corresponds to a different generating parametrization, and each column a different estimating parametrization. Generating and estimating parametrizations agree on the diagonal. Solid `$\times$': success rates of exact DAG recovery for greedy search; solid `$\circ$': success rates of exact DAG recovery for exhaustive search; gray dotted lines: success rates for recovery of equivalence class.}\label{lattice_m9}
\end{figure}

\section{T Helper Cell Data}\label{T Helper Cell Data}
In this section we present the results of applying our model to a T helper cell expression dataset. Specifically, the dataset is considered in \citet{mcd19} and contains both single cell and 10-cell expression measurements for T helper cells for 80 genes in eight healthy donors. We use all 1951 single cell measurements for these donors (a superset of the 465 measurements in \citet{mcd19}) to ensure we have a large enough sample size to produce reliable estimates. In particular, \citet{mcd19} consider only the T-follicular (CXCR5\textsuperscript{+}PD1\textsuperscript{+}) cells that produce high levels of proteins CXCR5 and PD1, while we do not make this restriction. Instead, we add the indicators of CXCR5\textsuperscript{+/-} and PD1\textsuperscript{+/-} as regressors when fitting the conditional distributions. Following \citet{mcd19}, we choose the 61 genes that have at least 5\% zero and 5\% nonzero values.
 
While the measurements are all nonnegative, the minimum, mean, and standard deviation of the nonzero values in the dataset are $7.89$, $18.53$, and $1.91$, respectively. We thus assume zero-inflated conditional Gaussianity without considering the effect of truncation from below at $0$. 

To estimate the DAG structure, we first use the procedure of \citet{mcd19} to identify the connected components in an estimated undirected graph with the same sparsity as the graph therein. We then estimate the directed edges in each connected component using our method. This procedure is justified by the fact that theoretically the connected components for the underlying true undirected and directed graphs coincide. Thus, we generally recommend this strategy in practice, as the connected components can be much more efficiently obtained from the undirected graph. 

We use the $(p,\mu,\sigma^2)$-parametrization as it is more flexible than the $(\alpha,\beta,k)$, and extra fixed covariates and controlling factors can be easily added, since fitting the conditionals only involves linear and logistic regressions. As discussed in Section~\ref{Numerical Experiments}, the $(p,\mu,\sigma^2)$ is also more robust than $(\alpha,\beta,k)$. We use polynomials up to degree three and data-adaptively choose the optimal degree by BIC.

To estimate the DAG, we use the greedy search (GDS) algorithm, which showed promising performance in the simulations of Section~\ref{Numerical Experiments}. 
We also use the stability selection procedure of Section \ref{Stability Selection}, with the goal of controlling the FDR at 10\% for each connected component. For  smaller connected components, if controlling the FDR at 10\% is not possible, we pick the sparsest graph that maximally maintains the connectivity.
Finally, we restrict the node in-degrees to five, in order to both speed up estimation and to constrain the search space. This constraint is motivated by the fact that in gene regulatory networks, each gene is only expected to be regulated by a small number of other genes \citep{alb05}. In contrast, since genetic networks often involve hub genes that regulate many others, we do not restrict the out-degree.


The estimated undirected graph using the procedure of \citet{mcd19} is plotted in the upper half of Figure~\ref{plot_applied}, with edge width and saturation representing the edge strength. In the lower half of the figure, we plot the estimated DAG using our method; the estimate with stability selection and FDR control is shown on the left and the one without stability selection is on the right. Examining the estimates, we find that CD3E is a hub node with degree five in both estimated DAGs, while it has four neighbors in the estimated undirected graph. On the other hand, in the estimate with stability selection, three genes in the largest connected components, namely CD28, JAK1 and STATS5B, are isolated. This is reasonable as they are each associated with only one weak edge in the undirected graph. Moreover, the undirected and directed graph estimates have different thresholds for determining whether an edge is present. For the other nodes, the estimated DAG structures are very similar to the undirected graph estimate. 

\begin{figure}
\centering
\includegraphics[width=1\textwidth]{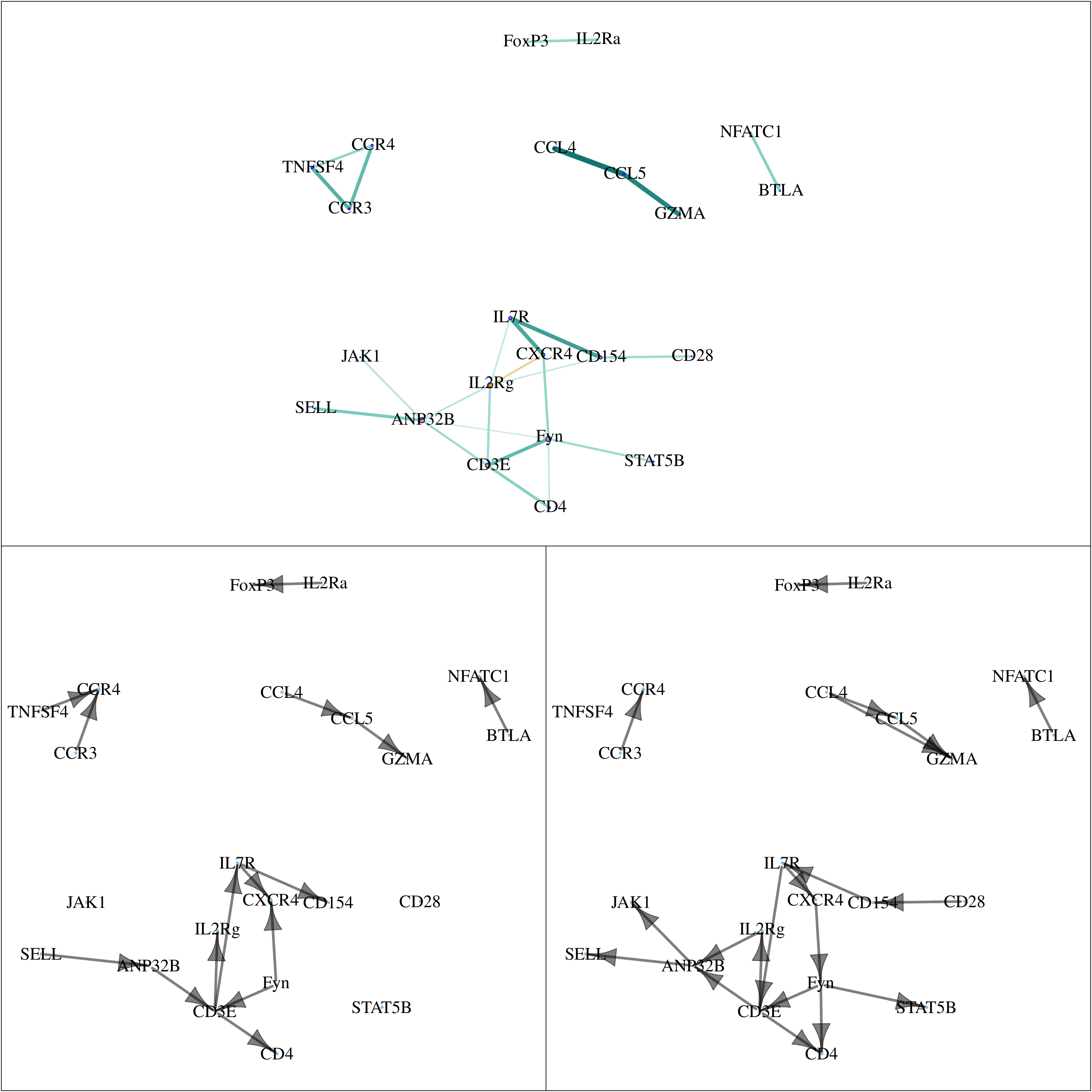}
\caption{Upper: Graph estimated for T helper cell data using undirected zero-inflated graphical models; similar to Figure 6 from \citet{mcd19}
Lower: Directed graph estimated for T helper cell data with (left) and without (right) stability selection and FDR control.}\label{plot_applied}
\vspace{0.2in}
\end{figure}

\section{Discussion}\label{Discussion}
Motivated by the recent advent of single-cell RNA-seq data, in this paper we develop new methods for learning DAGs from zero-inflated data. Our procedures take advantage of two key features of single-cell RNA-seq data, namely, the zero-inflated nature of the data, and the large number of observations from individual samples. 

Our key contribution is establishing identifiability of DAGs from observational zero-inflated data. Specifically, we prove that the exact DAG can be recovered from the joint distribution under reasonable assumptions. We also show that in the most general case, the distributions from which the DAGs are not identifiable only form a small subset, which we prove to be empty in the bivariate and trivariate cases. While our proof uses a very general result on DAGs from \citet{pet14} as its first step, our models do not fit into the framework in that paper; we thus take a different approach that considers the zero-inflation and polynomial structures directly.

Our approach is based on factorizing the joint distribution into zero-inflated conditional Gaussian distributions with parameters polynomial in the parents and their indicators of having nonzero values. We present models in terms of two  parametrizations, one called $(\alpha,\beta,k)$ that is linked to the undirected graphs studied in \citet{mcd19}, and the other called $(p,\mu,\sigma^2)$ that directly models the conditional moments. Both approaches have computational appeal. In particular, the $(\alpha,\beta,k)$-parametrization leads to convex loss functions in the parameters to be estimated, while the $(p,\mu,\sigma^2)$-parametrization offers the additional benefit of allowing one to utilize standard software for logistic and linear regression. We combine these models with two state-of-the-art estimation procedures, namely greedy DAG search (GDS) and exhaustive search with dynamic programming. We also validate our identifiability theory using extensive numerical studies. These experiments indicate that the exhaustive search algorithm is effective in correctly identifying DAGs with small number of nodes. For moderate to large DAGs, the GDS algorithm offers a reasonable alternative, with performance comparable to the exhaustive search when the sample size is large enough. 

Our work opens the door to multiple future research directions and extensions. The first is to prove our conjecture that the sets of distributions from which the DAG is not identifiable are empty also for graphs with more than 3 nodes. The second direction of future research is proving the consistency and investigating finite sample properties of the proposed estimation procedures. Finally, an interesting extension of our model would be to consider zero-inflated distributions under a truncation to the nonnegative orthant $\mathbb{R}_+^m$, which would be of interest for nonnegative \emph{omics} data. The main challenge in this case would be the normalizing constant as a function of the parents in the conditional distributions,  since it would not have a closed-form expression. While this may be resolved by generalizing the \emph{score matching} loss \citep{hyv05,hyv07,lyu09,yu19} 
to data of mixed type, the additional difficulty would lie in proving identifiability and addressing estimation from observational data.

\section*{Acknowledgement}
The authors gratefully acknowledge grant DMS/NIGMS-1561814 from the National Science Foundation (NSF) and grant R01-GM114029 from the National Institutes of Health (NIH).

The authors thank Jonas Peters and Steffen Lauritzen for their input on the theoretical results of this paper, and Andrew McDavid for helpful comments on  the application to T helper cell data.  MD and SY completed part of this work in 
the Department of Mathematical Sciences, University of Copenhagen.


\bibliographystyle{biom} 
\bibliography{Zero_paper}

\renewcommand{\thesection}{Appendix}

 \section{Proofs}\label{appendix_proofs}
 In this appendix we present proofs for the theorems and corollaries in the paper.

We first prove the following lemma that states that if two sums of distinct (ignoring the multiplicative constant) exponentials of polynomials in $\boldsymbol{y}\in\mathbb{R}^m$ agree almost everywhere in $\mathbb{R}^m$, then they must have the same number of terms and there must be a 1-1 correspondence between the terms.
\begin{lemma}\label{lem_match}
Let the number of variable be $m\geq 1$ and the degree be $p\geq 1$. Let $\mathcal{D}\equiv\{\boldsymbol{d}\in\mathbb{Z}_{\geq 0}^m:1\leq \sum_{j=1}^m d_j\leq p\}$ be the set of nonnegative integer-valued $m$-vectors with $\ell_1$ norm $\in[1,p]$.
Given a vector $\boldsymbol{a}\in\mathbb{R}^{|\mathcal{D}|}$ indexed by $\boldsymbol{d}\in\mathcal{D}$ (i.e.~$a_{\boldsymbol{d}}\in\mathbb{R}$ for all $\boldsymbol{d}\in\mathcal{D}$), define 
\[f^{(m)}(\boldsymbol{y};\boldsymbol{a})\equiv \exp\left(\sum_{\boldsymbol{d}\in\mathcal{D}}a_{\boldsymbol{d}}\prod_{j=1}y_i^{d_j}\right),\]
the exponential of the corresponding polynomial of degree $\leq p$ in $\boldsymbol{y}\in\mathbb{R}^m$. Note that $f^{(m)}$ does not have a constant term, and has degrees $\boldsymbol{d}\in\mathcal{D}$ and coefficients $\boldsymbol{a}$. 

Suppose we have
\begin{equation}\label{eq_sumf}
\sum_{i=1}^{N_a}a_0^i f^{(m)}(\boldsymbol{y};\boldsymbol{a}^i)=\sum_{i=1}^{N_b}b_0^i f^{(m)}(\boldsymbol{y};\boldsymbol{b}^i)
\end{equation}
for almost every $\boldsymbol{y}\equiv(y_1,\dots,y_m)\in\mathbb{R}^m$ with respect to the Lebesgue measure, where $N_a\geq 0$, $N_b\geq 0$, $\{\boldsymbol{a}^i\}_{i=1}^{N_a}$ are $N_a$ distinct vectors in $\mathbb{R}^{|\mathcal{D}|}$, $\{\boldsymbol{b}^i\}_{i=1}^{N_b}$ are $N_b$ distinct vectors in $\mathbb{R}^{|\mathcal{D}|}$ (otherwise just combine the coefficients), and $a_0^i,b_0^i\in\mathbb{R}\backslash\{0\}$ for all $i$. In other words, both sides of (\ref{eq_sumf}) are a sum of distinct exponentials of polynomials.

Then we must have $N_a=N_b$ and there is a permutation $\pi$ of $\{1,\dots,N_a\}$ such that $\boldsymbol{a}^i=\boldsymbol{b}^{\pi(i)}$ and $a_0^i=b_0^{\pi(i)}$, i.e.~there is a 1-1 correspondence between the summands on both sides of (\ref{eq_sumf}).
\end{lemma}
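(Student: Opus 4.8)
The plan is to recognize the statement as the \emph{linear independence} of the functions $\boldsymbol{y}\mapsto f^{(m)}(\boldsymbol{y};\boldsymbol{c})=\exp(P_{\boldsymbol{c}}(\boldsymbol{y}))$ as $\boldsymbol{c}$ ranges over distinct exponent-coefficient vectors, where $P_{\boldsymbol{c}}(\boldsymbol{y})=\sum_{\boldsymbol{d}\in\mathcal{D}}c_{\boldsymbol{d}}\prod_j y_j^{d_j}$. First I would collect all exponent vectors appearing on either side of (\ref{eq_sumf}) into one finite set $\{\boldsymbol{c}^1,\dots,\boldsymbol{c}^K\}$ of distinct vectors and rewrite the identity as $\sum_{k=1}^K\gamma_k\exp(P_{\boldsymbol{c}^k}(\boldsymbol{y}))=0$ for a.e.\ $\boldsymbol{y}$, where $\gamma_k$ equals $a_0^i$, $-b_0^j$, or $a_0^i-b_0^j$ according to whether $\boldsymbol{c}^k$ appears only on the left, only on the right, or on both (each $\boldsymbol{c}^k$ occurs at most once per side, since the $\boldsymbol{a}^i$ are distinct among themselves and likewise the $\boldsymbol{b}^j$). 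Once I show every $\gamma_k=0$, the desired $1$--$1$ correspondence follows immediately: a vector appearing on only one side would force a nonzero $\gamma_k$, and a vector appearing on both forces $a_0^i=b_0^j$. Since a finite combination of exponentials of polynomials is real-analytic on the connected set $\mathbb{R}^m$, and a non-identically-zero real-analytic function has a zero set of Lebesgue measure zero, the a.e.\ identity in fact holds at \emph{every} $\boldsymbol{y}\in\mathbb{R}^m$.

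The heart of the argument is the one-dimensional case $m=1$, which I would prove by a growth-rate domination argument. For distinct univariate polynomials $p$ and $q$, the difference $p-q$ is a nonzero polynomial, so $p(t)-q(t)\to+\infty$ or $-\infty$ as $t\to+\infty$ according to the sign of its leading coefficient; this defines a strict total order $\succ$ on any finite family of distinct polynomials. Assuming (after discarding vanishing coefficients) that $\sum_k\gamma_k\exp(p_k(t))\equiv0$ with all $\gamma_k\neq0$, I would let $p_{k_0}$ be the $\succ$-maximal polynomial, divide through by $\exp(p_{k_0}(t))$, and let $t\to+\infty$; every term with $k\neq k_0$ satisfies $\exp(p_k(t)-p_{k_0}(t))\to0$, leaving $\gamma_{k_0}=0$, a contradiction. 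Hence all $\gamma_k=0$.

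To pass from $m=1$ to general $m$, I would restrict the everywhere-identity to a generic line $\boldsymbol{y}=t\boldsymbol{v}$. The restricted exponents $p_k(t):=P_{\boldsymbol{c}^k}(t\boldsymbol{v})$ are univariate polynomials, and the key point is that they remain \emph{distinct} for a suitable $\boldsymbol{v}$. For each pair $k\neq l$ the difference $Q:=P_{\boldsymbol{c}^k}-P_{\boldsymbol{c}^l}$ is a nonzero polynomial with no constant term; writing $Q=\sum_{d\geq1}Q_d$ in homogeneous parts, $Q(t\boldsymbol{v})=\sum_d Q_d(\boldsymbol{v})t^d$ vanishes identically in $t$ only when $\boldsymbol{v}$ lies in the proper algebraic set $\{Q_{d_0}=0\}$ for some $d_0$ with $Q_{d_0}\not\equiv0$, a set of Lebesgue measure zero. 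Choosing $\boldsymbol{v}$ outside the finite union of these measure-zero sets makes all $p_k$ pairwise distinct, so the $m=1$ result yields $\gamma_k=0$ for every $k$, completing the proof.

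The step requiring the most care is this reduction to one variable: one must guarantee that distinct multivariate exponent vectors do not collapse to the same univariate polynomial along the chosen line, which is exactly what the genericity (measure-zero exclusion) argument secures. The complementary subtlety is the upgrade from an almost-everywhere identity to a pointwise one, handled by real-analyticity. Both are standard once isolated, so I expect no genuine obstruction beyond this bookkeeping.
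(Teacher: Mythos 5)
Your proof is correct, but it reaches the multivariate case by a genuinely different route than the paper. Both arguments share the same one-dimensional kernel (a total order on distinct polynomials via the sign of the leading coefficient of differences, then dividing by the $\succ$-maximal exponential and letting $t\to+\infty$), but you streamline even that step by first moving everything to one side and reformulating the lemma as linear independence of $\{\exp(P_{\boldsymbol{c}^k})\}$ over distinct constant-free polynomials: this kills all coefficients in one stroke, whereas the paper keeps the two sides separate and performs an iterative pairwise matching, including a separate argument to rule out $N_a\neq N_b$. The real divergence is in the passage to $m>1$: the paper inducts on $m$, factoring each exponential as $f^{(1)}(y_1;\cdot)\,f^{(m-1)}(\boldsymbol{y}_{-1};\cdot)$, grouping summands with the same $y_1$-part, invoking Okamoto's lemma to get distinctness of the grouped polynomial coefficients for almost every $\boldsymbol{y}_{-1}$, and then applying both the $m=1$ case (pointwise in $\boldsymbol{y}_{-1}$) and the inductive hypothesis. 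You instead restrict the (everywhere) identity to a generic line $\boldsymbol{y}=t\boldsymbol{v}$ through the origin, showing via homogeneous decomposition that for each pair $k\neq l$ the directions $\boldsymbol{v}$ collapsing $P_{\boldsymbol{c}^k}$ and $P_{\boldsymbol{c}^l}$ to the same univariate polynomial form a proper algebraic (hence null) set, so a single good $\boldsymbol{v}$ reduces the whole problem to $m=1$ with no induction. The fact that the restricted polynomials have no constant term (so nonzero differences diverge) is exactly what makes your total order well-defined, and you correctly flag this. Your genericity step plays precisely the role Okamoto's lemma plays in the paper, but applied once to directions rather than repeatedly inside an induction; the trade-off is that the paper's inductive structure tracks the $1$--$1$ matching of summands literally as stated in the lemma, while your linear-independence formulation recovers the correspondence as an immediate corollary (a vector on only one side would force a nonzero coefficient). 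Your upgrade from almost-everywhere to everywhere via real-analyticity is also fine, though the paper's continuity argument suffices.
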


\begin{proof}[Proof of Lemma \ref{lem_match}] First note that both sides of (\ref{eq_sumf}) are continuous functions, and so is their difference, which is $0$ almost everywhere by assumption. Thus, the inverse image of the open set $\mathbb{R}\backslash\{0\}$ under the difference is also open, and must be the empty set since it has measure 0. (\ref{eq_sumf}) thus holds for all $\boldsymbol{y}\in\mathbb{R}^m$.

We prove by induction on $m$, and first show the result for $m=1$. In this case, $f^{(1)}(y_1;\boldsymbol{a})\equiv\exp(a_1y_1+\dots+a_py_1^p)$, and $\boldsymbol{a}$ is just a $p$-vector. 

First suppose $N_a\neq 0$ and $N_b\neq 0$. Observe that as $x\nearrow+\infty$, if $a_0\neq 0$, the function $a_0\exp(a_1x+\dots+a_px^p)$ goes to
\begin{enumerate}[(i)]
\item $a_0\neq 0$ if $a_1=\dots=a_p=0$, or
\item $0$ if $a_{d_{\max\neq 0}(\boldsymbol{a})}<0$ where ${d_{\max\neq 0}}(\boldsymbol{a})$ is the largest $d\in\{1,\dots,p\}$ such that $a_d\neq 0$, or
\item $+\infty$ if $a_{d_{\max\neq 0}(\boldsymbol{a})}>0$.
\end{enumerate}

Rearrange the terms on the left of (\ref{eq_sumf}) so that for each $1\leq i<j\leq N_a$ we have $(\boldsymbol{a}^i-\boldsymbol{a}^j)_{d_{\max\neq 0}(\boldsymbol{a}^i-\boldsymbol{a}^j)}>0$, and denote this total order as $\boldsymbol{a}^i>\boldsymbol{a}^j$. Rearrange the right-hand side similarly. By the assumption that $\{\boldsymbol{a}^i\}_{i=1}^{N_a}$ are distinct, $\boldsymbol{a}^i-\boldsymbol{a}^j\neq 0$, so $d_{\max\neq 0}(\boldsymbol{a}^i-\boldsymbol{a}^j)$ exists and this rearrangement is possible. Now dividing both sides of (\ref{eq_sumf}) by $f^{(1)}(y_1;\boldsymbol{a}^1)=\exp(a_1^1y_1+\cdots+a_{p}^1y_1^p)$ we have 
\begin{equation}\label{1sf_sf}
a_0^1+\sum_{i=2}^{N_a}a_0^if^{(1)}(y_1;\boldsymbol{a}^i-\boldsymbol{a}^1)=\sum_{i=1}^{N_b}b_0^if^{(1)}(y_1;\boldsymbol{b}^i-\boldsymbol{a}^1).
\end{equation} 

Since $a_0^1\neq 0$, and by the unique maximality of $\boldsymbol{a}^1$, as $y_1\nearrow+\infty$, all terms in the summation on the left go to $0$ (case (ii)). Thus, the right-hand side necessarily also goes to $a_0^1\neq 0$, landing us in case (i) for at least one (and only one because $\boldsymbol{b}^i$ are unique) term on the right, i.e.~$\boldsymbol{b}^i-\boldsymbol{a}^1=\boldsymbol{0}$. (A nonzero finite limit cannot come from a sum of terms that go to $+\infty$ with positive and negative weights, since they must grow at different rates by uniqueness of $\boldsymbol{b}^i-\boldsymbol{a}^1$.) Since summands on both sides are sorted, we must have $\boldsymbol{b}^1=\boldsymbol{a}^1$.

Then (\ref{1sf_sf}) becomes $a_0^1-b_0^1+\sum_{i=2}^{N_a}a_0^i f^{(1)}(y_1;\boldsymbol{a}^i-\boldsymbol{a}^1)=\sum_{i=2}^{N_b}b_0^i f^{(1)}(y_1;\boldsymbol{b}^i-\boldsymbol{a}^1)$. If $a_0^1\neq b_0^1$, by the same reasoning there exists another $i\in\{2,\dots,N_b\}$ such that $\boldsymbol{b}^i-\boldsymbol{a}^1=\boldsymbol{0}$, violating uniqueness of $\{\boldsymbol{b}^i\}_{i=1}^{N_b}$. Thus, $a_0^1=b_0^1$ and $\boldsymbol{a}^1=\boldsymbol{b}^1$, and we have reduced the number of summands on both sides of (\ref{1sf_sf}) by 1 to 
\[\sum_{i=2}^{N_a}a_0^if^{(1)}(y_1;\boldsymbol{a}^i-\boldsymbol{a}^1)=\sum_{i=2}^{N_b}b_0^if^{(1)}(y_1;\boldsymbol{b}^i-\boldsymbol{a}^1).\]

Continuing this process by each time dividing both sides by $f^{(1)}(y_1;\boldsymbol{a}^{j}-\boldsymbol{a}^{j-1})$, we would have matched $\min\{N_a,N_b\}$ pairs of coefficients between the $a$ and the $b$ groups. If $N_a\neq N_b$, assume $N_a>N_b$ without loss of generality, then 
\[\sum_{i=N_b+1}^{N_1}a_0^if^{(1)}(y_1;\boldsymbol{a}^i-\boldsymbol{a}^{N_b})=\mathrm{const}.\]

Here the right-hand side is a constant that could be nonzero, because the argument for $a_0^1=b_0^1$ in our first elimination step does not apply here. Dividing both sides by $f^{(1)}(y_1;\boldsymbol{a}^{N_b+1}-\boldsymbol{a}^{N_b})$, we have $a_0^{N_b+1}+\sum_{i=N_b+2}^{N_1}a_0^if^{(1)}(y_1;\boldsymbol{a}^i-\boldsymbol{a}^{N_b+1})=f^{(1)}(y_1;\boldsymbol{a}^{N_b}-\boldsymbol{a}^{N_b+1})$. By maximality of $\boldsymbol{a}^{N_b+1}$ among $\boldsymbol{a}^{N_b+1},\dots,\boldsymbol{a}^{N_a}$, the left-hand side goes to $a_0^{N_b+1}\neq 0$ as $y_1\nearrow+\infty$, while since $\boldsymbol{a}^{N_b}>\boldsymbol{a}^{N_b+1}$, the right-hand side goes to $+\infty$, a contradiction. Thus, $N_a=N_b$, $a_0^i=b_0^i$ and $\boldsymbol{a}^i=\boldsymbol{b}^i$ for $i=1,\dots,N_a$, proving the $m=1$ case when $N_a\neq 0$ and $N_b\neq 0$.

Now consider the case where one of $N_a$ and $N_b$ is 0; assume without loss of generality that $N_b=0$, then by division by $f^{(1)}(y_1;\boldsymbol{a}^1)$, the right-hand side is constant $0$, while the left-hand side goes to $a_0^1\neq 0$ unless $N_a=0$, so $N_a=N_b=0.$

Now suppose the result holds for some $m-1\geq 1$, and suppose either $N_a\neq 0$ or  $N_b\neq 0$, otherwise there is nothing to prove. We denote $\boldsymbol{a}_1$ as the subvector of $\boldsymbol{a}$ corresponding to $\boldsymbol{d}$ with $d_1\geq 1$, i.e.~$\{a_{\boldsymbol{d}}\}_{\boldsymbol{d}\in\mathcal{D},\,d_1\geq 1}$, and $\boldsymbol{a}_{-1}$ as that of $\boldsymbol{a}$ with $d_1=0$. Separating out the terms involving $y_1$,
\begin{align*}
f^{(m)}(\boldsymbol{y};\boldsymbol{a}^i)&=\exp\left\{
\sum_{d=1}^p\left(\sum_{\boldsymbol{d}\in\mathcal{D},\,d_1=d}a_{\boldsymbol{d}}^i\prod_{j=2}^my_j^{d_j}\right)y_1^{d}\right\}\exp\left(\sum_{\boldsymbol{d}\in\mathcal{D},\,d_1=0}a_{\boldsymbol{d}}^i\prod_{j=2}^my_j^{d_j}\right)\\
=&\,f^{(1)}\left(y_1;\boldsymbol{a}_{1*}^{i}(\boldsymbol{y}_{-1})\right)f^{(m-1)}(\boldsymbol{y}_{-1};\boldsymbol{a}_{-1}^i),
\end{align*}
where $\boldsymbol{a}_{1*}^i(\boldsymbol{y}_{-1}):\mathbb{R}^{m-1}\to\mathbb{R}^p$ is a vector-valued function in $\boldsymbol{y}_{-1}$, with $d$-th coordinate a polynomial $\sum_{\boldsymbol{d}\in\mathcal{D},\,d_1=d}a_{\boldsymbol{d}}^i\prod_{j=2}^my_j^{d_j}$, and coefficients corresponding to $\boldsymbol{a}_1^i$. Note that there is a one-to-one correspondence between such a function $\boldsymbol{a}_{1*}^i$ and vector $\boldsymbol{a}_1^i$.  So we can rewrite (\ref{eq_sumf}) as
\[\sum_{i=1}^{N_a}a_0^if^{(1)}(y_1;\boldsymbol{a}_{1*}^{i}(\boldsymbol{y}_{-1}))f^{(m-1)}(\boldsymbol{y}_{-1};\boldsymbol{a}_{-1}^i)=\sum_{i=1}^{N_b}b_0^if^{(1)}(y_1;\boldsymbol{b}_{1*}^{i}(\boldsymbol{y}_{-1}))f^{(m-1)}(\boldsymbol{y}_{-1};\boldsymbol{b}_{-1}^i)\]
for all $\boldsymbol{y}\in\mathbb{R}^{m}$. Then collecting terms with the same $f^{(1)}$ (same $\boldsymbol{a}_{1}^i$ ($\boldsymbol{a}_{1*}^i$) or $\boldsymbol{b}_{1}^i$ ($\boldsymbol{b}_{1*}^i$)),
\begin{equation}\label{eq_split_ff}
\sum_{\ell=1}^C
f^{(1)}(y_1;\boldsymbol{c}_{1*}^{\ell}(\boldsymbol{y}_{-1}))
\left\{\sum_{j=1}^{n_{\ell}^a}a_0^{k^a_{\ell j}}f^{(m-1)}(\boldsymbol{y}_{-1};\boldsymbol{a}_{-1}^{k^a_{\ell j}})+\sum_{j=1}^{n_{\ell}^b}b_0^{k^b_{\ell j}}f^{(m-1)}(\boldsymbol{y}_{-1};\boldsymbol{b}_{-1}^{k^b_{\ell j}})\right\}=0,
\end{equation}
where $C>0$, each $\boldsymbol{c}_{1}^{\ell}$ (coefficients for $\boldsymbol{c}_{1*}^{\ell}$) is some $\boldsymbol{a}_{1}^{i}$ or $\boldsymbol{b}_{1}^{i}$, and $\{\boldsymbol{c}_{1}^{\ell}\}_{\ell=1}^C$ are distinct. Here, let $\{k_{11}^a,\dots,k_{1,n_{1}^a}^a,\dots,k_{C1}^a,\dots,k_{Cn_{C}^a}^a\}$ be a permutation of $\{1,\dots,N_a\}$, and $\{k_{11}^b,\dots,k_{1,n_{1}^b}^b,\dots,$ \\ $k_{C1}^b,\dots,k_{Cn_{C}^b}^b\}$ a permutation of $\{1,\dots,N_b\}$. 

Since $\{\boldsymbol{c}_{1}^{\ell}\}_{\ell=1}^C$ are distinct, $\{\boldsymbol{c}_{1*}^{\ell}\}_{\ell=1}^C$ are distinct finite polynomials in $\boldsymbol{y}_{-1}\in\mathbb{R}^{m-1}$. For each pair of such distinct polynomials, the lemma of \citet{oka73} implies that they only agree at a Lebesgue-null subset of $\mathbb{R}^{n-1}$, so all polynomials are distinct except on a null set. Thus, for almost every fixed $\boldsymbol{y}_{-1}\in\mathbb{R}^{m-1}$, the left-hand side of (\ref{eq_split_ff}) is a sum of $C>0$ distinct $f^{(1)}$'s in $y_1$ multiplied by constant weights depending on $\boldsymbol{y}_{-1}$. But the right-hand side is a sum of $0$ terms, so by the result for $m=1$ we necessarily have 
\begin{equation}\label{eq_split_2:m}
\sum_{j=1}^{n_{\ell}^a}a_0^{k^a_{\ell j}}f^{(m-1)}(\boldsymbol{y}_{-1};\boldsymbol{a}_{-1}^{k^a_{\ell j}})=\sum_{j=1}^{n_{\ell}^b}-b_0^{k^b_{\ell j}}f^{(m-1)}(\boldsymbol{y}_{-1};\boldsymbol{b}_{-1}^{k^b_{\ell j}})
\end{equation} for all $\ell=1,\dots,C$ for almost every $\boldsymbol{y}_{-1}$. Fixing $\ell\in\{1,\dots,C\}$, for any $1\leq j_1<j_2\leq n_\ell^a$, $\boldsymbol{a}^{k_{\ell j_1}^a}\neq \boldsymbol{a}^{k_{\ell j_2}^a}$  and $\boldsymbol{a}_1^{k_{\ell j_1}^a}=\boldsymbol{a}_1^{k_{\ell j_2}^a}$ implies $\boldsymbol{a}_{-1}^{k_{\ell j_1}^a}\neq \boldsymbol{a}_{-1}^{k_{\ell j_2}^a}$, and similarly for $\boldsymbol{b}$. Thus, each term on the left-hand side of (\ref{eq_split_2:m}) has its unique coefficients, and similarly for the right-hand side. Since (\ref{eq_split_2:m}) holds for almost every $\boldsymbol{y}_{-1}$, by the result for $m-1$ variables, we must have $n_\ell^a=n_\ell^b$ and each $a_0^{k_{\ell j}^a}=b_0^{k_{\ell\pi(j)}^b}$ and $\boldsymbol{a}_{-1}^{k_{\ell j}^a}=\boldsymbol{b}_{-1}^{k_{\ell\pi(j)}^b}$ for some permutation $\pi$ of $\{1,\dots,n_\ell^a\}$, which in turn implies $\boldsymbol{a}^{k_{\ell j}^a}=\boldsymbol{b}^{k_{\ell\pi(j)}^b}$ for all $j=1,\dots,n_{\ell}^a$ by construction of the groups $\ell=1,\cdots,C$. Since this holds for all $\ell$, $N_a=\sum_{\ell=1}^Cn_\ell^a=\sum_{\ell=1}^Cn_\ell^b=N_b$, and we have thus again matched each $\boldsymbol{a}^\ell$ with a $\boldsymbol{b}^{\ell}$ as well as the corresponding $a_0$'s with $b_0$'s. This ends the proof for $m$, and the entire proof.
\end{proof}

\begin{proof}[Proof of Theorem \ref{thm_full_identifiability}]
Suppose $\mathcal{G}$ and $\mathcal{G}'$ have the same node set $\mathcal{V}$ and are Markov equivalent, otherwise the distributions represented by them are trivially not identical.

Now suppose $p(\boldsymbol{Y})$ is Markov and faithful with respect to $\mathcal{G}$ and $\mathcal{G}'$, and factorize w.r.t.~both graphs with \emph{strong Hurdle polynomial} parameters. Then by Proposition \ref{prop_peters}, there exist $V_1$ and $V_2$ such that $V_1\to  V_2$ in $\mathcal{G}$, $V_2\to V_1$ in $\mathcal{G}'$ and $\mathcal{P}\equiv\mathrm{pa}_{\mathcal{G}}(V_2)\backslash\{V_1\}=\mathrm{pa}_{\mathcal{G}'}(V_1)\backslash\{V_2\}$. Following the arguments in the proof of Proposition \ref{prop_peters} in \citet{pet14}, recursively marginalizing out nodes without children but having the same parents in both graphs, we eventually obtain structures as follows, where $A$ and $B$ are some unknown node sets and $V_2$ does not have any children in Graph $\mathcal{G}$.

\begin{alignat*}{3}
&\mathbf{V}_{A}\hspace{0.13in} \mathbf{V}_\mathcal{P} \quad \quad && \mathbf{V}_A\hspace{0.13in} \mathbf{V}_\mathcal{P} \\
&\-\hspace{0.05in}\searrow  \diagup \searrow \quad\quad && \-\hspace{0.05in}\nwarrow  \,\,\swarrow \diagdown  \\
&\-\hskip0.2in V_1 \to  V_2 \quad \quad && \-\hskip0.2in V_1 \leftarrow  V_2  \\
&\-\hskip0.2in \downarrow\quad \quad && \-\hskip0.2in \downarrow \\
&\-\hskip0.2in \mathbf{V}_B  \quad \quad &&\-\hskip0.2in \mathbf{V}_B\\
&\-\hskip0.06in \text{Graph }\mathcal{G}  \quad \quad &&\-\hskip0.06in \text{Graph }\mathcal{G}'
\end{alignat*}

We consider the $(\alpha,\beta,k)$-parametrization only, since the result for the $(p,\mu,\sigma^2)$ naturally follows from their relationship (\ref{eq_abk_pms_relationship}). For notational simplicity write $V_1$ and $V_2$ as nodes 1 and 2. Suppose after marginalization above we are left with nodes $\mathcal{V}_0\subseteq\mathcal{V}$ which include $1$, $2$, $\mathbf{V}_A$, $\mathbf{V}_B$ and $\mathbf{V}_{\mathcal{P}}$ illustrated above. Now let $Y_U=0$ for all $U\in\mathcal{V}_0\backslash\{2\}$, and let $Y_2\neq 0$.
Then the joint distribution $p(Y_2=y_2\neq 0,\boldsymbol{y}_{\mathcal{V}_0}=\boldsymbol{0})$ using $\mathcal{G}$ is proportional to
\begin{align*}
&\,\left.\prod_{V\in\mathcal{V}_0}\frac{\exp\{\alpha_V(\boldsymbol{y}_{\mathrm{pa}_{\mathcal{G}}(V)})\mathds{1}_{y_v}+\beta_v(\boldsymbol{y}_{\mathrm{pa}_{\mathcal{G}}(V)})y_V-k_Vy_V^2/2\}}{\sqrt{2\pi/k_V}\exp\{\alpha_V(\boldsymbol{y}_{\mathrm{pa}_{\mathcal{G}}(V)})+\beta_V(\boldsymbol{y}_{\mathrm{pa}_{\mathcal{G}}(V)})^2/(2k_V)\}+1}\right|_{y_2\neq 0,\boldsymbol{y}_{\mathcal{V}_0\backslash\{2\}}=\boldsymbol{0}}\\
\propto&\,\exp\{\beta_2(\boldsymbol{0})y_2-k_2y_2^2/2\}
\end{align*}
since 2 does not have any child in $\mathcal{G}$. But using $\mathcal{G}'$, the same joint distribution is proportional to
\begin{multline*}
\left.\prod_{V\in\mathcal{V}_0}\frac{\exp\{\alpha'_V(\boldsymbol{y}_{\mathrm{pa}_{\mathcal{G}'}(V)})\mathds{1}_{y_V}+\beta'_V(\boldsymbol{y}_{\mathrm{pa}_{\mathcal{G}'}(V)})y_V-k'_Vy_V^2/2\}}{\sqrt{2\pi/k'_V}\exp\{\alpha'_V(\boldsymbol{y}_{\mathrm{pa}_{\mathcal{G}'}(V)})+\beta'_V(\boldsymbol{y}_{\mathrm{pa}_{\mathcal{G}'}(V)})^2/(2k'_V)\}+1}\right|_{y_2\neq 0,\boldsymbol{y}_{\mathcal{V}_0\backslash\{2\}}=\boldsymbol{0}}\\
\propto\exp\{\beta'_2(\boldsymbol{0})y_2-k'_2y_2^2/2\}\\
\times \prod_{U\in \mathcal{P}\cup\{1\}, \,2\in\mathrm{pa}_{\mathcal{G}'}(U)}\frac{1}{\sqrt{2\pi/k'_U}\exp\{\alpha'_U(y_2,\boldsymbol{0})+\beta'_U(y_2,\boldsymbol{0})^2/(2k'_U)\}+1}\
\end{multline*}
where in the case where $\mathrm{pa}_{\mathcal{G}'}(2)=\varnothing$ replace $\alpha'_2(\boldsymbol{0})$ and $\beta'_2(\boldsymbol{0})$ by constants $\alpha'_2$ and $\beta'_2$, and $\alpha_U'(y_2,\boldsymbol{0})$ and $\beta_U'(y_2,\boldsymbol{0})$ denote setting all parents other than 2 in the Hurdle polynomials $\alpha_U'$ and $\beta_U'$ to $\boldsymbol{0}$. Since the two joint distributions derived from both graphs must be proportional to each other, we get for $y_2\neq0$
\begin{multline}\label{eq_cannot_match}
\exp\left[y_2\{\beta'_2(\boldsymbol{0})-\beta_2(\boldsymbol{0})\}-(k'_2-k_2)y_2^2/2\right]\\
\propto\prod_{U\in \mathcal{P}\cup\{1\}, \,2\in\mathrm{pa}_{\mathcal{G}'}(U)}\left[\sqrt{2\pi/k'_U}\exp\left\{\alpha'_U(y_2,\boldsymbol{0})+\beta'_U(y_2,\boldsymbol{0})^2/(2k'_U)\right\}+1\right].
\end{multline}
Note that $2\in\mathrm{pa}_{\mathcal{G}'}(1)$ and thus the product on the right of (\ref{eq_cannot_match}) has at least one term. Thus, supposing that for at least one of $U\in\mathcal{P}\cup\{1\}$ such that $2\in\mathrm{pa}_{\mathcal{G}'}(U)$, $\alpha'_U(Y_2,\boldsymbol{0})+\beta'_U(Y_2,\boldsymbol{0})^2/(2k_{U}')$ is nonconstant in $Y_2\neq 0$, then the right-hand side of (\ref{eq_cannot_match}) can be expanded into a sum of at least two exponentials of polynomials in $y_2$ (including the constant 1 as a degenerated exponential polynomial), while the left-hand side is a single polynomial in $y_2$. This is a contradiction according to Lemma \ref{lem_match}, and thus the assumption of having \emph{strong Hurdle polynomials} as the parameters in the Hurdle conditionals implies that $p(\boldsymbol{Y})$ cannot be represented by both $\mathcal{G}$ and $\mathcal{G}'$, which ends the proof.

\end{proof}

\begin{proof}[Proof of Theorem \ref{id_full}]

As in the proof of Theorem \ref{thm_full_identifiability} using Proposition \ref{prop_peters}, under the assumptions there exist $V_1$ and $V_2$ such that $\mathcal{P}\equiv\mathrm{pa}_{\mathcal{G}}(V_2)\backslash\{V_1\}=\mathrm{pa}_{\mathcal{G}'}(V_1)\backslash\{V_2\}$ with $V_1\to V_2$ in $\mathcal{G}$ and $V_2\to V_1$ in $\mathcal{G}'$. Following the arguments in the proof of Proposition \ref{prop_peters} in \citet{pet14}, recursively marginalizing out nodes without children but having the same parents in both graphs, we again obtain structures as follows.

\begin{alignat*}{3}
&\mathbf{V}_{A}\hspace{0.13in} \mathbf{V}_\mathcal{P} \quad \quad && \mathbf{V}_A\hspace{0.13in} \mathbf{V}_\mathcal{P} \\
&\-\hspace{0.05in}\searrow  \diagup \searrow \quad\quad && \-\hspace{0.05in}\nwarrow  \,\,\swarrow \diagdown  \\
&\-\hskip0.2in V_1 \to  V_2 \quad \quad && \-\hskip0.2in V_1 \leftarrow  V_2  \\
&\-\hskip0.2in \downarrow\quad \quad && \-\hskip0.2in \downarrow \\
&\-\hskip0.2in \mathbf{V}_B  \quad \quad &&\-\hskip0.2in \mathbf{V}_B\\
&\-\hskip0.06in \text{Graph }\mathcal{G}  \quad \quad &&\-\hskip0.06in \text{Graph }\mathcal{G}'
\end{alignat*}


To ease the notation assume we again write $V_1=1$ and $V_2=2$.
Note that the distribution of each node conditional on some other nodes is the sum of a point mass at $0$ and a continuous distribution over $\mathbb{R}$, which follows by induction and the fact that the indefinite integral of a continuous density is continuous and that the sum of continuous densities is continuous. We focus on the continuous components, and wish to reach the conclusion using the factorization 
\begin{align*}
P(y_{1},y_{2}|\boldsymbol{Y}_\mathcal{P}=\boldsymbol{y}_\mathcal{P})&=P(y_{1}|\boldsymbol{Y}_\mathcal{P}=\boldsymbol{y}_P)P(y_2|y_1,\boldsymbol{Y}_\mathcal{P}=\boldsymbol{y}_\mathcal{P})\\
&=P(y_2|\boldsymbol{Y}_\mathcal{P}=\boldsymbol{y}_\mathcal{P})P(y_1|y_2,\boldsymbol{Y}_\mathcal{P}=\boldsymbol{y}_\mathcal{P}),
\end{align*}
where the second terms in both decompositions are a regular Hurdle conditional w.r.t.~$\mathcal{G}$  and $\mathcal{G}'$, respectively, and we write the first terms as 
\[P(y_1|\boldsymbol{Y}_\mathcal{P}=\boldsymbol{y}_\mathcal{P})\propto\exp\{\mathds{1}_{y_1}\delta_1+f_1(y_1)\}\]
and
\[P(y_2|\boldsymbol{Y}_\mathcal{P}=\boldsymbol{y}_\mathcal{P})\propto\exp\{\mathds{1}_{y_2}\delta_2'+f_2'(y_1)\}
\]
in terms of the conditional densities w.r.t.~$\lambda$. Here $f_1$ and $f_2'$ are continuous functions in $\mathbb{R}$ with no additive constant term, and $\delta_1$ and $\delta_2'$ are constants. 

We prove the results in the $(\alpha,\beta,k)$-parameterization only, since results for the $(p,\mu,\sigma^2)$-parameterization would follow from their relationship (\ref{eq_abk_pms_relationship}). In our model, we assumed the $\alpha$ and $\beta$ parameters for each node to be polynomial in the parents and their indicators. We also assumed that for each node, either the $\beta$ function is nonconstant in any of the parents, or $\alpha$ depends on the value of all of its parents. 

Consider a generic $\beta$ function associated with some generic parent set $\mathcal{P}\equiv \mathcal{P}_1\sqcup \{p_0\}$ with $p_0\not\in\mathcal{P}_1\neq\varnothing$ and suppose that $\beta$ is nonconstant in any of $\mathcal{P}$, and write $\beta(\boldsymbol{y}_{\mathcal{P}})$ equivalently as $\beta(y_{p_0},\boldsymbol{y}_{\mathcal{P}_1})$. Then $\beta(\boldsymbol{y}_{\mathcal{P}})$ has the form $\beta_{-1}(\boldsymbol{y}_{\mathcal{P}_1})+\beta_{0}(\boldsymbol{y}_{\mathcal{P}_1})\mathds{1}_{y_1}+\sum_{i=1}^k\beta_{i}(\boldsymbol{y}_{\mathcal{P}_1})y_1^i$, where by construction $\beta_{-1}$ through $\beta_{k}$ are (potentially constant or even zero) Hurdle polynomials in $\boldsymbol{y}_{\mathcal{P}_1}$, but there must exist some $j=0,\dots,k$ such that $\beta_{j}$ is nonzero. By the lemma of \citet{oka73}, $\beta_{j}(\boldsymbol{y}_{\mathcal{P}_1})\neq 0$ for (Lebesgue) almost every $\boldsymbol{y}_{\mathcal{P}_1}\in\mathbb{R}^{|\mathcal{P}_1|}$. Thus, $\beta(y_{p_0},\boldsymbol{y}_{\mathcal{P}_1})$ is nonconstant in $y_{p_0}$ for almost every $\boldsymbol{y}_{\mathcal{P}_1}\in\mathbb{R}^{|P_1|}$. Formally, define 
\[\mathcal{Y}_{\beta,p_0,\mathcal{P}_1}\equiv\left\{\boldsymbol{y}_{\mathcal{P}_1}\in\mathbb{R}^{|\mathcal{P}_1|}:\beta\left(y_{p_0},\boldsymbol{y}_{\mathcal{P}_1}\right)\text{ nonconstant function in }y_{p_0}\right\}.\] Thus $\mathbb{R}^{|\mathcal{P}_1|}\backslash\mathcal{Y}_{\beta,p_0,\mathcal{P}_1}$ has zero Lebesgue measure assuming $\beta$ is nonconstant in its any of $\mathcal{P}$. Hence, by a similar argument, under the assumptions of the theorem, letting
\begin{multline*}
\mathcal{Y}_{\alpha,\beta,p_0,\mathcal{P}_1}\equiv\left\{\boldsymbol{y}_{\mathcal{P}_1}\in\mathbb{R}^{|\mathcal{P}_1|}:\beta\left(y_{p_0},\boldsymbol{y}_{\mathcal{P}_1}\right)\text{ nonconstant function in }y_{p_0}\text{ or }\right.\\
\left.\alpha\left(y_{p_0},\boldsymbol{y}_{\mathcal{P}_1}\right)\text{ depends on the value of }y_{p_0}\right\},
\end{multline*}
$\mathbb{R}^{|\mathcal{P}_1|}\backslash\mathcal{Y}_{\alpha,\beta,p_0,\mathcal{P}_1}$ has zero Lebesgue measure.

Now we go back to $\mathcal{G}$ and $\mathcal{G}'$. Suppose $\mathcal{P}\neq\varnothing$ and that the Hurdle density of node $2$ conditional on $\{1\}\sqcup \mathcal{P}$ in $\mathcal{G}$ have $\alpha$ and $\beta$ parameters $\alpha_2(y_1,\boldsymbol{y}_\mathcal{P})$ and $\beta_2(y_1,\boldsymbol{y}_\mathcal{P})$, and let those for $1$ conditional on $\{2\}\sqcup \mathcal{P}$ in $\mathcal{G}'$ be $\alpha_1'(y_2,\boldsymbol{y}_\mathcal{P})$ and $\beta_1'(y_2,\boldsymbol{y}_\mathcal{P})$. We also denote $\mathcal{Y}_*\equiv\mathcal{Y}_{\alpha_2,\beta_2,1,\mathcal{P}}\cap\mathcal{Y}_{\alpha_1',\beta_1',2,\mathcal{P}}$, which by discussion above contains almost every $\boldsymbol{y}_\mathcal{P}\subset\mathbb{R}^{|\mathcal{P}|}$. 

From now on we thus fix $\boldsymbol{y}_\mathcal{P}\in\mathcal{Y}_*$ and condition on $\boldsymbol{Y}_\mathcal{P}=\boldsymbol{y}_\mathcal{P}$, and omit the dependency of the $\alpha$ and $\beta$ functions on $\mathcal{P}$, and write them as scalar functions instead notation-wise. By discussion above, $\beta_2$  becomes a nonconstant function in $y_1$ and $\beta_1'$ becomes a nonconstant function in $y_2$. Note that for $\mathcal{P}=\varnothing$, we do not fix or condition on any parent variables and $\alpha_1'$, $\alpha_2$, $\beta_1'$ and $\beta_2$ are automatically univariate functions, with $\beta_1'$ and $\beta_2$ nonconstant by assumption.

The joint density of $P(y_1,y_2|\boldsymbol{Y}_\mathcal{P}=\boldsymbol{y}_\mathcal{P})$ w.r.t.~$\lambda$ thus has two characterizations (up to normalizing constants)

\begin{multline}\label{eq_joint_dist_equal_general}
\frac{\exp\{\mathds{1}_{y_1}\delta_{1}+f_1(y_1)+\mathds{1}_{y_2}\alpha_2(y_1)+y_2\beta_2(y_1)-y_2^2k_2/2\}}{\sqrt{2\pi/k_2}\exp\{\alpha_2(y_1)+\beta_2(y_1)^2/(2k_2)\}+1}\\
\propto\frac{\exp\{\mathds{1}_{y_2}\delta_2'+f_2'(y_2)+\mathds{1}_{y_1}\alpha_1'(y_2)+y_1\beta_1'(y_2)-y_1^2k_1'/2\}}{\sqrt{2\pi/k_1'}\exp[\alpha_1'(y_2)+\{\beta_1'(y_2)\}^2/(2k_1')]+1},
\end{multline}
where $\alpha_2(y_1)$ has the form $c_{\alpha_2,-1}+c_{\alpha_2,0}\mathds{1}_{y_1}+c_{\alpha_2,1} y_1+\dots+c_{\alpha_2,k}  y_1^k$ with coefficients being polynomials in $\boldsymbol{y}_\mathcal{P}$ and their indicators (or constants if $\mathcal{P}=\varnothing$), and similarly for $\beta_2(y_1)$, $\alpha_1'(y_2)$ and $\beta_1'(y_2)$. Note that if the values of $\mathds{1}_{y_1}$ and $\mathds{1}_{y_2}$ are given, these four functions are just polynomials in $y_1$ and $y_2$, respectively. 

First condition on the event $\mathds{1}_{y_1}=\mathds{1}_{y_2}=1$ that has a positive probability. Then (\ref{eq_joint_dist_equal_general}) becomes
\begin{multline}\label{eq_joint_dist_equal_w1_general}
\frac{\exp\{f_1(y_1)+\alpha_2(y_1)+y_2\beta_2(y_1)-y_2^2k_2/2\}}{\sqrt{2\pi/k_2}\exp\{\alpha_2(y_1)+\beta_2(y_1)^2/(2k_2)\}+1}\mathds{1}_{y_1}\mathds{1}_{y_2},\\
\propto\frac{\exp\{f_2'(y_2)+\alpha_1'(y_2)+y_1\beta_1'(y_2)-y_1^2k_1'/2\}}{\sqrt{2\pi/k_1'}\exp\{\alpha_1'(y_2)+(\beta_1'(y_2))^2/(2k_1')\}+1}\mathds{1}_{y_1}\mathds{1}_{y_2},
\end{multline}
for all $(y_1,y_2)\in(\mathbb{R}\backslash\{0\})^2$. (\ref{eq_joint_dist_equal_w1_general}) has the form 
\[\frac{\exp\{f_1(y_1)+P_1(y_1,y_2)\}}{\exp\{P_2(y_1)\}+1}=\frac{\exp\{f_2'(y_2)+P_3(y_1,y_2)\}}{\exp\{P_4(y_2)\}+1},\]
where $P_1$ and $P_3$ are polynomials in $y_1$ and $y_2$ simultaneously, possibly with interactions from the $y_2\beta_2(y_1)$ and $y_1\beta'_1(y_2)$ terms, and $P_2$ and $P_4$ are univariate polynomials in $y_1$, $y_2$, respectively. By cross-multiplication, 
\begin{multline}\label{eq_cross_mult_general}
\exp\{f_1(y_1)+P_1(y_1,y_2)+P_4(y_2)\}+\exp\{f_1(y_1)+P_1(y_1,y_2)\}\\
=\exp\{f_2'(y_2)+P_3(y_1,y_2)+P_2(y_1)\}+\exp\{f_2'(y_2)+P_3(y_1,y_2)\}.
\end{multline}
Differentiating both sides of (\ref{eq_cross_mult_general}) with respect to $y_1$,
\begin{align}\label{eq_cross_mult_diff_general}
&\,\left[\frac{\partial }{\partial y_1}\left\{f_1(y_1)+P_1(y_1,y_2)\right\}\right]\exp\left\{f_1(y_1)+P_1(y_1,y_2)+P_4(y_2)\right\}\nonumber\\
&\pushright{+\exp\{f_1(y_1)+P_1(y_1,y_2))\}\nonumber}\\
=&\,\left[\frac{\partial}{\partial y_1}\left\{P_3(y_1,y_2)+P_2(y_1)\right\}\right]\exp\left\{f_2'(y_2)+P_3(y_1,y_2)+P_2(y_1)\right\}\nonumber\\
&\pushright{+\left\{\frac{\partial}{\partial y_1}P_3(y_1,y_2)\right\}\exp\left\{f_2'(y_2)+P_3(y_1,y_2)\right\}.\quad\quad}
\end{align}
Plugging (\ref{eq_cross_mult_general}) into the left-hand side of (\ref{eq_cross_mult_diff_general}),
\begin{align*}
&\,\left[\frac{\partial }{\partial y_1}\left\{f_1(y_1)+P_1(y_1,y_2)\right\}\right]\left[\exp\left\{f_2'(y_2)+P_3(y_1,y_2)+P_2(y_1)\right\}\right.\nonumber\\
&\pushright{+\left.\exp\left\{f_2'(y_2)+P_3(y_1,y_2)\right\}\right]\nonumber}\\
=&\,\left[\frac{\partial}{\partial y_1}\left\{P_3(y_1,y_2)+P_2(y_1)\right\}\right]\exp\left\{f_2'(y_2)+P_3(y_1,y_2)+P_2(y_1)\right\}\nonumber\\
&\pushright{+\left\{\frac{\partial}{\partial y_1}P_3(y_1,y_2)\right\}\exp\left\{f_2'(y_2)+P_3(y_1,y_2)\right\},}
\end{align*}
which simplifies to 
\begin{multline*}
\left[\frac{\partial }{\partial y_1}\left\{f_1(y_1)+P_1(y_1,y_2)-P_3(y_1,y_2)-P_2(y_1)\right\}\right]\\
\times \exp\left\{f_2'(y_2)+P_3(y_1,y_2)+P_2(y_1)\right\}
\\+\left[\frac{\partial }{\partial y_1}\left\{f_1(y_1)+P_1(y_1,y_2)-P_3(y_1,y_2)\right\}\right]\\
\times \exp\left\{f_2'(y_2)+P_3(y_1,y_2)\right\}=0.
\end{multline*}
Since $\exp\left\{f_2'(y_2)+P_3(y_1,y_2)\right\}\neq 0$,
this becomes 
\begin{multline}\label{eq_cross_mult_diff_general2}
\left[\frac{\partial }{\partial y_1}\left\{f_1(y_1)+P_1(y_1,y_2)-P_3(y_1,y_2)-P_2(y_1)\right\}\right]\exp\left\{P_2(y_1)\right\}\\
+\left[\frac{\partial }{\partial y_1}\left\{f_1(y_1)+P_1(y_1,y_2)-P_3(y_1,y_2)\right\}\right]=0.\end{multline}
Focusing on the components that involve $y_2$, we see that 
\[
\left[\frac{\partial }{\partial y_1}\left\{P_1(y_1,y_2)-P_3(y_1,y_2)\right\}\right]\left[\exp\left\{P_2(y_1)\right\}+1\right]\]
does not depend on $y_2$. 
Since $(\exp(P_2(y_1))+1)>0$, we have 
\[\frac{\partial^2}{\partial y_1\partial y_2}\left\{P_1(y_1,y_2)-P_3(y_1,y_2)\right\}=0.\] Recall that 
\begin{equation}\label{eq_P3P1diff}
P_1(y_1,y_2)-P_3(y_1,y_2)=\alpha_2(y_1)+y_2\beta_2(y_1)-y_2^2k_2/2-\alpha_1'(y_2)-y_1\beta_1'(y_2)+y_1^2k_1'/2.
\end{equation}
So $0=\frac{\partial^2}{\partial y_1\partial y_2}\{P_1(y_1,y_2)-P_3(y_1,y_2)\}=\frac{\d \beta_2(y_1)}{\d y_1}-\frac{\d \beta_1'(y_2)}{\d y_2}$ implies that $\beta_2$ and $\beta_1'$ are both linear with the same coefficient on the linear term. Now that $\beta_2$ has the form $\beta_2(y_1)=c_{\beta_2,-1}+c_{\beta_2,0}\mathds{1}_{y_1}+c_{\beta_2,1}y_1$, write $\beta_{2;-1,0}\equiv c_{\beta_2,-1}+c_{\beta_2,0}=\beta_2(0)+c_{\beta_2,0}$ as a shorthand notation for $\beta_2$ with indicator set to $1$ while $y_1$ set to $0$. Similarly define $\beta'_{1;-1,0}\equiv c_{\beta_1',-1}+c_{\beta_1',0}=\beta_1'(0)+c_{\beta_1',0}$. Then for $y_1,y_2\neq 0$ since $c_{\beta_2,1}=c_{\beta_1',1}$, we necessarily have 
\begin{align*}
y_2\beta_2(y_1)-y_1\beta_1'(y_2)&=y_2(c_{\beta_2,-1}+c_{\beta_2,0}+c_{\beta_2,1}y_1)-y_1(c_{\beta_1',-1}+c_{\beta_1',0}+c_{\beta_1',1}y_2) \\
&= y_2\beta_{2;-1,0}-y_1\beta'_{1;-1,0},
\end{align*}
and so by (\ref{eq_P3P1diff})
\begin{align*}
&\,P_1(y_1,y_2)-P_3(y_1,y_2)\\
=&\,\left(\alpha_2(y_1)-y_1\beta'_{1;-1,0}+y_1^2k_1'/2\right)-\left(\alpha_1'(y_2)-y_2\beta_{2;-1,0}+y_2^2k_2/2\right)\\
\equiv&\,P_{1,3}(y_1)-(\text{function in }y_2\text{ only}).
\end{align*} 
Plugging this into (\ref{eq_cross_mult_diff_general2}), we get
\[\left[\frac{\d }{\d y_1}\left\{f_1(y_1)+P_{1,3}(y_1)-P_2(y_1)\right\}\right]\exp\left\{P_2(y_1)\right\}
+\left[\frac{\d }{\d y_1}\left\{f_1(y_1)+P_{1,3}(y_1)\right\}\right]\]
equals 0, or equivalently
\[\left[\frac{\d }{\d y_1}\left\{f_1(y_1)+P_{1,3}(y_1)\right\}\right]\left[\exp\{P_2(y_1)\}+1\right]=\left\{\frac{\d}{\d y_1}P_2(y_1)\right\}\exp\left\{P_2(y_1)\right\}.\]
Then
\begin{align*}
f_1(y_1)&=\int \frac{\exp\{P_2(y_1)\}\{\d P_2(y_1)/\d y_1\}}{\exp\{P_2(y_1)\}+1}\d y_1-P_{1,3}(y_1)\\
&=\log\left[1+\exp\{P_2(y_1)\}\right]-P_{1,3}(y_1)+\mathrm{const}.
\end{align*}
So for $y_1\neq 0$,
\begin{align}
\exp(f_1(y_1))
\propto&\,\frac{1+\exp\{P_2(y_1)\}}{\exp\{P_{1,3}(y_1)\}}\nonumber\\
=&\,\frac{1+\sqrt{2\pi/k_2}\exp\{\alpha_2(y_1)+\beta_2(y_1)^2/(2k_2)\}}{\exp\{\alpha_2(y_1)-\beta'_{1;-1,0}y_1+y_1^2k_1'/2\}}\nonumber\\
=&\,\exp\{-\alpha_2(y_1)+y_1\beta'_{1;-1,0}-y_1^2k_1'/2\}\nonumber\\
&\pushright{+\sqrt{2\pi/k_2}\exp\{y_1\beta'_{1;-1,0}+\beta_2(y_1)^2/(2k_2)-y_1^2k_1'/2\}.}\label{f1_form1}
\end{align}

Now condition on the event $\mathds{1}_{y_1}=1$ and $\mathds{1}_{y_2}=0$. Then (\ref{eq_joint_dist_equal_general}) becomes
\[\frac{\exp\{f_1(y_1)\}}{\sqrt{2\pi/k_2}\exp\{\alpha_2(y_1)+\beta_2(y_1)^2/(2k_2)\}+1}\mathds{1}_{y_1}
\propto\exp\{y_1\beta'_1(0)-y_1^2k_1'/2\}\mathds{1}_{y_1},\]
which implies that  for $y_1\neq 0$,
\begin{multline}\label{f1_form2}
\exp\left\{f_1(y_1)\right\}\propto \exp\{y_1\beta_1'(0)-y_1^2k_1'/2\}\\
+\sqrt{2\pi/k_2}\exp\left\{y_1\beta_1'(0)-y_1^2k_1'/2+\alpha_2(y_1)+\beta_2(y_1)^2/(2k_2)\right\}.
\end{multline}
Applying Lemma \ref{lem_match} to (\ref{f1_form1}) and (\ref{f1_form2}), by matching the terms we have (conditional on $y_1\neq 0$) either
\begin{align}
-\alpha_2(y_1)+y_1\beta'_{1;-1,0}&=y_1\beta_1'(0)+\mathrm{const};\quad\mathrm{or}\label{eq_matched_case1}\\
-\alpha_2(y_1)+y_1\beta'_{1;-1,0}&=y_1\beta_1'(0)+\alpha_2(y_1)+\beta_2(y_1)^2/(2k_2)+\mathrm{const}\quad\mathrm{and}\nonumber\\
y_1\beta'_{1;-1,0}+\beta_2(y_1)^2/(2k_2)&= y_1\beta_1'(0)+\mathrm{const}.\label{eq_matched_case2}
\end{align}
Conditional on $y_1\neq 0$, in the first case (\ref{eq_matched_case1}), $\alpha_2(y_1)=y_1c_{\beta_1',0}+\mathrm{const}$; in the second case (\ref{eq_matched_case2}), $\alpha_2(y_1)+\beta_2(y_1)^2/(2k_2)=\mathrm{const}$ and $\beta_2(y_1)^2/(2k_2)=-y_1c_{\beta_1',0}+\mathrm{const}$, which implies $\beta_2(y_1)=\mathrm{const}$ and $\alpha_2(y_1)=\mathrm{const}$ for $y_1\neq 0$, and $c_{\beta_1',0}=0$, which in turn implies (\ref{eq_matched_case1}). Thus, in either case, $\alpha_2(y_1)=c_{\alpha_2,0}\mathds{1}_{y_1}+y_1c_{\beta_1',0}+\mathrm{const}$, i.e.~$\alpha_2$ is linear (or constant) in $y_1\neq 0$ with coefficient on $y_1$ equal to $c_{\beta_1',0}$. By (\ref{eq_matched_case1}) for $y_1\neq 0$,
\begin{multline}\label{f1_form3}
\exp\{f_1(y_1)\}\propto\exp\{y_1\beta_1'(0)-y_1^2k_1'/2\}\\
+\sqrt{2\pi/k_2}\exp\{y_1\beta'_{1;-1,0}+\beta_2(y_1)^2/(2k_2)-y_1^2k_1'/2\},
\end{multline}
clearly a single univariate Gaussian or a mixture of two univariate Gaussian distributions (since $\beta_2$ is at most linear in $y_1$). Similarly, we must have $\alpha_1'(y_2)=y_2\beta_{2;-1,0}-y_2\beta_2(0)+\mathrm{const}=y_2c_{\beta_2,0}+\mathrm{const}$ for $y_2\neq 0$, and for $y_2\neq 0$
\begin{multline}\label{f2_form3}
\exp\{f_2'(y_2)\}\propto\exp\{y_2\beta_2(0)-y_2^2k_2/2\}\\
+\sqrt{2\pi/k_1'}\exp\{y_2\beta_{2;-1,0}+\beta_1'(y_2)^2/(2k_1')-y_2^2k_2/2\}.
\end{multline}

Now suppose by contradiction that $\exp\left\{f_1(y_1)\right\}$ given $y_1\neq 0$ has only one Gaussian component, instead of being a sum of two Gaussian densities. Then by (\ref{f1_form3}), $\beta_1'(0)=\beta'_{1;-1,0}$ and $\beta_2(y_1)$ is a constant given $\mathds{1}_{y_1}$, i.e.~$\beta_2(y_1)=c_{\beta_2,-1}+c_{\beta_2,0}=\beta_{2;-1,0}$ for $y_1\neq 0$. Plugging this into the left-hand side of (\ref{eq_joint_dist_equal_general}) and integrating w.r.t.~$\lambda(y_1)$, the continuous part ($y_2\neq 0$) of the marginal distribution of $y_2$ given $\boldsymbol{Y}_\mathcal{P}\equiv\boldsymbol{y}_\mathcal{P} $ is 
\begin{multline*}
\exp\{f_2'(y_2)\}\propto\frac{\exp\{f_1(0)+\alpha_2(0)+y_2\beta_2(0)-y_2^2k_2/2\}}{\sqrt{2\pi/k_2}\exp\{\alpha_2(0)+\beta_2(0)^2/(2k_2)\}+1}\\
+\exp\{y_2\beta_{2;-1,0}-y_2^2k_2/2\}\int_{\mathbb{R}}\frac{\exp\{\delta_{1}+f_1(y_1)+\alpha_2(y_1)\}}{\sqrt{2\pi/k_2}\exp\{\alpha_2(y_1)+\beta_2(y_1)^2/(2k_2)\}+1}\d y_1,
\end{multline*}
which is a mixture between $\mathcal{N}(\beta_2(0)/k_2,1/k_2)$ and $\mathcal{N}(\beta_{2;-1,0}/k_2,1/k_2)$, i.e.~the variance in both components are equal. Note that the integral in the second term is a Lebesgue integral. This together with (\ref{f2_form3}) implies that $\beta_1'(y_2)$ cannot depend on the value of $y_2$ given $y_2\neq 0$, i.e.~$\beta_1'(y_2)=c_{\beta_1',-1}+c_{\beta_1',0}=\beta'_{1;-1,0}$. Since we already know that $\beta_1'(0)=\beta'_{1;-1,0}$ by discussion above, this implies that $\beta_1'$ is an absolute constant in $y_2$ and $\mathds{1}_{y_2}$, and also that $\alpha_2$ may depend on $y_1$ only through $\mathds{1}_{y_1}$, a contradiction to the assumption of the theorem.

Thus, (\ref{f1_form3}) and (\ref{f2_form3}) will both have to be mixtures of precisely two Gaussians, and so by definition the joint distribution $p(\boldsymbol{Y})$ of $\boldsymbol{Y}$ must be of 2-Gaussian type with respect to $\mathcal{G}$ and $\mathcal{G}'$.
\end{proof}

\begin{proof}[Proof of Corollary \ref{id_binary}]
When $|\mathcal{V}|=2$, in Proposition \ref{prop_peters} we always have $\mathcal{P}=\varnothing$ and $V_1$ does not have a parent in $\mathcal{G}$, so $P(Y_{V_1}=y|Y_{V_1}\neq 0)$ by definition is  just a Gaussian, not a mixture two Gaussians, and hence $p(\boldsymbol{Y})$ cannot be of 2-Gaussian type with respect to any pairs of distinct Markov equivalent graphs.

Now consider $|\mathcal{V}|=3$, and assume the two vertices with reversible edges in Proposition \ref{prop_peters} are $V_1$ and $V_2$, and that $V_1\to V_2$ in $\mathcal{G}$ and $V_1\leftarrow V_2$ in $\mathcal{G}'$. If neither $V_1$ or $V_2$ has $V_3$ as its parent in both graphs, then we can marginalize $V_3$ out and it reduces to the 2-d case. Suppose otherwise. Then we must have (1) $V_1\to V_2\leftarrow V_3$ in $\mathcal{G}$, or (2) $V_2\to V_1\leftarrow V_3$ in $\mathcal{G}'$, or (3) an additional edge between $V_1$ and $V_3$ added to (1), or (4) an additional edge between $V_2$ and $V_3$ added to (2). 

For (1) and (2) both graphs are the only graph in their Markov equivalence class; for (3) the reversible edge becomes $V_1$---$V_3$ violating the assumption (and in fact one can marginalize out the common child $V_2$ and get back to the 2-d case), and similarly for (4). Thus, we have again ruled out the possibility of any pair of distinct Markov equivalent graphs with respect to which $p(\boldsymbol{Y})$ can be of 2-Gaussian type.
\end{proof}

\begin{remark}\label{alpha_relationships}
In the proof of Theorem \ref{id_full}, we proved that whenever $p(\boldsymbol{Y})$ factorizes with respect to two distinct graphs $\mathcal{G}$ and $\mathcal{G}'$ (whenever identifiability does not hold), everything up to (\ref{f2_form3}) in the proof must hold. Specifically, conditioning on almost every $\boldsymbol{y}_{\mathcal{P}}$, $\alpha_2$ and $\beta_2$ in $\mathcal{G}$ as well as $\alpha_1'$ and $\beta_1'$ in $\mathcal{G}'$ can be at most linear in $y_1$ and $y_2$, respectively, namely
\begin{align*}
\beta_1'(y_2)&=c_{\beta_1',-1}+c_{\beta_1',0}\mathds{1}_{y_2}+c_{\beta_1',1}y_2,\quad \beta_2(y_1)=c_{\beta_2,-1}+c_{\beta_2,0}\mathds{1}_{y_1}+c_{\beta_2,1}y_1,\\
\alpha_1'(y_2)&=c_{\alpha'_1,-1}+c_{\alpha'_1,0}\mathds{1}_{y_2}+c_{\alpha'_1,1}y_2,\quad\alpha_2(y_1)=c_{\alpha_2,-1}+c_{\alpha_2,0}\mathds{1}_{y_1}+c_{\alpha_2,1}y_1,
\end{align*}
with coefficients depending on $\boldsymbol{y}_{\mathcal{P}}$ where
\begin{align}\label{eq_coef_relationships}
c_{\alpha'_1,1}=c_{\beta_2,0},\quad c_{\alpha_2,1}=c_{\beta'_1,0},\quad c_{\beta_1',1}=c_{\beta_2,1}.
\end{align}
It is noted that, although not used in deriving our conclusion involving $2$-Gaussian type distributions, we in addition also have the following results.
\begin{equation*}
c_{\alpha_1',-1}=c_{\alpha_2,-1},\quad c_{\alpha_1',0}=c_{\alpha_2,0},\quad c_{\alpha_1',-1}+c_{\alpha_1',0}=c_{\alpha_2,-1}+c_{\alpha_2,0}=0.
\end{equation*}
These might shed some light on how to show that distributions of 2-Gaussian type do not exist for a general $m\geq 4$.
\end{remark}

\begin{proof}[Proof of Remark \ref{alpha_relationships}]

By (\ref{eq_joint_dist_equal_general}), (\ref{f1_form3}), (\ref{f2_form3}), the joint distribution of $Y_1$ and $Y_2$ conditional on $\boldsymbol{Y}_{\mathcal{P}}$ has two characterizations (up to normalizing constants)
\begin{align}
&\,\frac{\exp\{\mathds{1}_{y_1}\delta_{1}+y_1\beta_1'(0)-y_1^2k_1'/2+\mathds{1}_{y_2}\alpha_2(y_1)+y_2\beta_2(y_1)-y_2^2k_2/2\}}{\sqrt{2\pi/k_2}\exp\left\{\alpha_2(y_1)+\beta_2(y_1)^2/(2k_2)\right\}+1}\nonumber\\
+&\,\frac{\sqrt{2\pi/k_2}\exp\{\mathds{1}_{y_1}\delta_{1}+y_1\beta'_{1;-1,0}+\beta_2(y_1)^2/(2k_2)-y_1^2k_1'/2+\mathds{1}_{y_2}\alpha_2(y_1)+y_2\beta_2(y_1)-y_2^2k_2/2\}}{\sqrt{2\pi/k_2}\exp\left\{\alpha_2(y_1)+\beta_2(y_1)^2/(2k_2)\right\}+1}\nonumber\\
\propto&\,\frac{\exp\{\mathds{1}_{y_2}\delta_2'+y_2\beta_2(0)-y_2^2k_2/2+\mathds{1}_{y_1}\alpha_1'(y_2)+y_1\beta_1'(y_2)-y_1^2k_1'/2\}}{\sqrt{2\pi/k_1'}\exp\left\{\alpha_1'(y_2)+\beta_1'(y_2)^2/(2k_1')\right\}+1}\nonumber\\
+&\,\frac{\sqrt{2\pi/k_1'}\exp\{\mathds{1}_{y_2}\delta_2'+y_2\beta_{2;-1,0}+\beta_1'(y_2)^2/(2k_1')-y_2^2k_2/2+\mathds{1}_{y_1}\alpha_1'(y_2)+y_1\beta_1'(y_2)-y_1^2k_1'/2\}}{\sqrt{2\pi/k_1'}\exp\left\{\alpha_1'(y_2)+\beta_1'(y_2)^2/(2k_1')\right\}+1}.\label{eq_joint_full}
\end{align}
Divide both sides by $\exp(y_1\beta_1'(0)+y_2\beta_2(0)-y_1^2k_1'/2-y_2^2k_2/2)$ and expanding $\beta_1'(y_2)$ and $\beta_2(y_1)$, this becomes
\begin{align*}
&\,\frac{\exp\{\mathds{1}_{y_1}\delta_{1}+\mathds{1}_{y_2}\alpha_2(y_1)+y_2c_{\beta_2,0}\mathds{1}_{y_1}+y_1y_2c_{\beta_2,1}\}}{\sqrt{2\pi/k_2}\exp\left\{\alpha_2(y_1)+\beta_2(y_1)^2/(2k_2)\right\}+1}\\
+&\,\frac{\sqrt{2\pi/k_2}\exp\{\mathds{1}_{y_1}\delta_{1}+y_1c_{\beta'_1,0}+\beta_2(y_1)^2/(2k_2)+\mathds{1}_{y_2}\alpha_2(y_1)+y_2c_{\beta_2,0}\mathds{1}_{y_1}+y_1y_2c_{\beta_2,1}\}}{\sqrt{2\pi/k_2}\exp\left\{\alpha_2(y_1)+\beta_2(y_1)^2/(2k_2)\right\}+1}\\
\\
\propto&\,\frac{\exp\{\mathds{1}_{y_2}\delta_2'+\mathds{1}_{y_1}\alpha_1'(y_2)+y_1c_{\beta_1',0}\mathds{1}_{y_2}+y_1y_2c_{\beta_1',1}\}}{\sqrt{2\pi/k_1'}\exp\left\{\alpha_1'(y_2)+\beta_1'(y_2)^2/(2k_1')\right\}+1}\\
+&\,\frac{\sqrt{2\pi/k_1'}\exp\{\mathds{1}_{y_2}\delta_2'+y_2c_{\beta_{2},0}+\beta_1'(y_2)^2/(2k_1')+\mathds{1}_{y_1}\alpha_1'(y_2)+y_1c_{\beta_1',0}\mathds{1}_{y_2}+y_1y_2c_{\beta_1',1}\}}{\sqrt{2\pi/k_1'}\exp\left\{\alpha_1'(y_2)+\beta_1'(y_2)^2/(2k_1')\right\}+1}.
\end{align*}

Now expanding $\alpha_1'(y_2)$  and $\alpha_2(y_1)$ and using the relationships in (\ref{eq_coef_relationships}), we divide both sides by $\exp(y_{1}c_{\alpha_2,1}\mathds{1}_{y_2}+y_2c_{\beta_2,0}\mathds{1}_{y_1}+y_1y_2c_{\beta_2,1})=\exp(y_{1}c_{\beta_1',0}\mathds{1}_{y_2}+y_2c_{\alpha_1',1}\mathds{1}_{y_1}+y_1y_2c_{\beta_2,1})$ and get
\begin{align}
&\,\frac{\exp\{\mathds{1}_{y_1}\delta_{1}+\mathds{1}_{y_2}(c_{\alpha_2,-1}+c_{\alpha_2,0}\mathds{1}_{y_1})\}}{\sqrt{2\pi/k_2}\exp\left\{\alpha_2(y_1)+\beta_2(y_1)^2/(2k_2)\right\}+1}\nonumber\\
+&\,\frac{\sqrt{2\pi/k_2}\exp\left\{\mathds{1}_{y_1}\delta_{1}+y_1c_{\beta'_1,0}+\beta_2(y_1)^2/(2k_2)+\mathds{1}_{y_2}(c_{\alpha_2,-1}+c_{\alpha_2,0}\mathds{1}_{y_1})\right\}}{\sqrt{2\pi/k_2}\exp\left\{\alpha_2(y_1)+\beta_2(y_1)^2/(2k_2)\right\}+1}\nonumber
\\
=&\,C_0\frac{\exp\{\mathds{1}_{y_2}\delta_2'+\mathds{1}_{y_1}(c_{\alpha_1',-1}+c_{\alpha_1',0}\mathds{1}_{y_2})\}}{\sqrt{2\pi/k_1'}\exp\left\{\alpha_1'(y_2)+\beta_1'(y_2)^2/(2k_1')\right\}+1}\nonumber\\
+&\,C_0\frac{\sqrt{2\pi/k_1'}\exp\{\mathds{1}_{y_2}\delta_2'+y_2c_{\beta_{2},0}+\beta_1'(y_2)^2/(2k_1')+\mathds{1}_{y_1}(c_{\alpha_1',-1}+c_{\alpha_1',0}\mathds{1}_{y_2})\}}{\sqrt{2\pi/k_1'}\exp\left\{\alpha_1'(y_2)+\beta_1'(y_2)^2/(2k_1')\right\}+1} \label{eq_full_simplified}
\end{align}
for some $C_0$. Setting $\mathds{1}_{y_1}=\mathds{1}_{y_2}=0$ (\ref{eq_full_simplified}) becomes
\begin{multline}\label{eq_full_simplified_00}
\frac{1+\sqrt{2\pi/k_2}\exp\{c_{\beta_2,-1}^2/(2k_2)\}}{\sqrt{2\pi/k_2}\exp\{c_{\alpha_2,-1}+c_{\beta_2,-1}^2/(2k_2)\}+1}=C_0\frac{1+\sqrt{2\pi/k_1'}\exp\{c_{\beta_1'1,-1}^2/(2k_1')\}}{\sqrt{2\pi/k_1'}\exp\{c_{\alpha_1',-1}+c_{\beta_1',-1}^2/(2k_1')\}+1},
\end{multline}
and with $\mathds{1}_{y_1}\neq 0$, $\mathds{1}_{y_2}=0$ (\ref{eq_full_simplified}) becomes
\begin{multline}\label{eq_full_simplified_10}
\exp(\delta_1)\frac{1+\sqrt{2\pi/k_2}\exp\{y_1c_{\beta_1',0}+\beta_2(y_1)^2/(2k_2)\}}{\sqrt{2\pi/k_2}\exp\{c_{\alpha_2,-1}+c_{\alpha_2,0}+c_{\alpha_2,1}y_1+\beta_2(y_1)^2/(2k_2)\}+1}\\
=C_0\exp(c_{\alpha_1',-1})\frac{1+\sqrt{2\pi/k_1'}\exp\{c_{\beta_1'1,-1}^2/(2k_1')\}}{\sqrt{2\pi/k_1'}\exp\{c_{\alpha_1',-1}+c_{\beta_1',-1}^2/(2k_1')\}+1}.
\end{multline}
Since the right-hand side of (\ref{eq_full_simplified_10}) is a constant, by matching the numerator and the denominator of the left-hand side using Lemma \ref{lem_match}, we must have either (i) $y_1c_{\beta_1',0}+\beta_2(y_1)^2/(2k_2)=c_{\alpha_2,-1}+c_{\alpha_2,0}+c_{\alpha_2,1}y_1+\beta_2(y_1)^2/(2k_2)$, or (ii) $y_1c_{\beta_1',0}+\beta_2(y_1)^2/(2k_2)=\mathrm{const}$ for $y_1\neq 0$. But (ii) implies that $c_{\beta_2,1}=c_{\beta_1',0}=0$, which by $c_{\beta_1',1}=c_{\beta_2,1}$ implies that $\beta_1'$ is an absolute constant in $y_2\in\mathbb{R}$, a violation to the assumption. Thus (i) holds, and by $c_{\beta_1',0}=c_{\alpha_2,1}$ this implies that 
\begin{equation}\label{eq_alphas_relationship}
\alpha_{2;-1,0}\equiv c_{\alpha_2,-1}+c_{\alpha_2,0}=0,\quad\text{and by symmetry}\quad \alpha_{1;-1,0}'\equiv c_{\alpha_1',-1}+c_{\alpha_1',0}=0.
\end{equation}
Thus the left-hand side of (\ref{eq_full_simplified_10}) is just $\exp(\delta_1)$. Note that the right-hand side of (\ref{eq_full_simplified_10}) is $\exp(c_{\alpha_1',-1})$ times the right-hand side of (\ref{eq_full_simplified_00}). So by equating the left-hand side of (\ref{eq_full_simplified_10}) with $\exp(c_{\alpha_1',-1})$ times the left-hand side of (\ref{eq_full_simplified_00}) we have
\begin{equation}\label{eq_delta1}\exp(\delta_1)=\exp\left(c_{\alpha_1,-1}'\right)\frac{1+\sqrt{2\pi/k_2}\exp\left\{c_{\beta_2,-1}^2/(2k_2)\right\}}{\sqrt{2\pi/k_2}\exp\left\{c_{\alpha_2,-1}+c_{\beta_2,-1}^2/(2k_2)\right\}+1)}
\end{equation}
and similarly 
\begin{equation}\label{eq_delta2}
\exp(\delta_2')=\exp\left(c_{\alpha_2,-1}\right)\frac{1+\sqrt{2\pi/k_1'}\exp\left\{c_{\beta_1'1,-1}^2/(2k_1')\right\}}{\sqrt{2\pi/k_1'}\exp\left\{c_{\alpha_1',-1}+c_{\beta_1',-1}^2/(2k_1')\right\}+1}.
\end{equation}
Now by (\ref{eq_alphas_relationship}), with $\mathds{1}_{y_1}=\mathds{1}_{y_2}=1$, (\ref{eq_full_simplified}) simplifies to
$\exp(\delta_1)=C_0\cdot\exp(\delta_2')$. Thus by (\ref{eq_full_simplified_00}), (\ref{eq_delta1}) and (\ref{eq_delta2}), one get
\[C_0=\frac{\exp(\delta_1)}{\exp(\delta_2')}=\frac{\exp(c_{\alpha_1,-1}')\frac{1+\sqrt{2\pi/k_2}\exp\left\{c_{\beta_2,-1}^2/(2k_2)\right\}}{\sqrt{2\pi/k_2}\exp\left\{c_{\alpha_2,-1}+c_{\beta_2,-1}^2/(2k_2))+1\right\}}}{\exp(c_{\alpha_2,-1})\frac{1+\sqrt{2\pi/k_1'}\exp\left\{c_{\beta_1'1,-1}^2/(2k_1')\right\}}{\sqrt{2\pi/k_1'}\exp\left\{c_{\alpha_1',-1}+c_{\beta_1',-1}^2/(2k_1')\right\}+1}}=\frac{\exp(c_{\alpha_1,-1}')}{\exp(c_{\alpha_2,-1})}C_0\]
and thus $c_{\alpha_1,-1}'=c_{\alpha_2,-1}$. Combining with (\ref{eq_alphas_relationship}), we get
\begin{equation}\label{eq_alphas_relationship2}
c_{\alpha_1',-1}=c_{\alpha_2,-1},\quad c_{\alpha_1',0}=c_{\alpha_2,0},\quad c_{\alpha_1',-1}+c_{\alpha_1',0}=c_{\alpha_2,-1}+c_{\alpha_2,0}=0.
\end{equation}
Note that this result holds as long as we assume identifiability does not hold.

\end{proof}

\end{document}